\documentclass[journal]{IEEEtran}
\usepackage{blindtext}
\usepackage{tikz}
\usepackage{threeparttable}
\usepackage[ruled,linesnumbered]{algorithm2e}
\usepackage{diagbox}
\usepackage{multirow}
\usepackage{algpseudocode}
\usepackage{color}
\usepackage{subfigure}
\usepackage{amssymb}
\usepackage{cite}
\usepackage{caption}
\usepackage{xcolor,soul,framed} 

\usepackage[cmex10]{amsmath}
\usepackage{array}
\usepackage{mdwmath}
\usepackage{mdwtab}
\usepackage{eqparbox}
\usepackage{url}

\newtheorem{theorem}{\textbf{Theorem}}
\newtheorem{definition}{\textbf{Definition}}
\newtheorem{lemma}{\textbf{Lemma}}

\newtheorem{assumption}{\textbf{Assumption}}

\hyphenation{op-tical net-works semi-conduc-tor}

\begin{document}
\bstctlcite{IEEEexample:BSTcontrol}
    \title{FedDef: Defense Against Gradient Leakage in Federated Learning-based Network Intrusion Detection Systems}
    \author{Jiahui~Chen,
    Yi~Zhao,~\IEEEmembership{Member,~IEEE,}
    Qi~Li,~\IEEEmembership{Senior Member,~IEEE,}
    Xuewei~Feng,\\
    and Ke~Xu,~\IEEEmembership{Senior Member,~IEEE}
    \thanks{Manuscript received 31 October 2022; revised 11 June 2023; accepted 3 July 2023. Date of publication x xxx xxxx; date of current version 14 July 2023. This work was supported in part by the National Science Foundation for Distinguished Young Scholars of China with No. 61825204, National Natural Science Foundation of China with No. 61932016, No. 62132011 and No. 62202258, Beijing Outstanding Young Scientist Program with No. BJJWZYJH01201910003011, China Postdoctoral Science Foundation with No. 2021M701894, China National Postdoctoral Program for Innovative Talents, and Shuimu Tsinghua Scholar Program. The associate editor coordinating the review of this manuscript and approving it for publication was Dr. George Theodorakopoulos. \textit{(Corresponding authors: Yi Zhao; Ke Xu.)}}
    \thanks{Jiahui Chen, Yi Zhao, Xuewei Feng, and  Ke Xu are with the Department of Computer Science and Technology, Tsinghua University, Beijing 100084, China (e-mail: chenjiah22@mails.tsinghua.edu.cn;
    zhao\_yi@tsinghua.edu.cn;
    \text{fengxw06@126.com}; xuke@tsinghua.edu.cn).}
    \thanks{Qi Li is with the Institute for Network Sciences and Cyberspace, Tsinghua University, Beijing 100084, China (e-mail: qli01@tsinghua.edu.cn).}
    \thanks{Digital Object Identifier xxx/TIFS.xx.xx}
}

\IEEEpubid{
\begin{minipage}{\textwidth}\ \\[5pt]
\centering
1556--6021~\copyright~2023 IEEE. Personal use is permitted, but republication/redistribution requires IEEE permission.\\
See https://www.ieee.org/publications/rights/index.html for more information.
\end{minipage}
}


\markboth{IEEE TRANSACTIONS ON INFORMATION FORENSICS AND SECURITY}
{CHEN \MakeLowercase{et al.}: FedDef: Defense Against Gradient Leakage in Federated Learning-based Network Intrusion Detection Systems}

\maketitle

\begin{abstract}
Deep learning (DL) methods have been widely applied to anomaly-based network intrusion detection system (NIDS) to detect malicious traffic.
To expand the usage scenarios of DL-based methods, federated learning (FL) allows multiple users to train a global model on the basis of respecting individual data privacy.
However, it has not yet been systematically evaluated how robust FL-based NIDSs are against existing privacy attacks under existing defenses.
To address this issue, we propose two privacy evaluation metrics designed for FL-based NIDSs, including (1) privacy score that evaluates the similarity between the original and recovered traffic features using reconstruction attacks, and (2) evasion rate against NIDSs using adversarial attack with the recovered traffic.
We conduct experiments to illustrate that existing defenses provide little protection and the corresponding adversarial traffic can even evade the SOTA NIDS Kitsune.
To defend against such attacks and build a more robust FL-based NIDS, we further propose FedDef, a novel optimization-based input perturbation defense strategy with theoretical guarantee.
It achieves both high utility by minimizing the gradient distance and strong privacy protection by maximizing the input distance.
We experimentally evaluate four existing defenses on four datasets and show that our defense outperforms all the baselines in terms of privacy protection with up to 7 times higher privacy score, while maintaining model accuracy loss within 3\% under optimal parameter combination.
\end{abstract}

\begin{IEEEkeywords}
Federated Learning, Intrusion Detection, Gradient Privacy Leakage, Defense Strategy
\end{IEEEkeywords}

\IEEEpeerreviewmaketitle

\maketitle
\section{Introduction}
\IEEEPARstart{D}{eep} learning-based  network intrusion detection system (NIDS) has been widely used to detect malicious traffic and intrusion attacks, they are expected to raise alarms whenever the incoming traffic carries malicious properties (e.g., scan user's ports) or induces attacks (e.g., DDoS).
Recently, researchers have been trying to adopt federated learning (FL), where multiple users collaboratively exchange information and train a global model with publicly shared gradients, to derive more accurate detection of cyberattacks without privacy leakage.
For example, \cite{zhao2022collaboration} proposes the collaboration-enabled intelligent Internet architecture for malicious traffic detection.
\cite{wagner2021united} proposes a decentralized federated architecture that collaborates with eleven Internet Exchange Points (IXPs) operating in three different regions.
It allows participant mitigation platforms to exchange information about ongoing amplification DDoS attacks.
Experimental results illustrate that such collaboration can detect and mitigate up to 90\% more DDoS attacks locally, which proves great success for faster and more effective detection and neutralization of attack traffic among collaborative networks.
\cite{nguyen2019diot} also proposes an autonomous FL-based anomaly detection system to locate compromised Internet of Things (IoT) devices with high detection rate (95.6\%) and fast inference ($257ms$), while \cite{li2020deepfed} further proposes a FL-based NIDS for industrial cyber–physical systems in real world.
These new researches indicate a promising trend that combination with FL can improve the overall detection performance of NIDS and is receiving extensive attention.

\IEEEpubidadjcol

However, sharing model updates or gradients also makes FL vulnerable to inference attack in computer vision (CV) domain, e.g., property inference attack that infers sensitive properties of training data \cite{melis2019exploiting} and model inversion attack that reconstructs image training data \cite{fredrikson2015model,geiping2020inverting,wang2019beyond,ferrag2020deep}.
Accordingly, there have also been some defense strategies such as differential privacy for deep learning \cite{abadi2016deep} and Soteria \cite{sun2021soteria} that perturbs the data representation features for images.
However, unlike CV domain where slight noises added to the images can induce totally different visual perception and thus have higher tolerance for privacy, the reconstruction of traffic data may be more intimidating, because adversaries can evade the target model via (1) black-box attack that trains Generative Adversarial Network (GAN) model with the reconstructed benign data to generate new malicious traffic from random noises, or (2) white-box attack that directly perturbs the reconstructed malicious data.
Unfortunately, few researches have systematically investigated to what extent current defenses combined with NIDS can protect user privacy, and whether there exists defense strategy that achieves both high utility and strong privacy guarantee to build a more robust FL-based NIDS.

To derive a more accurate evaluation of privacy for FL-based NIDSs, we propose two privacy metrics specifically designed for NIDS domain, i.e., privacy score and evasion rate.
For the first one, we leverage reconstruction attacks, i.e., inversion and extraction attack, to recover the original training data from model gradients.
Then we can calculate the similarity between raw and reconstructed data to evaluate privacy leakage, where we use different distance metrics for continuous (e.g., traffic duration) and discrete features (e.g., protocol type).
With enough reconstructed traffic, we can evaluate evasion rate by training GAN model or applying perturbation to generate adversarial traffic to attack other NIDSs, which presents the practical threats in real world.
With the above privacy metrics, we evaluate existing defenses combined with FL-based NIDS and demonstrate that all baselines fail to provide sufficient privacy protection and the adversarial traffic can even evade the SOTA NIDS Kitsune \cite{mirsky2018kitsune}, which urges for new effective defense strategy.

To bridge this gap and build a more robust FL-based NIDS defending against gradient leakage, we further propose a novel input perturbation-based defense approach with theoretical guarantee for model convergence and data privacy named FedDef, which optimizes an objective function to transform the original input such that: 1) the distance between the new input and the raw input is as far as possible to prevent privacy leakage and 2) the corresponding gradients are as similar as possible to maintain model performance.
Experimental results on four datasets illustrate that our defense can mitigate both reconstruction attacks and achieve at most 7 times higher privacy score compared to the second best defense, and the following adversarial attack fails to evade other NIDSs, which significantly outperforms other baselines.
Regarding model performance, the FL model can still converge under our defense with iid and non-iid data distribution, and that model accuracy can be guaranteed within at most 3\% loss with the optimal parameter combination.

\textbf{Contributions}.
We summarize our contributions as follows:

$\bullet$
We propose two privacy evaluation metrics designed for FL-based NIDS, including (1) privacy score that evaluates the similarity between raw traffic feature and the recovered feature using reconstruction attacks, and (2) evasion rate against NIDSs using black-box and white-box adversarial attacks.

$\bullet$
To the best of our knowledge, we are the first to propose FedDef, an optimization-based input perturbation defense scheme for FL-based NIDS to prevent privacy leakage while maintaining model performance.
To enhance our method, we also provide theoretical analysis for model convergence under non-iid data distribution and privacy guarantee for our defense.

$\bullet$
Experimental results on four datasets illustrate that our proposed FedDef outperforms existing defense approaches in terms of privacy protection. With FedDef, the privacy score is 1.5-7 times higher than the second best baseline during early training stage, and evasion rate on Kitsune model is always 0, while other baselines induce successful evasion more or less.
In the meantime, FedDef still maintains high model utility with up to 3\% accuracy loss compared to no-defense baseline under optimal parameter setting.

The rest of the paper is organized as follows:
In Section~\ref{preliminary}, we introduce some background and motivation.
Section~\ref{threat_model} provides our threat model and the corresponding privacy metrics with evaluation example.
In Section~\ref{method}, we introduce our defense design and detailed implementation with a representative FL framework.
We provide theoretical analysis on convergence and privacy guarantee for our defense in Section~\ref{theoretical_analysis}.
In Section~\ref{evaluation}, we evaluate our defense with four baselines on four datasets with respect to privacy preserving and model performance.
We discuss some future work in Section~\ref{future_work} and conclude this study in Section~\ref{conclusion}.

\section{BACKGROUND AND MOTIVATION}
\label{preliminary}

In this section, we introduce some background of FL-based NIDS, two SOTA reconstruction attacks and current defenses accordingly, and then we present the motivation of our study.

\subsection{FL-based NIDSs}
\label{federated_learning}

FL-based NIDS is a new promising topic that combines FL~\cite{konevcny2016federated} like FedAvg~\cite{mcmahan2017communication} with intrusion detection.
Local users first extract features from their private traffic data and then update the global model with the derived gradients, which are aggregated at the trusted server for later distribution.
For example, \cite{mothukuri2021federated} proposes a FL-based anomaly detection approach to proactively recognize intrusion in IoT networks using decentralized on-device data.
Some work also consider hierarchical and interpretable FL to further build a practical NIDS \cite{dong2022interpretable,wang2021towards}.
However, few have investigated the robustness of FL-based NIDSs against privacy attacks, which hinders the deployment of secure NIDSs.

\newcommand*\emptycirc[1][1ex]{\tikz\draw (0,0) circle (#1);} 
\newcommand*\halfcirc[1][1ex]{%
\begin{tikzpicture}
\draw[fill] (0,0)-- (90:#1) arc (90:270:#1) -- cycle ;
\draw (0,0) circle (#1);
\end{tikzpicture}}
\newcommand*\fullcirc[1][1ex]{\tikz\fill (0,0) circle (#1);} 

\begin{table*}[htbp]
\caption{A summary of existing defense approaches.}
\label{sum_defense}
\resizebox{\linewidth}{!}
{
\begin{tabular}{c|c|c|c|c|c|c|c}
\hline
Approach & Category              & Setting & Privacy & Utility & Theoretical Guarantee & Scalability & Feasibility \\ \hline
 MPC\cite{bohler2021secure}        & MPC                   &   \fullcirc      &   \fullcirc       &   \fullcirc       &  \fullcirc                      &    \halfcirc         &    \emptycirc         \\ 
 DP\cite{dwork2006calibrating,abadi2016deep,dong2022interpretable}        & Gradient Perturbation &   \fullcirc      &   \halfcirc      &  \halfcirc       &  \fullcirc                     &    \fullcirc         &    \fullcirc         \\ 
   GP\cite{zhu2019deep}      & Gradient Perturbation &    \halfcirc     &  \halfcirc       &  \fullcirc                     &  \emptycirc           &    \fullcirc   &\fullcirc      \\ 
  Lossless\cite{yang2022accuracy}       & Gradient Perturbation &   \halfcirc      &  \fullcirc       &   \fullcirc      &     \fullcirc                  &  \halfcirc           &   \fullcirc          \\ 
  Instahide\cite{huang2020instahide}       & Input Perturbation    &   \halfcirc      &  \halfcirc       &   \halfcirc      &        \emptycirc               &     \fullcirc        &   \halfcirc          \\ 
 Soteria\cite{sun2021soteria}        & Input Perturbation    &  \halfcirc       &   \emptycirc      &  \fullcirc       &    \fullcirc                   &   \halfcirc          &   \halfcirc          \\ 
  ATS \cite{gao2021privacy}       & Input Perturbation    &    \halfcirc     &   \halfcirc      &   \fullcirc      &      \emptycirc                 &    \emptycirc         &    \halfcirc         \\
  DCS \cite{wu2022defense}       & Input Perturbation    &    \fullcirc     &   \halfcirc      &   \fullcirc      &      \emptycirc                 &    \halfcirc         &    \halfcirc         \\
  FedDef (Ours)       & Input Perturbation    &  \fullcirc       &    \fullcirc     &     \fullcirc    &    \fullcirc                   &     \fullcirc        &    \fullcirc         \\ \hline
\end{tabular}
}
\begin{tablenotes}
\item \textbf{Notes} \fullcirc, \halfcirc, \emptycirc \ mean that the approach greatly, partly, barely considers setting metric or provides certain ability (other 5 metrics), respectively.
\end{tablenotes}
\end{table*}

\begin{figure}
\setlength{\abovecaptionskip}{0pt}
\setlength{\belowcaptionskip}{0pt}
\centering
\subfigure{
\begin{minipage}{0.6\linewidth}
\includegraphics[width=1\linewidth]{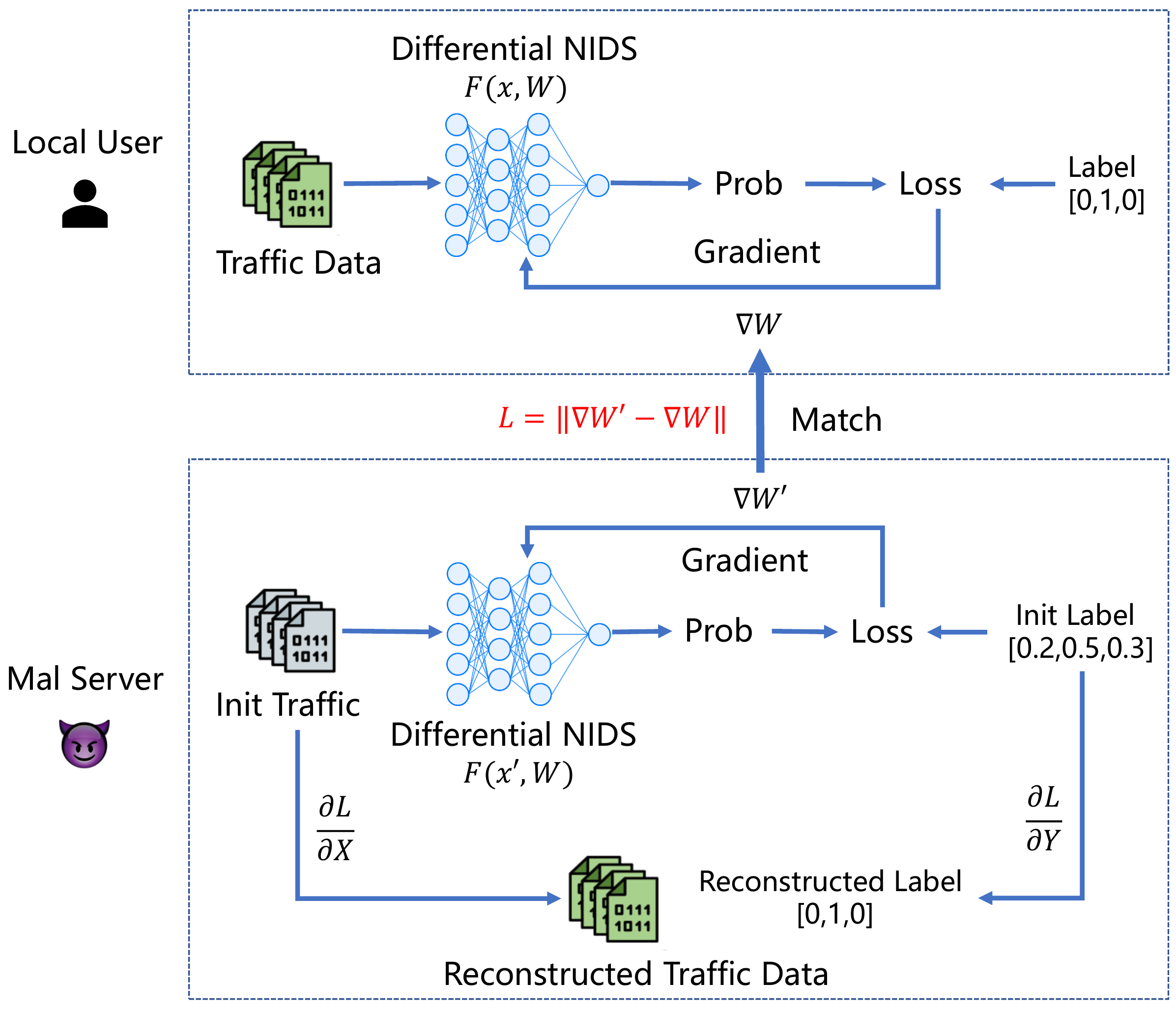}
\end{minipage}}
\caption{Overview of inversion reconstruction attack.}
\label{overview_attack}
\end{figure}

\subsection{Reconstruction Attack}
\label{reconstrcution_attack}

FL also faces privacy leakage issues, such as data reconstruction \cite{zhu2019deep}, membership inference \cite{carlini2022membership}, and attribute inference \cite{mehnaz2022your}.
We focus on the severest reconstruction attack that aims to recover training samples.
It can be categorized into two types, i.e., optimization-based inversion attack and accurate extraction attack.

\subsubsection{Inversion attack}
DLG (i.e., Deep Leakage from Gradients)~\cite{zhu2019deep} solves an optimization problem to obtain the raw features and labels, which can be formulated as:
\begin{equation}
\label{L2_optimization}
    argmin_{x^*,y^*} \ \ \Vert\nabla \theta(x^*,y^*)-\nabla \theta(x,y)\Vert_2
\end{equation}
where $\nabla \theta (x,y)=\frac{\partial \ell(\theta, x, y)}{\partial \theta}$, $\ell$ is the local overall loss function on differentiable deep learning model (e.g., DNN), $\theta$ is the overall model parameter, and $(x,y)$ is the original feature and label.
Fig.~\ref{overview_attack} illustrates the process of inversion attack in FL, where malicious server (or user in decentralized mode) tries to match the publicly shared gradient with the dummy gradient.
Specifically, they first randomly initialize a dummy input $x^*$ and $y^*$. Then, they iteratively optimize the dummy gradients $\nabla \theta(x^*,y^*)$ close as to original by Euclidean distance, which also makes the dummy data close to the real training data.

Follow-up works improve on these results by using different distance metrics such as cosine similarity \cite{geiping2020inverting}:
\begin{equation}
\label{cos_optimization}
    argmin_{x^*,y^*} \ \ 1-\frac{\langle \nabla \theta(x^*,y^*),\nabla \theta(x,y) \rangle }{\Vert\nabla \theta(x^*,y^*)\Vert\Vert\nabla \theta(x,y)\Vert}
\end{equation}
where $\langle \cdot, \cdot \rangle$ means inner product.
\cite{zhao2020idlg} also proposes an improved DLG (i.e., iDLG) by analytically and accurately extracting ground-truth label from the last fully-connected (FC) layer, and thus it performs better as objective function (\ref{L2_optimization}) only has to optimize $x^*$.

\subsubsection{Extraction attack}
To further improve the reconstruction accuracy, researches \cite{geiping2020inverting,boenisch2021curious} propose an accurate extraction attack which can almost perfectly reconstruct a single training sample without any costs.
This kind of attack assumes the model contains at least one layer, which consists of a weight and a bias.
Specifically, we assume the first layer contains both weight and bias (\cite{geiping2020inverting} gives proof for more generalized scenario), then the direct output of the first layer $y$ is computed as $W^Tx+b$, where $x$ is the raw input data and $(W,b)$ is the corresponding weight and bias pair with compatible dimensionality.
We further let $i$ be the row of the first layer such that $\frac{\partial \ell }{\partial b_i}\neq 0$, then by chain rule we can obtain $x$ as $x^T=(\frac{\partial \ell }{\partial b_i} )^{-1}\frac{\partial  \ell }{\partial W_i^T}$.
In this way, we can extract the exact data as long as there exists $\frac{\partial \ell }{\partial b_i}\neq 0$.
In Section~\ref{threat_model}, we present a specific solution of $x^*$ using PyTorch and give proof for its failure when batch size of input data $x$ is more than 1.

\subsection{Defenses}
\label{defenses}

There exist several defenses to mitigate such reconstruction attacks in FL and can be categorized into three types: secure multi-party computation (MPC), gradient perturbation, and input perturbation.
Existing representative defense approaches are summarized in TABLE~\ref{sum_defense} in terms of 6 metrics, where setting describes iid or non-iid training mode, utility evaluates the model performance, scalability denotes whether such defense can be applied in different domains, and feasibility denotes whether it requires extra setup or data to perform such defense.

\subsubsection{Secure multi-party computation}
MPC~\cite{danner2015fully, bohler2021secure} allows a group of parties to synchronously compute a function and obtain accurate representations of the final value while protecting their private data from privacy leakage.
For example, \cite{bonawitz2017practical} proposes to let users encrypt their local updates such that the central server can only recover the aggregation of the updates.
However, MPC requires special setup and can be costly to implement, and \cite{melis2019exploiting} shows that adversaries can still launch inference attacks against MPC for certain scenarios.

\subsubsection{Gradients perturbation}
These defenses modify the gradients before updating them to the server.
For example, \cite{zhu2019deep} proposes gradient pruning (GP) that sets certain percent of the derived gradients of small absolute magnitudes to zero, which serves as a gradient mask to hide data information.
Differential privacy (DP)~\cite{dwork2006calibrating}
provides theoretical privacy guarantee and bounds the change in output distribution caused by a small input noise like Gaussian noise \cite{abadi2016deep,dong2022interpretable}.
\cite{yang2022accuracy} also proposes to add and eliminate random noises to the global model from the server to prevent information leakage from malicious clients, yet it works only when the server is trusted.

\subsubsection{Input perturbation}
These defenses tend to perturb or mix the original data from the source.
For example, \cite{huang2020instahide} proposes Instahide to encode the training data with user's private dataset instead of blind noises.
\cite{sun2021soteria} proposes Soteria to perturb the image feature representation before the defended layer to confuse the reconstruction attack.
In the meantime, \cite{gao2021privacy} proposes to search for optimal image transformation combination such as image rotation and shift to preserve privacy.
However, there is no one-size-fits-all traffic transformation pattern and may not apply to NIDS domain.
\cite{wu2022defense} proposes to obfuscate the gradients of sensitive data with concealing data to preserve privacy, yet it limits the percentage of sensitive data within a batch and does not consider label protection.

\subsection{Motivation: Why privacy protection for FL-NIDS}
\label{motivation}

In FL, if the training data are images or contexts that carry high-level sensitive information, the adversary can directly reconstruct the data and manipulate such privacy. 
While for FL-NIDS, the training data are tabular features with low-level information, which can be less straightforward to understand.
However,
traffic data reconstruction can also pose great threats because those features contain unique patterns tailored to the clients, which can be manipulated for adversarial attacks.
\subsubsection{Privacy in traffic}
To reveal the necessity for privacy protection, we first present some important features of traffic data in intrusion detection scenario.

$\bullet$
\textbf{Static info} includes source/destination IP and port, network protocol, service, and specific flags.

$\bullet$
\textbf{Timestamp} represents the arrival time of a packet.
Aggregation of enough packets can carry information about traffic behavior pattern.
For example, TANTRA \cite{sharon2022tantra} utilizes DNN to learn benign traffic's timestamp differences and apply them on malicious data for adversarial attacks.

$\bullet$
\textbf{Packet size} represents the content length of a packet.
Traffic flows usually follow specific packet size distribution under certain network protocol.
For instance, \cite{nasr2021defeating} leverages this feature to generate blind adversarial perturbations to evade DNN-based traffic classifiers.

$\bullet$
\textbf{Flow throughput} represents the packet sending or receiving rate during a traffic flow.
Adversary can construct malicious traffic with moderate flow rate that mimics the target traffic tailored to the target model like DDoS attack in \cite{han2021evaluating}.

$\bullet$
\textbf{Others} include the combination of basic features like max/min packet size during a TCP flow between specific IPs.
These features are usually perturbed to help maximize the probability of a successful evasion.

\subsubsection{Why privacy protection}
In FL-NIDS, as introduced in Section~\ref{reconstrcution_attack}, the adversary can recover raw traffic data via reconstruction attacks.
Once he/she obtains enough samples with the features mentioned above, instead of targeting the data themselves like images, the adversary can make use of the data to perform the following actions:

$\bullet$
Learning benign traffic's behavior to 
generate malicious traffic with black-box attacks, e.g., training GAN with benign data and generate with random noises, or just learning the timestamp feature for more subtle attacks as TANTRA \cite{sharon2022tantra}.

$\bullet$
Directly perturbing the recovered malicious data with white-box attacks targeting specific model, e.g., PGD \cite{madry2017towards}, DeepFool \cite{moosavi2016deepfool}, and AutoPGD \cite{croce2020reliable}.

Note that we will conduct extensive experiments to validate the efficiency of such black-box and white-box attacks in Section~\ref{privacy_results}, as a straightforward demonstration for potential threats of reconstruction attacks even with existing defenses, and therefore the necessity for stronger privacy protection for FL-NIDS.
As a result, in this paper we are motivated to propose a new effective defense strategy FedDef to enhance the data security for FL in Section~\ref{method}.
\section{Threat Model and Privacy Metric}
\label{threat_model}

In this section, we present our adversary and make several assumptions, then we introduce two privacy metrics with an evaluation example to enhance our motivation.

\subsection{Adversary}
\label{server}

We consider an honest-but-curious server that follows the exact FL protocol and is allowed to observe the updates from different users.
The goal of the attacker is to reconstruct each user's private data and launch adversarial attack against NIDSs.
We make several assumptions of the adversary's power as follows:
(1) The adversary is aware of the model architecture and loss function.
(2) The adversary knows about the property of the training data, including max/min values, and feature types (i.e., discrete or continuous).
(3) The adversary is aware of each user's local training batch size.
In this way, the adversary can perform suitable attacks depending on different scenarios and thus evaluate defenses' lower bound.

\subsection{Privacy Score via Reconstruction Attack}
\label{attack_nids}

\subsubsection{Inversion Attack in NIDS scenario}
We leverage optimization-based reconstruction attack using $L_2$ and cosine distance metrics as introduced in Section~\ref{reconstrcution_attack}.
However, there exist two challenges when it comes to traffic features:

Firstly, traffic features are usually normalized during pre-processing procedure, therefore, the reconstructed data $x^*$ should satisfy $x^*\in[0,1]^{dim}$, where $dim$ is the dimension of data features.
Secondly, there are discrete and continuous traffic features in $x^*$, which require different techniques.

To address the first challenge, we can project the final optimized $x^*$ into legal feature space such that $x^*=clamp(x^*, min=0, max=1)$.
Note that we don't leverage variable change since it induces additional computation and achieves similar performance.
To address the second challenge, we first project the normalized $x^*$ into original feature space, and then we force the discrete features to be integer and normalize $x^*$ again for later privacy evaluation.

Note that when the batch size of the training data is more than $1$, inversion attack may not converge because the reconstructed data can have different permutations according to different initialization.
Therefore, we reconstruct a single sample at a time while keeping the rest the same.
We incorporate label $y^*$ in the optimization process and output the index of the maximum value of $y^*$ as the reconstructed label.

\subsubsection{Extraction attack in NIDS scenario}
In Section~\ref{reconstrcution_attack}, we have illustrated that extraction attack is effective for a single training sample whenever there is a layer with both weight and bias.
According to the neural network in PyTorch, we have the following theorem:
\begin{theorem}
\label{theorem_1}
If the first layer has a bias, then $x^T=(\frac{\partial \ell}{\partial W})^T(\frac{\partial \ell}{\partial b})^T(\frac{\partial \ell}{\partial b}(\frac{\partial \ell}{\partial b})^T)^{-1}$ when and only when batch size=1.
\end{theorem}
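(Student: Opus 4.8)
The plan is to read the claimed identity as an input-recovery formula and to establish the biconditional by substituting the chain-rule form of the first-layer gradients and simplifying. First I would fix the pre-activation output of the first layer as $y=W^Tx+b$ and abbreviate the upstream gradient by $g:=\partial\ell/\partial y$. Exactly the chain-rule computation already used for the extraction attack in Section~\ref{reconstrcution_attack} then gives, for a single training sample, the bias gradient $\partial\ell/\partial b=g$ and the rank-one weight gradient $\partial\ell/\partial W=x\,g^T$ (the placement of the transpose being dictated by PyTorch's layout). I would record these two relations as the structural core of the argument, since everything that follows is linear algebra on them.

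For the forward direction (batch size $=1$) I would substitute these expressions into the right-hand side. The trailing factor $(\partial\ell/\partial b)^T\big(\partial\ell/\partial b\,(\partial\ell/\partial b)^T\big)^{-1}$ is precisely the right Moore--Penrose pseudoinverse of the bias gradient; for a single vector $g$ it collapses to $g/\lVert g\rVert^2$, so the scalar $g^Tg=\lVert g\rVert^2$ produced by the weight-gradient factor cancels and the expression returns the input $x$ (written $x^T$ in the paper's layout). The only hypothesis needed is $\partial\ell/\partial b\neq0$, which makes the scalar $\partial\ell/\partial b\,(\partial\ell/\partial b)^T$ invertible; this is the same nondegeneracy already invoked for the per-row formula $x^T=(\partial\ell/\partial b_i)^{-1}\,\partial\ell/\partial W_i^T$.

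For the reverse direction I would show the formula no longer returns any genuine sample once the batch size $B$ exceeds one. Writing $g_n:=\partial\ell_n/\partial y_n$ for the per-sample upstream gradients, the aggregated quantities PyTorch actually exposes are $\partial\ell/\partial b=\sum_{m}g_m=:\bar g$ and $\partial\ell/\partial W=\sum_{n}x_n g_n^T$, a sum of rank-one terms that is generically of rank greater than one. Substituting into the right-hand side yields the affine combination $\sum_n\lambda_n x_n$ with $\lambda_n=\dfrac{g_n^T\bar g}{\lVert\bar g\rVert^2}$, and summing the weights gives $\sum_n\lambda_n=\dfrac{\bar g^T\bar g}{\lVert\bar g\rVert^2}=1$. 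Hence the output is an affine mixture of the batch inputs rather than any individual $x_n$, and it can equal a true sample only in the degenerate case where $\lambda$ is a standard basis vector, which fails for generic gradients. This recovers, at the level of the closed-form extraction formula, the heuristic remark in Section~\ref{attack_nids} that inversion fails to pin down a single input when the batch size is larger than one.

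I expect the reverse direction to be the main obstacle. The forward substitution is routine once the rank-one structure is in hand, whereas the ``only when'' claim requires arguing that the mixture $\sum_n\lambda_n x_n$ does not collapse to a single $x_n$, i.e., that $\sum_n x_n g_n^T$ is generically not rank one and the pseudoinverse therefore manufactures a spurious blend. I would make this precise by appealing to the genericity of distinct, non-aligned upstream gradients $g_n$, so that no weight $\lambda_n$ equals one and the recovered vector is provably distinct from every true input.
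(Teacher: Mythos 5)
Your forward direction coincides with the paper's argument: the chain rule gives $(\frac{\partial \ell}{\partial W})^T=x^T\frac{\partial \ell}{\partial b}$, and right-multiplying by $(\frac{\partial \ell}{\partial b})^T\bigl(\frac{\partial \ell}{\partial b}(\frac{\partial \ell}{\partial b})^T\bigr)^{-1}$ (your pseudoinverse $g/\lVert g\rVert^2$) recovers $x^T$ under the same nondegeneracy hypothesis $\frac{\partial \ell}{\partial b}\neq 0$. Your reverse direction, however, takes a genuinely different route. The paper argues at the level of the \emph{broadcast} bias: it replicates $b$ into $b^*$ of size $batch \times out\_feature$, claims that $\frac{\partial \ell}{\partial b^*}(\frac{\partial \ell}{\partial b^*})^T$ is then a $batch \times batch$ matrix with all entries equal (hence proportional to the all-ones matrix and singular), and concludes that Eq.~(\ref{torch_extraction}) fails because the required inverse does not exist. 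You instead keep the aggregated gradients that autograd actually exposes for the leaf parameters, $\frac{\partial \ell}{\partial b}=\bar g=\sum_n g_n$ and $\frac{\partial \ell}{\partial W}=\sum_n g_n^T x_n$, observe that the $1\times 1$ matrix $\bar g\bar g^T$ is generically still invertible, and show that the formula remains well defined but returns the affine blend $\sum_n\lambda_n x_n$ with $\lambda_n=\langle g_n,\bar g\rangle/\lVert\bar g\rVert^2$ and $\sum_n\lambda_n=1$, which is no individual sample for generic, non-aligned per-sample gradients. Your version is arguably more faithful to PyTorch semantics (the bias gradient of a leaf parameter is the batch-summed vector, so the quantity to invert is a scalar that rarely degenerates) and it tells the attacker-facing story precisely: the extraction does not crash, it silently produces a spurious mixture. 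The paper's version buys a shorter, unconditional ``inverse does not exist'' conclusion, but only under its broadcast reading of the bias gradient, and its assertion that all rows of $\frac{\partial \ell}{\partial b^*}$ coincide is itself loose, since per-sample upstream gradients generally differ; your residual caveat (excluding the degenerate case where some $\lambda_n=1$) is handled by an explicit genericity appeal, which is no weaker than what the paper implicitly relies on.
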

\begin{proof}[\textbf{Proof of Theorem~\ref{theorem_1}}]
When batch size is 1,
the output $y$ of a FC layer is computed as $y=xW^T+b$, where $x$ is the input with size $1 \times in\_feature$, $(W, b)$ are the weight and bias with size $out\_feature \times in\_feature$ and $1 \times out\_feature$, respectively, and we have $(\frac{\partial \ell}{\partial W})^T=\frac{\partial \ell}{\partial W^T}=x^T\frac{\partial \ell}{\partial y}=x^T\frac{\partial \ell}{\partial b}$, then we have:
\begin{equation}
\label{torch_extraction}
    x^T=(\frac{\partial \ell}{\partial W})^T(\frac{\partial \ell}{\partial b})^T(\frac{\partial \ell}{\partial b}(\frac{\partial \ell}{\partial b})^T)^{-1}
\end{equation}

When batch size is more than 1,
since $b$ is broadcast into $b^*=[b,...,b]$ with size $batch \times out\_feature$, we have $\frac{\partial \ell}{\partial b^*}(\frac{\partial \ell}{\partial b^*})^T=\begin{bmatrix}
 bb^T &...& bb^T\\
  ...& bb^T & ...\\
 bb^T & ... &bb^T
\end{bmatrix}=bb^T\begin{bmatrix}
 1 &...& 1\\
  ...& 1 & ...\\
 1 & ... &1
\end{bmatrix}$, where the second matrix ($batch \times batch$) is not invertible and Eq.~(\ref{torch_extraction}) fails.
\end{proof}

\begin{definition}[\textbf{Privacy Score}]
Based on the reconstructed data, we introduce privacy score metric where we directly add up the absolute distance for continuous features, and then we project the data into their original feature space and set the distance of discrete features to 1 if they don't match the original value.
The overall score computation is :
\begin{equation}
    score(x,x^*)=\frac{\sum_{i\in S_c} |x_i-x^*_i|+\sum_{i\in S_d} equal(X_i,X^*_i)}{|S_c|+|S_d|}
\end{equation}
where $S_c$ and $S_d$ denote the continuous and discrete features, $X$ and $X^*$ are projected features with original value ranges, $equal(x_1,x_2)=1$ if $x_1=x_2$, and it gets $0$ otherwise.
In this way, we can quantitatively evaluate the privacy leakage for each defense for FL-based NIDS.
\end{definition}

\subsection{Evasion Rate via Adversarial Attack}
\label{GAN_attack}

As long as the adversary obtains enough reconstructed data using either inversion attack or extraction attack, he/she can launch black-box attack by filtering out benign traffic to train GAN to generate adversarial examples from random noises.
In addition, the adversary can also perform white-box attack like PGD \cite{madry2017towards} that directly perturbs the reconstructed malicious traffic to reach better evasion rate than black-box methods.

\begin{definition}[\textbf{Evasion Rate}]
We derive evasion rate (ER) using the generated adversarial traffic to attack the trained DNN-based FL model and Kitsune.
For more straight evasion results, we present model accuracy $ACC_{DNN}$ for DNN model where $ER=1-ACC_{DNN}$, and Root Mean Squared Error $RMSE$ for Kitsune model, where $ER=1$ when $RMSE$ is lower than threshold and $ER=0$ otherwise. 
\end{definition}

\subsection{Evaluation Example}
\label{evaluation_example}

\subsubsection{Example of privacy score}
To better understand the privacy metrics, we first evaluate model utility and privacy score on several existing defenses (optimal parameters) against inversion attack and list some reconstructed normalized features in TABLE~\ref{reconstruction_example}, where $x$ and $x^*$ denote the original and reconstructed traffic feature, respectively.
Smaller score generally means more privacy leakage.
Model without defense is the most vulnerable with the lowest privacy score (6.6e-4) and the adversary can almost perfectly recover continuous and discrete traffic features, while the other four defenses can achieve higher score (3.3e-3 to 2.8e-1) with less data leakage.
However, deploying defenses can degrade the FL model utility, where Instahide performs the worst with 4.6\% accuracy loss in exchange for such privacy guarantee.
\begin{table}[htbp]
\caption{Reconstruction example on KDD99 dataset, \textbf{ACC} denotes the FL model accuracy, \textbf{PS} denotes privacy score.
We present normalized and original value for discrete features*.}
\label{reconstruction_example}
\resizebox{\linewidth}{!}{
\begin{tabular}{|c|c|c|c|c|c|c|c|}
\hline
\diagbox[]{Data}{Feature} & \textbf{ACC} & \textbf{PS} & Duration & Protocol* & Service* & Src\_bytes & Dst\_bytes \\ \hline
$x$ & \diagbox[]{}{} & \diagbox[]{}{} & 0.00 & 1.00/2 & 0.16/12 & 2.01e-4 & 0.00 \\ \hline
$x^*$ w/o. defense  & 0.996 & 6.6e-4 & 0.00 & 1.00/2 & 0.16/12 & 4.60e-4 & 0.00 \\ \hline
$x^*$ w/. Soteria \cite{sun2021soteria} & 0.994 & 3.3e-3 & 0.00 & 1.00/2 & 0.16/12 & 0.00 & 1.38e-3 \\ \hline
$x^*$ w/. GP \cite{zhu2019deep} & 0.989 & 1.5e-1 & 0.11 & 0.50/1 & 0.16/12 & 0.00 & 0.00 \\ \hline
$x^*$ w/. DP \cite{abadi2016deep} & 0.985 & 2.8e-1 & 0.10 & 1.00/2 & 0.00/1 & 0.00 & 0.00 \\ \hline
$x^*$ w/. Instahide \cite{huang2020instahide} & 0.950 & 2.0e-1 & 0.00 & 0.50/1 & 0.37/26 & 0.00 & 0.00 \\ \hline
\end{tabular}
}
\end{table}

\subsubsection{Example of evasion rate}
To further demonstrate the consequences of reconstruction attack, we leverage black-box attack as an example and evaluate evasion rate of adversarial examples (AEs) on Kitsune \cite{mirsky2018kitsune} and trained DNN-based FL-NIDS (refer to Section~\ref{privacy_results} for setup).
Fig. \ref{test_gan} illustrates the RMSE and accuracy change during the training process.
We can find that even DP with such strong privacy guarantee can sometimes induce successful evasion.
For example, RMSE for DP is under the threshold for KDD99 dataset against Kitsune and thus $ER=1$, while for the DNN model, the accuracy also drops to $0$ for Mirai dataset.
The other defenses can only perform worse, which poses great potential threats to users.
We also notice that AEs under DP perform relatively worse than other baselines, which corresponds to TABLE \ref{reconstruction_example} that DP achieves higher privacy score (2.8e-1) and induces less data privacy leakage.
\begin{figure}[!h]
\setlength{\abovecaptionskip}{0pt}
\setlength{\belowcaptionskip}{0pt}
\centering
\subfigure[KDD99+Kitsune]{
\includegraphics[width=0.43\linewidth]{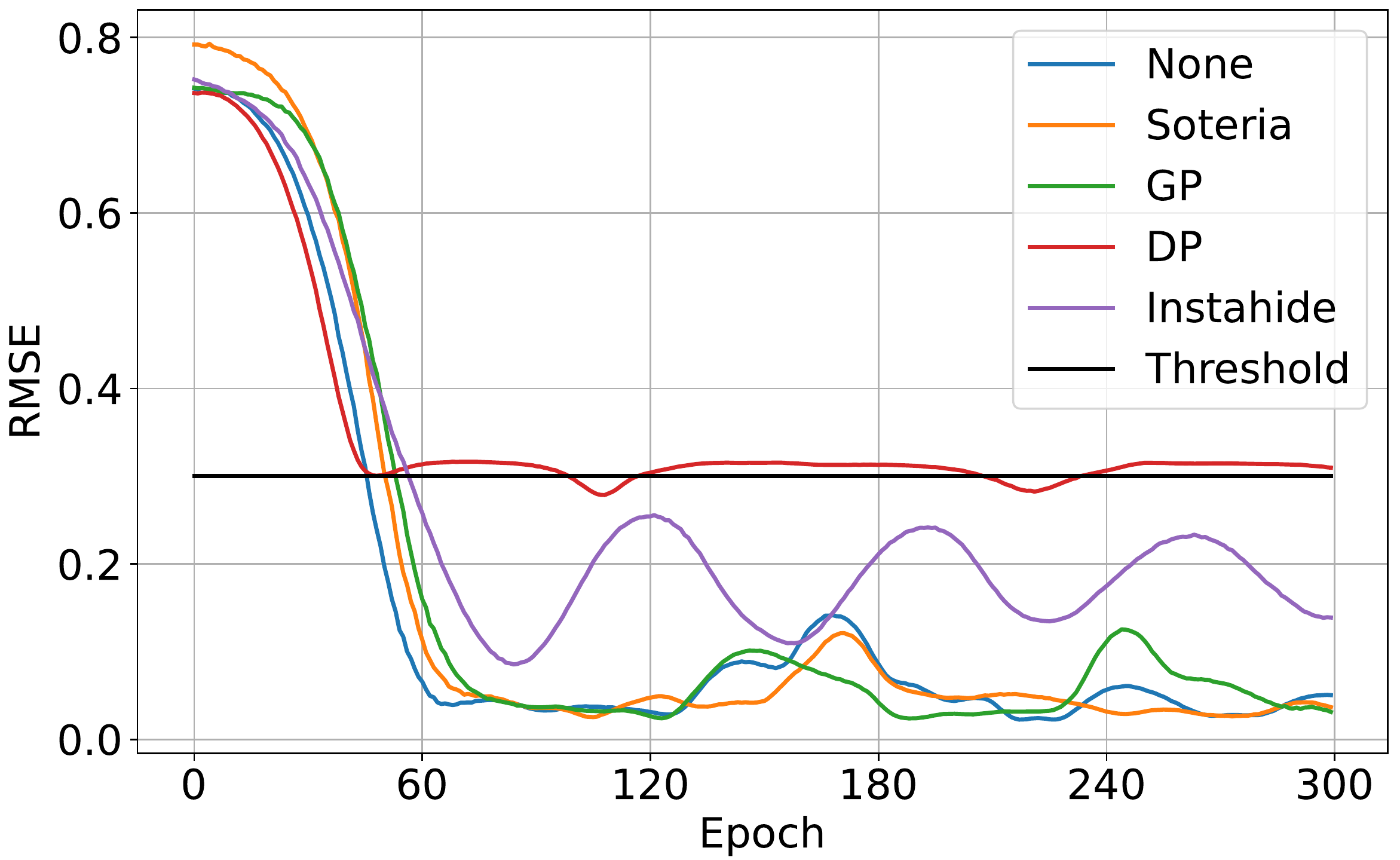}
}
\subfigure[Mirai+DNN]{
\includegraphics[width=0.43\linewidth]{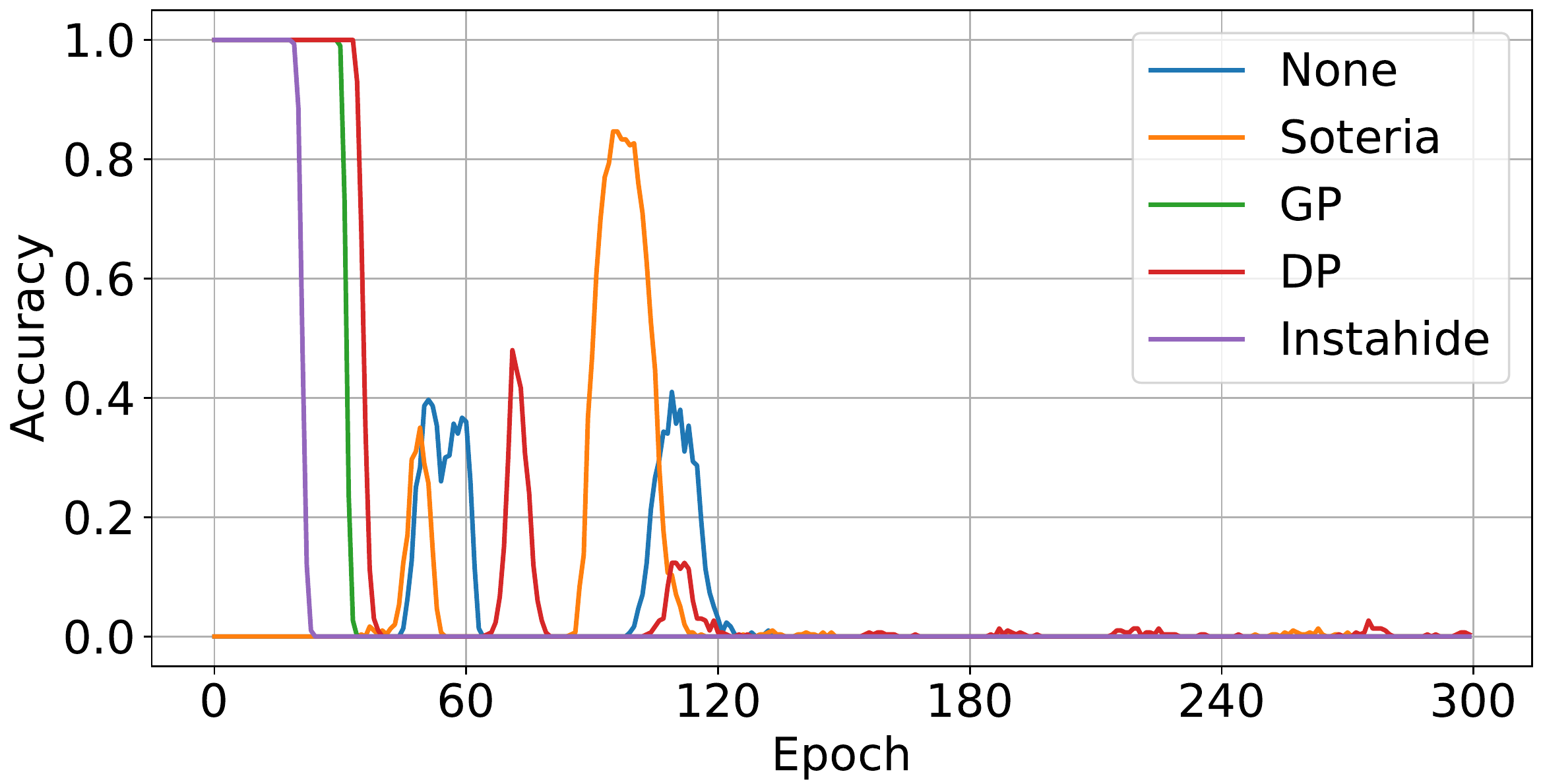}
}
\caption{Evasion rate on Kitsune and FL-NIDS models for KDD99 and Mirai dataset.}
\label{test_gan}
\end{figure}

In summary, our proposed metrics are practical to evaluate the robustness of FL-based NIDSs combined with different defenses, and the example results demonstrate the insufficiency of existing defenses.
Therefore, it is critical to design a new defense approach to build a more robust NIDS that preserves traffic data privacy without damaging model performance.

\section{FedDef: Optimization-based input perturbation defense}
\label{method}

In this section, we present our defense FedDef.
We introduce detailed optimization design and also implementation with a representative FL framework, i.e., FedAvg.

\subsection{Overview of FedDef}
\label{overview}

\begin{figure*}
\setlength{\abovecaptionskip}{0pt}
\setlength{\belowcaptionskip}{0pt}
\centering
\includegraphics[width=0.7\linewidth]{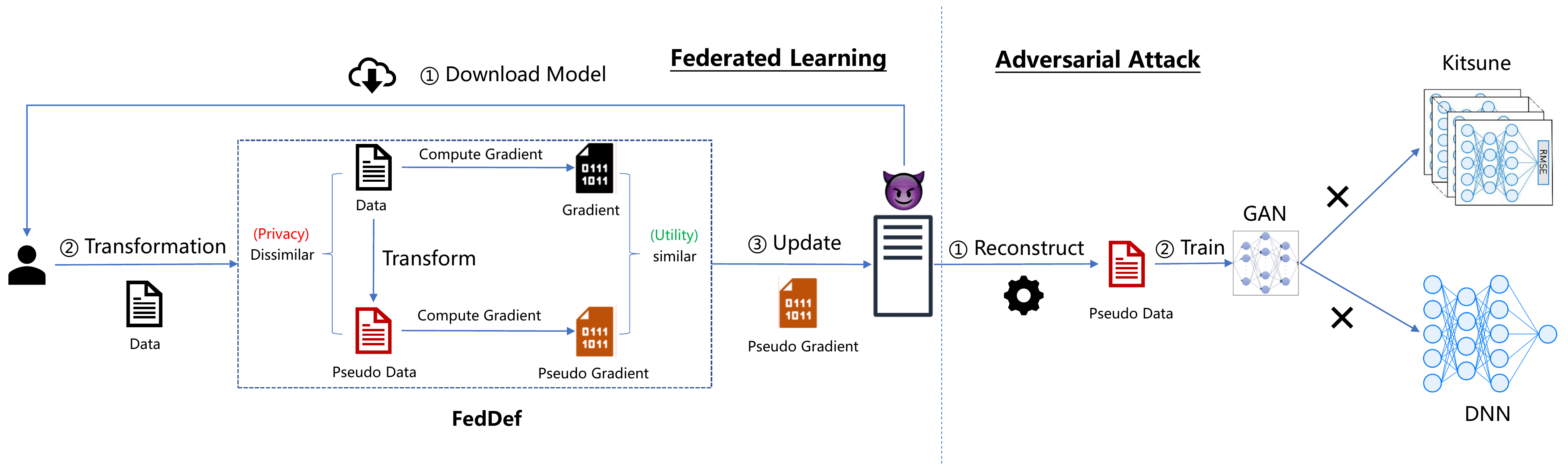}
\caption{Overview of FedDef.}\vspace{-0.5cm}
\label{FedDef}
\end{figure*}

As illustrated in Fig.~\ref{FedDef},
local users first download the latest global model from the server.
During local training, users first leverage FedDef to transform their own private data into pseudo data such that the data are dissimilar to preserve privacy and the corresponding gradients are similar to preserve FL model performance.
Then the users update the corresponding pseudo gradient instead of the original gradient.

The right side in Fig.~\ref{FedDef} illustrates the adversary manipulating the reconstructed data to attack the target models.
The adversary first leverages reconstruction attack to recover user's private training data and labels from the gradients, then black-box adversarial attack is available using GAN to evade the FL model or other NIDSs such as Kitsune.
While our defense ensures that the adversary can reconstruct only the pseudo data, which cannot help GAN model converge and therefore the adversarial attack fails.

\subsection{Optimization Design of FedDef}
\label{optimization design}

Privacy preserving and model performance are two key indicators for FL, users collaboratively train a global robust model to acquire high detection results on intrusion attacks while also protecting their private data from information leakage.
Based on these two aspects, we design a novel defense scheme for NIDS, i.e., FedDef, aiming to (1) preserve data privacy against reconstruction attack and the following adversarial attack, and (2) maintain FL model performance.

\subsubsection{Privacy preserving}
To address the privacy concern, we prevent privacy leakage from the source. The input training data should be dissimilar from the original data, and we have the following optimization:
\begin{equation}
\label{max_data}
    argmax_{x',y'} \ \ D( x', y'; x, y)
\end{equation}
\noindent where $x$ and $y$ are the original input data and labels, $x'$ and $y'$ are the pseudo pair for later pseudo gradient computation. $D(\cdot,\cdot)$ denotes the distance metric that evaluates how dissimilar $(x', y')$ is from $(x, y)$, where features and labels require different techniques.

\textbf{Data Transformation}.
According to the inversion attack and extraction attack introduced in Eq.~(\ref{L2_optimization}) and Eq.~(\ref{torch_extraction}), the reconstructed features $x^*$ depend on the absolute numerical values of $x'$, therefore, we optimize $x'$ to be as far from $x$ as possible by $L_2$ distance metrics as $argmax_{x'} \ \ \Vert x'-x\Vert_2$.

However, the reconstructed data $x^*$ is always projected into $[0,1]^{dim}$, where $dim$ is the feature dimension of $x$ and $x^*$.
Therefore, it makes no difference to the reconstructed data $x^*$ when $x^* \le 0$ or $x^* \geq 1$.
Due to the consideration above, we can constrain the upper bound of the $L_2$ distance and rewrite the optimization with respect to $x'$ as
\begin{equation}
    argmin_{x'} \ \ ReLU(\delta - \Vert x'-x\Vert_2)
\end{equation}
where $\delta$ is the upper bound of the distance between $x$ and $x'$, and $ReLU(x)=max(0, x)$, which ensures the distance is as large as possible while also guaranteeing the upper bound.

\textbf{Label Transformation}.
On the contrary, the reconstructed labels $y^*$ do not rely on the absolute numerical values of $y'$, instead the adversary takes the index $i$ of the maximum label probability $y^*_i$ as the final label, therefore increasing the overall distance between $y'$ and $y$ does not guarantee label privacy.
For example, let $i$ be the index of $y$, $y_i=1$ and $y_j=0,\forall j\in [1,n],j\ne i$, where $n$ is the total label classes.
By leveraging $L_2$ distance metric to optimize $y'$, we may obtain $y'$ where $y'_i=2$ and $y'_j=1,\forall j\in [1,n],j\ne i$.
In this way, though the overall distance between $y'$ and $y$ is quite large, the adversary may still acquire the correct label $i$ since $y'_i$ is the maximum one in $y'$ and the reconstructed label $y^*_i$ may also share the same property, therefore, label information is extracted by equation $i=argmax(y^*)$.

To address this problem, we choose to minimize $y'_i$ so that it stays the minimum one in $y'$ and the adversary will probably get any indexes other than the ground-truth label $i$.
Specifically, we first locate the label index $i$ in $y$, and we minimize $y'$ such that the minimum one $y'_{min}$ is as close to $y'_i$ as possible which can be formulated as $argmin_{y'} \ \ |y'_{min}-y'_i|$, where $|\cdot|$ denotes the absolute distance.
In this way, we optimize $y'_i$ to be the minimum one in $y'$ so that the adversary will most probably reconstruct any indexes except $i$. 

Finally, we formulate the optimization for privacy preserving as follows:
\begin{equation}
\label{privacy_optimization}
    argmin_{x',y'} \ \ ReLU(\delta - \Vert x'-x\Vert_2)+|y'_{min}-y'_i|
\end{equation}
where $i$ denotes the ground-truth label index such that $y_i=1$, and $y'_{min}$ is the minimum one in $y'$.

\subsubsection{Model performance}
To maintain FL model performance, we choose to add a constraint for the pseudo gradient $\nabla \theta(x',y')$ generated by the pseudo data and label $(x',y')$ to be as close to the original gradient $\nabla \theta(x,y)$ as possible.
In particular, we expect the $L_2$ distance between the pseudo gradient and the original gradient to be within $\epsilon$ boundary, which can be written as 
\begin{equation}
\label{model_constraint}
    s.t. \ \Vert \nabla \theta(x',y')-\nabla \theta(x,y)\Vert_2 \le \epsilon 
\end{equation}
However, such constraint is highly non-linear and thus we transform the constraint into an optimization problem using $ReLU$ function as 
\begin{equation}
\label{model_optimization}
    argmin_{x',y'} \ \ ReLU(\Vert \nabla \theta(x',y')-\nabla \theta(x,y)\Vert_2 -\epsilon )
\end{equation}
Finally, we can combine the two objective functions to optimize $(x',y')$ such that the pseudo data and the original data are dissimilar to preserve data privacy and the corresponding gradients are similar to maintain model performance.
In particular, we can derive the combined optimization as follows:
\begin{equation}
\label{combined_optimization}
    \begin{split}
    &argmin_{x',y'} \ \ L_{Pri}+\alpha L_{Per}\\
    &L_{Pri}=ReLU(\delta - \Vert x'-x\Vert_2)+|y'_{min}-y'_i|\\
    &L_{Per}=ReLU(\Vert \nabla \theta(x',y')-\nabla \theta(x,y)\Vert_2 -\epsilon)\\
    \end{split}
\end{equation}
where $L_{Pri}$ and $L_{Per}$ stand for optimization loss for privacy preserving and FL model performance, respectively.
$\alpha$ is a parameter that evaluates the tradeoff between privacy and model performance.
The overall algorithm is illustrated in \textbf{Algorithm~\ref{algorithm_feddef}}.
\begin{algorithm}
    \caption{Transformation of private data and label}
    \label{algorithm_feddef}
    \KwIn{original data pair $(x,y)$, global differentiable model parameters $\theta$, loss function $F$, defense epochs $def\_ep$, privacy tradeoff $\alpha$, learning rate $def\_lr$, constant $g\_value$, $\epsilon$, $\delta$}
    \KwOut{pseudo data pair $(x',y')$} 
    $x'\leftarrow U(0, 1), y'\leftarrow U(0, 1)$; \Comment{random initialization}\\
    $\nabla \theta(x,y) \leftarrow \frac{\partial F(\theta,x,y)}{\partial \theta}$; \Comment{original gradient}\\
    $gt \leftarrow argmax(y)$; \Comment{ground-truth label index of $y$}\\
    \For {$i\leftarrow1$ to $def\_ep$} 
    {$\nabla \theta(x',y') \leftarrow \frac{\partial F(\theta,x',y')}{\partial \theta}$;\Comment{pseudo gradient}\\
    \lIf{$max(|\nabla \theta(x',y')|) \le g\_value$}{\textbf{return} $x', y'$}\Comment{early stop}\\
    $loss \leftarrow ReLU(\Vert \nabla \theta(x',y')-\nabla \theta(x,y)\Vert_2 -\epsilon)$\Comment{$L_2$ distance}\\
    $y'_{min} \leftarrow min(y')$\;
    $loss \leftarrow \alpha*loss+ReLU(\delta - \Vert x'-x\Vert_2)+|y'_{min}-y'_{gt}|$\;
    $x' \leftarrow Adam(x', \frac{\partial loss}{\partial x'}, def\_lr)$;\Comment{Adam optimizer}\\
    $y' \leftarrow Adam(y', \frac{\partial loss}{\partial y'}, def\_lr)$\;
    }
    \textbf{return} $x', y'$
\end{algorithm}

Note that we leverage early stop to terminate the optimization as long as the pseudo gradient tends to vanish.
In particular, we first set a predetermined constant $g\_value$, then we check the maximum absolute value of all gradients of the FL model and stop the optimization whenever the maximum value is lower than $g\_value$ (line 6).
In this way, we prevent extreme case of vanishing gradient and find it effective against extraction attack, because the pseudo gradients are small enough that $\frac{\partial F}{\partial b}(\frac{\partial F}{\partial b})^T$ approaches $0$ due to calculation accuracy and thus matrix inversion in Eq.~(\ref{torch_extraction}) fails, and global model can still get updated in the meantime.

\subsection{Implementation With FedAvg}
\label{fl_problem}
We introduce how to combine FedDef with the classic FL framework, i.e., FedAvg~\cite{mcmahan2017communication}, to build a robust FL-based NIDS.
We first formulate our FL training objective in FedAvg:
\begin{equation}
\label{fl_optimization}
    argmin_\theta \ \ \{F(\theta ) = \sum_{k=1}^{N} p_k F_k(\theta ) \}
\end{equation}
\noindent where $N$ is the total number of local users, $p_k$ is the corresponding weight of the $k$-th user participating in the global model training, and $p_k \ge 0$, $ {\textstyle \sum_{k=1}^{N}}p_k =1$.
$\theta$ is the parameter of the global differentiable model, $F_k(\cdot)$ is the overall loss function for the $k$-th user (e.g., cross entropy loss).
$N$ local users collaboratively train a global NIDS model by solving optimization (\ref{fl_optimization}) iteratively.
Here we present one training round (e.g., $t$-th) with our defense as follows:

(1) The server has learnt a global model $\theta_t$ at $t$-th round and distributes the model to all the participants.

(2) Every local user (e.g., $k$-th) first initializes their model with $\theta_t$, i.e., $\theta_t^k=\theta_t$ and then performs $E\ge1$ local updates.
Specifically, during each update, user first leverages FedDef to transform their private data and label with the latest model $\theta_t^k$ to generate pseudo data pair $(x',y')$ for pseudo gradient computation.
Therefore, the local model can be updated for the $i$-th iteration as:
\begin{equation}
\label{feddef_transformation}
    \xi^{'k}_{t+i}=FedDef(\xi^{k}_{t+i})
\end{equation}
\begin{small}
\begin{equation}
\label{fl_update}
    \theta ^k_{t+i+1} \leftarrow \theta ^k_{t+i}-\eta_{t+i}\nabla F_k(\theta ^k_{t+i},\xi^{'k}_{t+i}) \ ,i=0,1,...,E-1
\end{equation}
\end{small}

\noindent where $\xi^{k}_{t+i}$ is a sample pair (i.e., $(x^{k}_{t+i},y^{k}_{t+i})$) uniformly chosen from the $k$-th user's private dataset. $FedDef$ is our defense algorithm that outputs the pseudo data pair $\xi^{'k}_{t+i}$, $\eta_{t+i}$ is the learning rate for the $i$-th iteration.
In this way, user $k$ performs Eq.~(\ref{feddef_transformation}) and Eq.~(\ref{fl_update}) for $E$ rounds and finally obtains the local trained model $\theta ^k_{t+E}$.

Note that at each training round, the pseudo data pair $\xi^{'k}_{t+i}$ can be different due to different input data pair $\xi^{k}_{t+i}$ and the newly updated FL model.

(3) After local training is complete, the central server determines a set $\mathcal{S}_t$, which contains a subset of $K$ indices randomly selected with
replacement according to the sampling probabilities $p1,...,p_N$ from the $N$ users.
Then the global model is aggregated by simply averaging as:
\begin{equation}
\label{fl_aggregation}
    \theta _{t+E} \leftarrow \frac{1}{K} \sum_{k \in \mathcal{S}_t} \theta ^k_{t+E}
\end{equation}
In this way, the global model $\theta_{t+E}$ can be updated with only partial users, which can mitigate serious ``straggler’s effect", where the server has to wait for the slowest (even offline) user's update and faces great time delay.

To evaluate the model performance of such FL-NIDS, we directly feed the testing dataset to the DNN model without $FedDef$ transformation to derive clean accuracy.
\section{Theoretical Analysis}
\label{theoretical_analysis}

In this section, we derive theoretical analysis of model convergence and privacy guarantee for FedDef.

\subsection{Convergence Analysis}
\label{convergence_analysis}

We provide theoretical analysis for FL model convergence using FedAvg combined with our defense.
Our analysis follows \cite{Li2020On} on the convergence of FedAvg on non-iid data with partial users' update.

We first make the five following assumptions same as \cite{Li2020On}:

\begin{assumption}
\label{ass1}
$F_1,...,F_N$ are all $L$-smooth, that is, for all $V$ and $W$, and any $k \in [1,N]$, $F_k(V) \le F_k(W)+(V-W)^T\nabla F_k(W)+\frac{L}{2} \Vert V-W \Vert_2^2$.
\end{assumption}

\begin{assumption}
\label{ass2}
$F_1,...,F_N$ are all $\mu$-strongly convex, that is, for all $V$ and $W$, and any $k \in [1,N]$, $F_k(V) \ge F_k(W)+(V-W)^T\nabla F_k(W)+\frac{\mu}{2} \Vert V-W \Vert_2^2$.
\end{assumption}

\begin{assumption}
\label{ass3}
Let $\xi_t^k$ be sampled from the $k$-th user’s local data uniformly at random.
The variance of stochastic gradients for each user is bounded: $\mathbb{E} \Vert \nabla F_k(\theta_t^k,\xi_t^k) - \nabla F_k(\theta_t^k) \Vert^2 \le \sigma_k^2$, for all $k=1,...,N$.
\end{assumption}

\begin{assumption}
\label{ass4}
The expected squared norm of stochastic gradients is uniformly bounded: $\mathbb{E} \Vert \nabla F_k(\theta_t^k,\xi_t^k)\Vert^2 \le G^2$, for all $k=1,...,N$ and $t=0,...,T-1$, $T$ is the total iterations of local users' updates.
\end{assumption}

\begin{assumption}
\label{ass5}
Assume $\mathcal{S}_t$ contains a subset of $K$ indices randomly selected with
replacement according to the sampling probabilities $p1,...,p_N$.
The aggregation step of FedAvg performs $\theta _{t} \leftarrow \frac{1}{K} \sum_{k \in \mathcal{S}_t} \theta ^k_{t}$.
\end{assumption}

\begin{theorem}
\label{theorem_2}
Let $F^*$ and $F^*_k$ be the minimum values of $F$ and $F_k$, respectively.
We use the term $\Gamma =F^*-\sum\nolimits_{k=1}^{N}p_kF^*_k$ to quantify the degree of non-iid.
Recall that $E$ denotes the local updates for each user and $T$ denotes the total iterations.
Let Assumptions \ref{ass1}-\ref{ass5} hold and $L$, $\mu$, $\sigma_k$, $G$ be defined therein.
Choose $\kappa=\frac{L}{\mu}$, $\gamma=max\{8\kappa,E \}$, and learning rate $\eta_t=\frac{2}{\mu(\gamma+t)}$.
Then FedAvg with our defense with partial users participation for non-iid data satisfies:
\begin{equation}
\label{convergence_fl}
    \mathbb{E}[F(\theta _T)] -F^* \le \frac{2\kappa}{\mu+T} (\frac{B+C}{\mu} +2L\Vert \theta _0-\theta^* \Vert^2)
\end{equation}
where
\begin{equation}
    \begin{split}
        B&=\sum_{k=1}^{N}p^2_k(\epsilon+\sigma)^2 _k +6L\Gamma+8(E-1)^2(\epsilon+G)^2\\
        C&=\frac{4}{K}E^2(\epsilon+G)^2 
    \end{split}
\end{equation}
\end{theorem}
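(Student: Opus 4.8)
The plan is to recognize that Theorem~\ref{theorem_2} is a perturbed-gradient version of the FedAvg convergence result of \cite{Li2020On}, so the strategy is to reduce to that framework by treating FedDef as a bounded corruption of each stochastic gradient. First I would make the update rule explicit: by Eq.~(\ref{fl_update}) each local step uses the pseudo-gradient $\nabla F_k(\theta_{t}^k,\xi^{'k}_{t})$ computed from the transformed pair $\xi^{'k}_{t}=FedDef(\xi^{k}_{t})$. The model-performance constraint~(\ref{model_constraint}) guarantees $\Vert\nabla F_k(\theta_{t}^k,\xi^{'k}_{t})-\nabla F_k(\theta_{t}^k,\xi^{k}_{t})\Vert_2\le\epsilon$, so I would write the pseudo-gradient as $\nabla F_k(\theta_{t}^k,\xi^{k}_{t})+e_{t}^k$ with $\Vert e_{t}^k\Vert\le\epsilon$. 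This single observation is what lets $\epsilon$ propagate into the final bound.

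The key steps then mirror the argument of \cite{Li2020On}, but with two constants inflated. First I would upgrade Assumptions~\ref{ass3} and~\ref{ass4}: by Minkowski's inequality, $\sqrt{\mathbb{E}\Vert\nabla F_k(\theta_{t}^k,\xi^{'k}_{t})-\nabla F_k(\theta_{t}^k)\Vert^2}\le\sigma_k+\epsilon$ and $\sqrt{\mathbb{E}\Vert\nabla F_k(\theta_{t}^k,\xi^{'k}_{t})\Vert^2}\le G+\epsilon$, which is exactly why the bound replaces $\sigma_k^2$ by $(\epsilon+\sigma)_k^2$ and $G^2$ by $(\epsilon+G)^2$. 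Next I would introduce the virtual averaged iterate $\bar\theta_t=\sum_k p_k\theta_t^k$ and prove a one-step descent lemma bounding $\mathbb{E}\Vert\bar\theta_{t+1}-\theta^*\Vert^2$ in terms of $\mathbb{E}\Vert\bar\theta_{t}-\theta^*\Vert^2$ using $L$-smoothness and $\mu$-strong convexity; here the aggregated-variance contribution $\sum_k p_k^2(\epsilon+\sigma)_k^2$ and the heterogeneity term $6L\Gamma$ both emerge. I would then bound the cross-device divergence $\sum_k p_k\mathbb{E}\Vert\bar\theta_t-\theta_t^k\Vert^2$ over the $E$ local steps (producing $8(E-1)^2(\epsilon+G)^2$) and the partial-participation sampling variance of the aggregation~(\ref{fl_aggregation}) (producing $C=\frac{4}{K}E^2(\epsilon+G)^2$). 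Assembling these into a recursion and solving it by induction with the diminishing step size $\eta_t=\frac{2}{\mu(\gamma+t)}$ yields the $O(1/T)$ decay; a final application of $L$-smoothness converts the iterate bound into the function-value bound~(\ref{convergence_fl}).

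The hard part will be controlling the \emph{bias} of the pseudo-gradient. Unlike the true stochastic gradient, $\nabla F_k(\theta_{t}^k,\xi^{'k}_{t})$ need not be unbiased for $\nabla F_k(\theta_{t}^k)$ because FedDef deterministically transforms the sampled data, yet the descent lemma in \cite{Li2020On} repeatedly invokes unbiasedness to cancel cross terms of the form $\mathbb{E}\langle\bar\theta_t-\theta^*,\,\nabla F_k(\theta_t^k)-\nabla F_k(\theta_{t}^k,\xi^{'k}_{t})\rangle$. I would handle this by absorbing the entire $e_t^k$ perturbation into the variance and second-moment bounds via the triangle inequality above---that is, treating the biased pseudo-gradient as a single noisier stochastic gradient with the inflated constants---which is the conservative route consistent with the stated forms of $B$ and $C$. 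I would also check two consistency points: that the constraint~(\ref{model_constraint}) in fact holds at every iteration rather than only at optimization termination, and that the early-stopping branch in Algorithm~\ref{algorithm_feddef}, triggered when the pseudo-gradient nearly vanishes, still respects the $\epsilon$-bound so that the analysis covers that case as well.
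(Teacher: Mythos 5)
Your proposal takes essentially the same approach as the paper: the paper's entire proof consists of using the $\epsilon$-constraint from the FedDef optimization together with a Minkowski-type inequality (its Lemma~3) to replace Assumptions~3 and~4 by the inflated bounds $(\epsilon+\sigma_k)^2$ and $(\epsilon+G)^2$, and then citing the FedAvg analysis of Li et al.\ to conclude the theorem with the constants $B$ and $C$. The internal steps of that analysis (virtual averaged iterate, descent lemma, divergence and sampling-variance bounds, induction) are not re-derived in the paper, and the pseudo-gradient bias you rightly flag as the hard part is not addressed there at all---the paper implicitly takes exactly the conservative route you describe, treating the biased pseudo-gradient as an unbiased one with larger variance.
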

\begin{proof}[\textbf{Proof of Theorem~\ref{theorem_2}}]
We replace Assumptions \ref{ass3}-\ref{ass4} with our pseudo gradients and then derive the Theorem \ref{theorem_2}.
Without the loss of generality, we consider the $k$-th user's pseudo gradients' property. 
We first present the following lemmas:

\begin{lemma}
\label{lemma1}
Let $\Vert \cdot \Vert_2$ be a sub-multiplicative norm, for all $A$ and $B$, we have $\Vert A\Vert_2 -\Vert B\Vert_2 \le \Vert A+B\Vert_2 \le \Vert A\Vert_2 +\Vert B\Vert_2$ and $\Vert AB\Vert_2 \le \Vert A\Vert_2 \Vert B\Vert_2$.
\end{lemma}
\begin{lemma}
\label{lemma2}
Let $\Vert \cdot \Vert_2$ be a sub-multiplicative norm, for all $A$ and $B$, we have $\mathbb{E}^2\Vert A\Vert_2 \Vert B\Vert_2 \le \mathbb{E}\Vert A\Vert_2^2 \mathbb{E}\Vert B\Vert_2^2$.
\end{lemma}

The proof of Lemma \ref{lemma1} and \ref{lemma2} can be naturally obtained with norm triangle inequality and Cauchy–Schwarz inequality.

\begin{lemma}
\label{lemma_3}
Let $\Vert \cdot \Vert_2$ be a sub-multiplicative norm, for all $A$ and $B$, we have $\mathbb{E} \Vert A+B\Vert_2^2 \le (\sqrt{\mathbb{E} \Vert A \Vert_2^2} +\sqrt{\mathbb{E} \Vert B \Vert_2^2} )^2$.
\end{lemma}
\begin{proof}[\textbf{Proof of Lemma~\ref{lemma_3}}]
\begin{equation}
\label{lemma_3_proof}
    \begin{split}
    \mathbb{E} \Vert A+B\Vert_2^2 &\le \mathbb{E} (\Vert A \Vert_2+\Vert B\Vert_2)^2\\
    &=\mathbb{E} \Vert A \Vert_2^2+\mathbb{E} \Vert B\Vert_2^2+2\mathbb{E}\Vert A\Vert_2\Vert B\Vert_2\\
    &\le \mathbb{E} \Vert A \Vert_2^2+\mathbb{E} \Vert B\Vert_2^2+2\sqrt{\mathbb{E}\Vert A\Vert_2^2 \mathbb{E}\Vert B\Vert_2^2} \\
    &=(\sqrt{\mathbb{E} \Vert A \Vert_2^2} +\sqrt{\mathbb{E} \Vert B \Vert_2^2} )^2
    \end{split}
\end{equation}
\end{proof}

\textbf{Replace Assumption \ref{ass3}}.
According to our optimization (\ref{combined_optimization}), the pseudo gradients' $L_2$ distance from the original are constrained within $\epsilon$.
Therefore, the following equation holds:
\begin{equation}
\label{constrain_expected_gradient}
    \mathbb{E} \Vert \nabla F_k(\theta_t^k,\xi_t^{'k}) - \nabla F_k(\theta_t^k,\xi_t^{k}) \Vert_2^2 \le \epsilon^2
\end{equation}
The variance of the pseudo stochastic gradients for each user is bounded with Lemma \ref{lemma_3}:
\begin{small}
\begin{equation}
\label{assumption_3}
    \begin{split}
    &\mathbb{E} \Vert \nabla F_k(\theta_t^k,\xi_t^{'k}) - \nabla F_k(\theta_t^k) \Vert^2\\ 
    =&\mathbb{E} \Vert \nabla F_k(\theta_t^k,\xi_t^{'k}) - \nabla F_k(\theta_t^k,\xi_t^{k}) +\nabla F_k(\theta_t^k,\xi_t^{k}) - \nabla F_k(\theta_t^k) \Vert_2^2\\
    \le& (\sqrt{\mathbb{E} \Vert \nabla F_k(\theta_t^k,\xi_t^{'k}) - \nabla F_k(\theta_t^k,\xi_t^{k}) \Vert_2^2} \\ &+\sqrt{\mathbb{E}\Vert \nabla F_k(\theta_t^k,\xi_t^{k}) - \nabla F_k(\theta_t^k) \Vert_2^2})^2\\
    \le& (\epsilon+\sigma_k)^2
    \end{split}
\end{equation}
\end{small}

\textbf{Replace Assumption \ref{ass4}}.
We leverage similar method to replace Assumption \ref{ass4} with our pseudo gradient, the expected squared norm of pseudo stochastic gradients is uniformly bounded: 
\begin{small}
\begin{equation}
\label{assumption_4}
    \begin{split}
    &\mathbb{E} \Vert \nabla F_k(\theta_t^k,\xi_t^{'k}) \Vert^2\\
    =&\mathbb{E} \Vert \nabla F_k(\theta_t^k,\xi_t^{'k})-\nabla F_k(\theta_t^k,\xi_t^{k})+\nabla F_k(\theta_t^k,\xi_t^{k}) \Vert^2_2\\
    \le& (\sqrt{\mathbb{E} \Vert \nabla F_k(\theta_t^k,\xi_t^{'k})-\nabla F_k(\theta_t^k,\xi_t^{k})\Vert_2^2}+\sqrt{\mathbb{E} \Vert \nabla F_k(\theta_t^k,\xi_t^{k}) \Vert_2^2})^2\\
    \le& (\epsilon+G)^2
    \end{split}
\end{equation}
\end{small}

In this way, we can replace the bound of the original gradients in Assumptions \ref{ass3}-\ref{ass4} with our pseudo gradients.
Therefore, Assumptions \ref{ass1}-\ref{ass5} hold with our defense.
By applying our new bounds in \cite{Li2020On}, we can derive Theorem 2
aforementioned.

\end{proof}

\subsection{Privacy Analysis}
\label{privacy_analysis}

We provide theoretical analysis for data privacy with our defense.
Specifically, we derive a possible lower bound for the deviation of pseudo data $x'$ from the original $x$ using the derivatives of the first layer.
We first make the following assumption:

\begin{assumption}
Assume that there exist both weight $W$ and bias $b$ in the first layer of the model, and that the norm of the gradient with respect to the bias is bounded for some $M>0$ : $\Vert \frac{\partial F}{\partial b} \Vert_2 \le M$.
\end{assumption}

\begin{theorem}
\label{theorem_3}
Let Assumption 6 hold, $(x,y)$ be the original data pair after normalization, $(x',y')$ be the transformed data pair with our defense, $\nabla b =\frac{\partial F(\theta,x,y)}{\partial b}$, $\nabla W =\frac{\partial F(\theta,x,y)}{\partial W}$,$\nabla b' =\frac{\partial F(\theta,x',y')}{\partial b}$, $\nabla W' =\frac{\partial F(\theta,x',y')}{\partial W}$ and we have the following theorem to guarantee data privacy.
\begin{equation}
\label{privacy_fl}
    \Vert x'-x\Vert_2 \ge \frac{2(\Vert \nabla W'-\nabla W \Vert_2-\Vert \nabla b'-\nabla b \Vert_2)}{2M+\Vert \nabla b'-\nabla b \Vert_2} 
\end{equation}
\end{theorem}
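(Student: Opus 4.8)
The plan is to exploit the exact first-layer identity established in the proof of Theorem~\ref{theorem_1}. For batch size one that argument gives $(\nabla W)^T = x^T \nabla b$, i.e. $\nabla W = (\nabla b)^T x$, and the same relation $\nabla W' = (\nabla b')^T x'$ holds for the pseudo pair. Thus the observable quantity $\nabla W' - \nabla W = (\nabla b')^T x' - (\nabla b)^T x$ is nothing but the change of an outer product when both of its factors are perturbed, and the whole theorem amounts to lower-bounding the input perturbation $\Vert x'-x\Vert_2$ by the gradient perturbations. First I would rewrite this difference in a form that isolates $x'-x$.

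The key step, and the one I expect to be the main obstacle to spot, is a \emph{symmetric} splitting of the difference:
\begin{equation}
\nabla W' - \nabla W = \tfrac{1}{2}(\nabla b' + \nabla b)^T (x' - x) + \tfrac{1}{2}(\nabla b' - \nabla b)^T (x' + x),
\end{equation}
which one verifies by expanding both products. The naive one-sided splitting $(\nabla b')^T(x'-x) + (\nabla b'-\nabla b)^T x$ would leave a denominator $\Vert\nabla b'\Vert_2$ and no factor of two; it is precisely this symmetric form that manufactures the factor $2$ in the numerator and the $2M$ in the denominator of the claimed bound.

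From the identity I would take norms, applying the triangle inequality together with the sub-multiplicativity of Lemma~\ref{lemma1} (the norm of an outer product $uv^T$ factors as $\Vert u\Vert_2\Vert v\Vert_2$, and $\Vert M^T\Vert_2=\Vert M\Vert_2$), to get
\begin{equation}
\Vert \nabla W' - \nabla W\Vert_2 \le \tfrac{1}{2}\Vert \nabla b'+\nabla b\Vert_2\,\Vert x'-x\Vert_2 + \tfrac{1}{2}\Vert \nabla b'-\nabla b\Vert_2\,\Vert x'+x\Vert_2 .
\end{equation}
Solving this inequality for $\Vert x'-x\Vert_2$ yields the exact intermediate bound
\begin{equation}
\Vert x'-x\Vert_2 \ge \frac{2\Vert \nabla W'-\nabla W\Vert_2 - \Vert \nabla b'-\nabla b\Vert_2\,\Vert x'+x\Vert_2}{\Vert \nabla b'+\nabla b\Vert_2}.
\end{equation}

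Finally I would discharge the two remaining data-dependent quantities using the hypotheses. Because $x$ and $x'$ lie in the normalized range $[0,1]^{dim}$, I bound $\Vert x'+x\Vert_2 \le \Vert x'\Vert_2 + \Vert x\Vert_2 \le 2$, which replaces the numerator by $2(\Vert \nabla W'-\nabla W\Vert_2 - \Vert \nabla b'-\nabla b\Vert_2)$; and writing $\nabla b'+\nabla b = (\nabla b'-\nabla b) + 2\nabla b$ with Assumption~6's bound $\Vert\nabla b\Vert_2\le M$ gives $\Vert\nabla b'+\nabla b\Vert_2 \le 2M + \Vert\nabla b'-\nabla b\Vert_2$ for the denominator. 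A short monotonicity argument finishes the proof: when the numerator $\Vert\nabla W'-\nabla W\Vert_2-\Vert\nabla b'-\nabla b\Vert_2$ is non-positive the claimed bound is vacuous, and otherwise decreasing a non-negative numerator and increasing the denominator only weakens the fraction, so the intermediate bound dominates (\ref{privacy_fl}). The one delicate point to watch is the normalization estimate $\Vert x'\Vert_2\le 1$: it is exactly what yields $\Vert x'+x\Vert_2\le 2$, so the argument implicitly assumes the pseudo input remains in (or is projected back into) the normalized box.
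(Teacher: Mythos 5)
Your proof takes exactly the paper's route --- the first-layer outer-product identity $\nabla W^T=x^T\nabla b$, the symmetric splitting of $\nabla W'-\nabla W$, and Lemma~\ref{lemma1} --- but it has one genuine gap, at precisely the point you yourself flagged as delicate: the estimate $\Vert x'+x\Vert_2\le\Vert x'\Vert_2+\Vert x\Vert_2\le 2$ requires $\Vert x'\Vert_2\le 1$, and nothing gives you that. The theorem only assumes the \emph{original} pair $(x,y)$ is normalized; the pseudo input $x'$ produced by Algorithm~\ref{algorithm_feddef} is never clamped or projected back into the normalized box, and the privacy term $ReLU(\delta-\Vert x'-x\Vert_2)$ with the default $\delta=1$ actively pushes $x'$ a distance $1$ away from $x$. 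So in exactly the regime the theorem is meant to cover --- pseudo data far from the original --- your bound $\Vert x'+x\Vert_2\le 2$ is unjustified, and the proof as written establishes only a weaker statement under an added hypothesis that FedDef does not satisfy.

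The paper closes this hole with a small rewriting you can adopt directly: $x+x'=(x'-x)+2x$, hence $\Vert x+x'\Vert_2\le\Vert x'-x\Vert_2+2\Vert x\Vert_2\le\Vert x'-x\Vert_2+2$, which uses only the normalization of $x$. The unknown $\Vert x'-x\Vert_2$ then appears on both sides of the combined inequality,
\begin{equation*}
2M\Vert x'-x\Vert_2\ \ge\ 2\Vert \nabla W-\nabla W'\Vert_2-(\Vert x'-x\Vert_2+2)\,\Vert \nabla b-\nabla b'\Vert_2,
\end{equation*}
and solving for it is exactly what puts the $\Vert\nabla b'-\nabla b\Vert_2$ term into the denominator of (\ref{privacy_fl}). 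One further adjustment is then forced on your argument: the left-hand side above comes from applying Assumption 6 to \emph{both} gradients, $\Vert\nabla b+\nabla b'\Vert_2\le\Vert\nabla b\Vert_2+\Vert\nabla b'\Vert_2\le 2M$. If you instead keep your looser denominator estimate $\Vert\nabla b'+\nabla b\Vert_2\le 2M+\Vert\nabla b'-\nabla b\Vert_2$ (which applies the assumption only to $\nabla b$) and combine it with the fix above, the denominator inflates to $2M+2\Vert\nabla b'-\nabla b\Vert_2$ and you prove a strictly weaker bound than claimed. With these two repairs your argument coincides with the paper's proof.
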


\begin{proof}[\textbf{Proof of Theorem~\ref{theorem_3}}]
We can derive that $\nabla W^T =x^T \nabla b$ and $\nabla W'^T=x'^T \nabla b'$, therefore, we have the following transformation:
\begin{small}
\begin{equation}
    \begin{split}
    2(\nabla W-\nabla W')^T=(x-x')^T(\nabla b+\nabla b')+(x+x')^T(\nabla b-\nabla b')
    \end{split}
\end{equation}
\end{small}

With Lemma 1, we can derive the inequality on the one hand:
\begin{equation}
    \begin{split}
    \Vert (x-x')^T(\nabla b+\nabla b')\Vert_2 \le& \Vert x-x'\Vert_2 \Vert \nabla b+\nabla b' \Vert_2\\
    \le& \Vert x-x'\Vert_2 (\Vert \nabla b \Vert_2+\Vert \nabla b' \Vert_2)\\
    \le& 2M\Vert x-x'\Vert_2
    \end{split}
\end{equation}
Note that $\Vert x\Vert_2 \le 1$ since $x$ is normalized.
Therefore, on the other hand, we also have:

\begin{equation}
    \begin{split}
    &\Vert (x-x')^T(\nabla b+\nabla b')\Vert_2\\
    =&\Vert 2(\nabla W-\nabla W')^T -(x+x')^T(\nabla b-\nabla b')^T \Vert_2 \\
    \ge &2\Vert(\nabla W-\nabla W')\Vert_2 -\Vert (x+x')^T(\nabla b-\nabla b')^T \Vert_2\\
    \ge &2\Vert(\nabla W-\nabla W')\Vert_2 -\Vert (x+x')\Vert_2 \Vert(\nabla b-\nabla b')\Vert_2\\
    =&2\Vert(\nabla W-\nabla W')\Vert_2 -\Vert (x'-x+2x)\Vert_2 \Vert(\nabla b-\nabla b')\Vert_2\\
    \ge &2\Vert(\nabla W-\nabla W')\Vert_2 -(\Vert x'-x\Vert_2+2\Vert x\Vert_2) \Vert(\nabla b-\nabla b')\Vert_2\\
    \ge &2\Vert(\nabla W-\nabla W')\Vert_2 -(\Vert x'-x\Vert_2+2) \Vert(\nabla b-\nabla b')\Vert_2
    \end{split}
\end{equation}
Therefore, by combining the above inequalities, we have :
\begin{small}
\begin{equation}
\label{final_inequality}
    2M\Vert x'-x\Vert_2 \ge 2\Vert(\nabla W-\nabla W')\Vert_2 -(\Vert x'-x\Vert_2+2) \Vert(\nabla b-\nabla b')\Vert_2
\end{equation}
\end{small}
By extracting $\Vert x'-x\Vert_2$, we have Theorem 3.
\end{proof}

\section{Evaluation}
\label{evaluation}

In this section, we conduct several experiments to evaluate our defense from two perspectives:
(1) Model utility under iid and non-iid data distribution.
(2) Privacy protection using the two proposed metrics, i.e., privacy score (plus label reconstruction accuracy) and evasion rate.
We also conduct ablation study to derive optimal parameters combination and test computation overhead.

\subsection{Experimental Setup}
\label{experiment_setup}

We conduct our experiments using PyTorch with 8-core, 64GB CPU and one Nvidia-P100 (16GB) GPU.
The basic experiments setting is as follows:

\subsubsection{Datasets}
We choose four network intrusion attack datasets to evaluate our defense, including KDDCUP'99 \cite{tavallaee2009detailed}, Mirai botnet \cite{mirsky2018kitsune}, CIC-IDS2017 \cite{sharafaldin2018toward}, and UNSW-NB15 \cite{moustafa2015unsw}.
Mirai botnet contains only one type of malicious traffic named Botnet Malware and the other three datasets contain multiple attack types.
To further balance the datasets, we select only DDoS attack traffic from CIC-IDS2017 so that we evaluate on two 2-class dataset (i.e., Mirai Botnet and CIC-IDS2017) and two multi-class datasets (i.e., KDDCUP'99 and UNSW-NB15).
More details can be found in TABLE~\ref{datasets_table}.
\begin{table}[htbp]
\caption{Datasets used in our work}
\label{datasets_table}
\resizebox{\linewidth}{!}{
\begin{tabular}{|c|c|c|c|c|}
\hline
\diagbox[]{Dataset}{Property}     & Feature Dimension & Attack Type & Training Samples & Testing Samples \\ \hline
KDD99       & 41                & 23          & 345815           & 148206          \\ \hline
Mirai       & 100               & 2           & 210000           & 90000           \\ \hline
CIC-IDS2017 & 76                & 2           & 157998           & 67713           \\ \hline
UNSW-NB15   & 42                & 10          & 180372           & 77301           \\ \hline
\end{tabular}
}
\end{table}

\subsubsection{FL model}
We derive our global NIDS model under FedAvg framework with 10 local users collaboratively training a 3-layer fully connected DNN with layer sizes [$dim$, $2*dim$, $3*dim$, $n$], where $dim$ denotes the feature size and $n$ denotes attack types.
The first two layers both consist of a linear network with ReLU activation function, while the third layer also has a linear network with softmax that projects the output vector into label class probability.

\subsubsection{Baselines and parameters configuration}
We choose five baselines mentioned in Section~\ref{defenses} to compare with our work, including no defense, Soteria, gradient pruning (GP), differential privacy (DP), and Instahide.

For our defense, we set the default parameter as $\epsilon=0$, $\delta=1$,  $g\_value=1e-15$, learning rate $def\_lr=2e-1$, and steps $def\_ep=40$ unless otherwise specified, we also set $\alpha=0.25, 0.5, 1$ accordingly to observe its impact.

For other defenses, we optimize the parameters to achieve the best tradeoff between performance and privacy.
Specifically, we set the gradient compression rate to 99\% for GP, and we leverage Laplace noise and set the mean and variance of the noise distribution as $0$ and $0.1$ for DP, while following the default setting for Soteria and Instahide.

\begin{table*}[htbp]
\caption{Parameter setting for model performance evaluation presented as learning rate/decay.}
\label{parameter_setting_model}
\resizebox{\linewidth}{!}{
\begin{tabular}{|c|c|ccc|c|c|c|c|}
\hline
\multirow{2}{*}{\diagbox[]{Dataset}{Defense}} & \multirow{2}{*}{No Defense} & \multicolumn{3}{c|}{Our Defense}& \multirow{2}{*}{Soteria} & \multirow{2}{*}{GP} & \multirow{2}{*}{DP} & \multirow{2}{*}{Instahide} \\ \cline{3-5}
& & \multicolumn{1}{c|}{$\alpha=1$} & \multicolumn{1}{c|}{$\alpha=0.5$} & $\alpha=0.25$ & & & & \\ \hline 
KDD99 & 1e-2/0.9 & \multicolumn{1}{c|}{1.5e-2/0.9} & \multicolumn{1}{c|}{1.5e-2/0.9} & 1.5e-2/0.9 & 1e-2/0.9 & 3e-2/0.9 & 3e-2/0.9 & 3e-2/0.9 \\ 
Mirai & 3e-3/0.9 & \multicolumn{1}{c|}{3e-3/0.9}  & \multicolumn{1}{c|}{3e-3/0.9} & 3e-3/0.9 & 3e-3/0.9 & 3e-3/0.9 & 3e-2/0.9 & 3e-3/0.9 \\ 
CIC-IDS2017 & 1e-3/0.8 & \multicolumn{1}{c|}{1.5e-2/0.8} & \multicolumn{1}{c|}{1.5e-2/0.8} & 1.5e-2/0.8 & 1e-3/0.8 & 3e-3/0.9 & 3e-2/0.9 & 3e-3/0.9 \\ 
UNSW & 3e-3/0.8 & \multicolumn{1}{c|}{1.5e-2/0.9} & \multicolumn{1}{c|}{1.5e-2/0.9} & 1.5e-2/0.9 & 2e-3/0.9 & 3e-2/0.9  & 3e-2/0.9 & 3e-2/0.8 \\ \hline
\end{tabular}
}
\end{table*}

\begin{table*}[htbp]
\caption{Accuracy comparison on four datasets, higher is better.}
\label{fl_acc}
\resizebox{\linewidth}{!}
{\begin{tabular}{|c|c|ccc|c|c|c|c|}
\hline
\multirow{2}{*}{\diagbox[]{Dataset}{Defense}} & \multirow{2}{*}{No Defense} & \multicolumn{3}{c|}{Our Defense}& \multirow{2}{*}{Soteria} & \multirow{2}{*}{GP} & \multirow{2}{*}{DP} & \multirow{2}{*}{Instahide} \\ \cline{3-5}
& & \multicolumn{1}{c|}{$\alpha=1$} & \multicolumn{1}{c|}{$\alpha=0.5$} & $\alpha=0.25$ & & & & \\ \hline 
KDD99 & \textcolor{green}{0.996} & \multicolumn{1}{c|}{0.987$\pm$0.001} & \multicolumn{1}{c|}{0.986$\pm$0.001} & 0.985$\pm$0.002 & 0.994 & 0.989 & 0.985 & \textcolor{red}{0.950} \\ 
Mirai & \textcolor{green}{0.923} & \multicolumn{1}{c|}{0.922$\pm$0.000}  & \multicolumn{1}{c|}{0.922$\pm$0.000} & 0.921$\pm$0.000 & 0.923 & 0.922 & 0.922 & \textcolor{red}{0.920} \\ 
CIC-IDS2017 & \textcolor{green}{0.982} & \multicolumn{1}{c|}{0.971$\pm$0.002} & \multicolumn{1}{c|}{0.970$\pm$0.004} & 0.968$\pm$0.005 & 0.978 & 0.970 & 0.971 & \textcolor{red}{0.964} \\ 
UNSW-NB15 & \textcolor{green}{0.746} & \multicolumn{1}{c|}{0.720$\pm$0.008} & \multicolumn{1}{c|}{0.710$\pm$0.009} & 0.706$\pm$0.010 & 0.743 & 0.710  & 0.705 & \textcolor{red}{0.689} \\ \hline
\end{tabular}}
\end{table*}

\subsubsection{Evaluation metrics}
We leverage FL model accuracy to evaluate defense's utility.
For privacy guarantee, we derive privacy score (plus label reconstruction accuracy) and evasion rate as introduced in Section~\ref{threat_model}.

\subsection{Model Performance Results}
\label{convergence_results}

\textbf{Setup 1}.
We evaluate FL model accuracy as introduced in Section~\ref{experiment_setup} with Adam optimizer with learning decay.
We first divide the original datasets to derive training dataset and testing dataset by $7:3$, and each user can get certain amount of the training dataset under iid or non-iid setting.

We design a new non-iid data generation algorithm.
Specifically, each user can randomly get the same percent of the benign traffic, and then we randomly select $p>0$ attack types and distribute the same percent ($\frac{1}{10}$ in our case) of that specific malicious traffic.
To balance the result, we consider non-iid data distribution for multi-class KDD99 dataset, and iid data distribution for the other three datasets, where each user gets the same percent of the whole training dataset.
After dataset distribution, local users normalize their own datasets with maximum/minimum values.

We set the overall training rounds $T=300$, and each user updates the global model only once $local\_ep=1$ with batch size $local\_bs=1000$.
We follow the same parameter setting introduced in Section~\ref{experiment_setup} and we optimize the local training learning rate for different defenses and datasets.
Detailed parameters are illustrated in TABLE~\ref{parameter_setting_model}, where no defense is around 1e-2, ours is around 1.5e-2, and strong defense like DP requires higher learning rate as 3e-2, otherwise such great noises to the gradients can hardly help optimize the model.
Note that we also leverage learning rate decay to accelerate the model convergence.
Specifically, the learning rate $lr$ is updated every 20 rounds as $lr=lr \times decay$, therefore, $lr$ is getting smaller as the model is approaching convergence to avoid performance oscillation.

\textbf{Accuracy Analysis}.
We run the training for five times and report the best results for baselines and the average accuracy for our defense in TABLE \ref{fl_acc}.
It can be found that our defense with $\alpha=1$ achieves similar performance from the baseline with at most 2.6\% accuracy loss (from UNSW dataset) and the least deviation under iid or non-iid setting.
As $\alpha$ tends to get smaller, Eq.~(\ref{combined_optimization}) tends to optimize pseudo data's privacy more than pseudo gradients' utility, which is why FedDef with $\alpha=0.25$ achieves lower accuracy with greater deviation (0.01 at most).
Meanwhile, we also compare our defense with other baselines and find that Soteria achieves the highest accuracy with only 1\% loss at most while Instahide generally performs the worst.
We suspect that the reason for Instahide's poor performance is that it randomly flips certain signs of the mixed data to provide additional security protection in exchange of model performance.

In summary, the FL model can still converge and reach high accuracy combined with our defense, thanks to our constraints on gradient deviation.

\begin{figure*}
\setlength{\abovecaptionskip}{0pt}
\setlength{\belowcaptionskip}{0pt}
\centering
\subfigure{
\begin{minipage}{1\linewidth}
\includegraphics[width=1\linewidth]{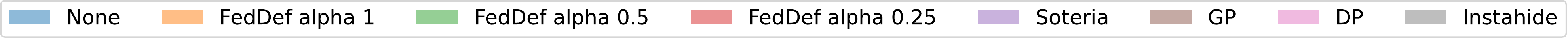}
\end{minipage}}
\setcounter{subfigure}{0}
\subfigure[KDD99]{
\begin{minipage}{0.45\linewidth}
\includegraphics[width=1\linewidth]{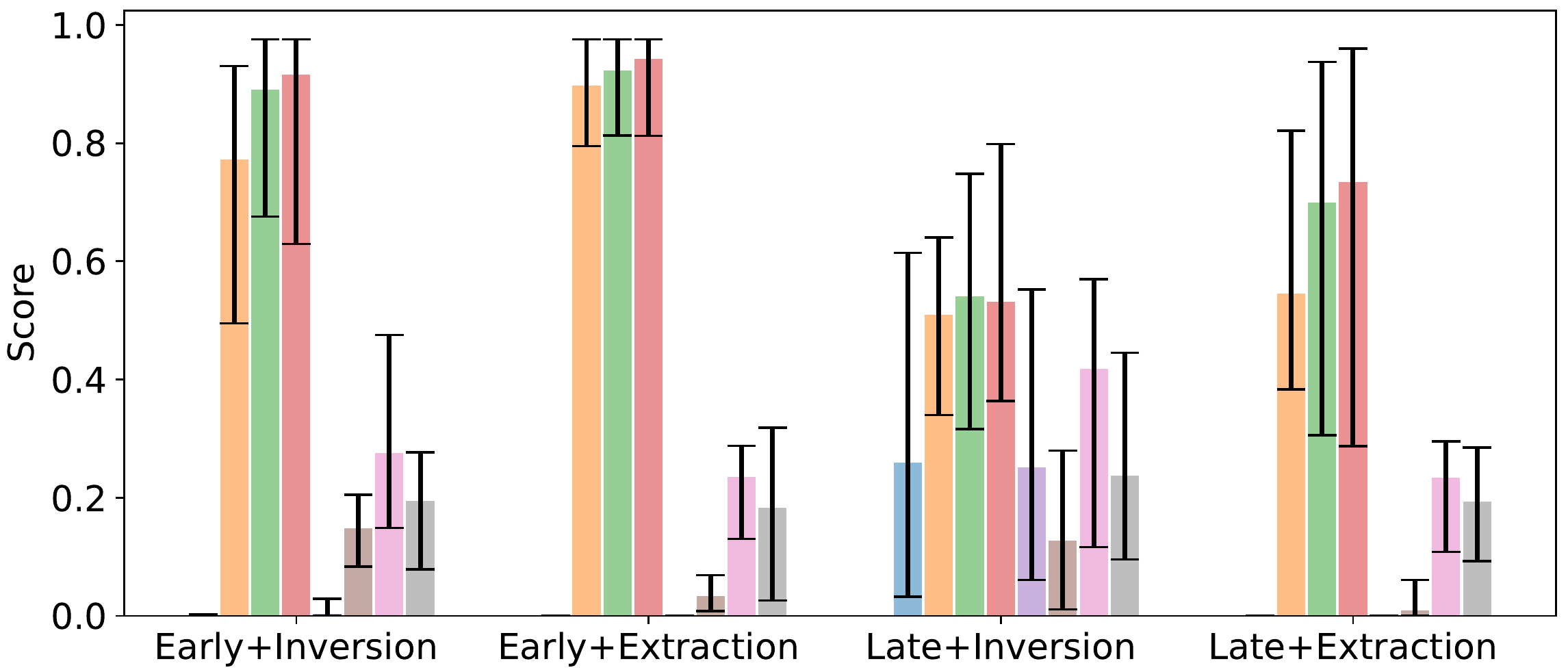}
\end{minipage}}
\subfigure[Mirai]{
\begin{minipage}{0.45\linewidth}
\includegraphics[width=1\linewidth]{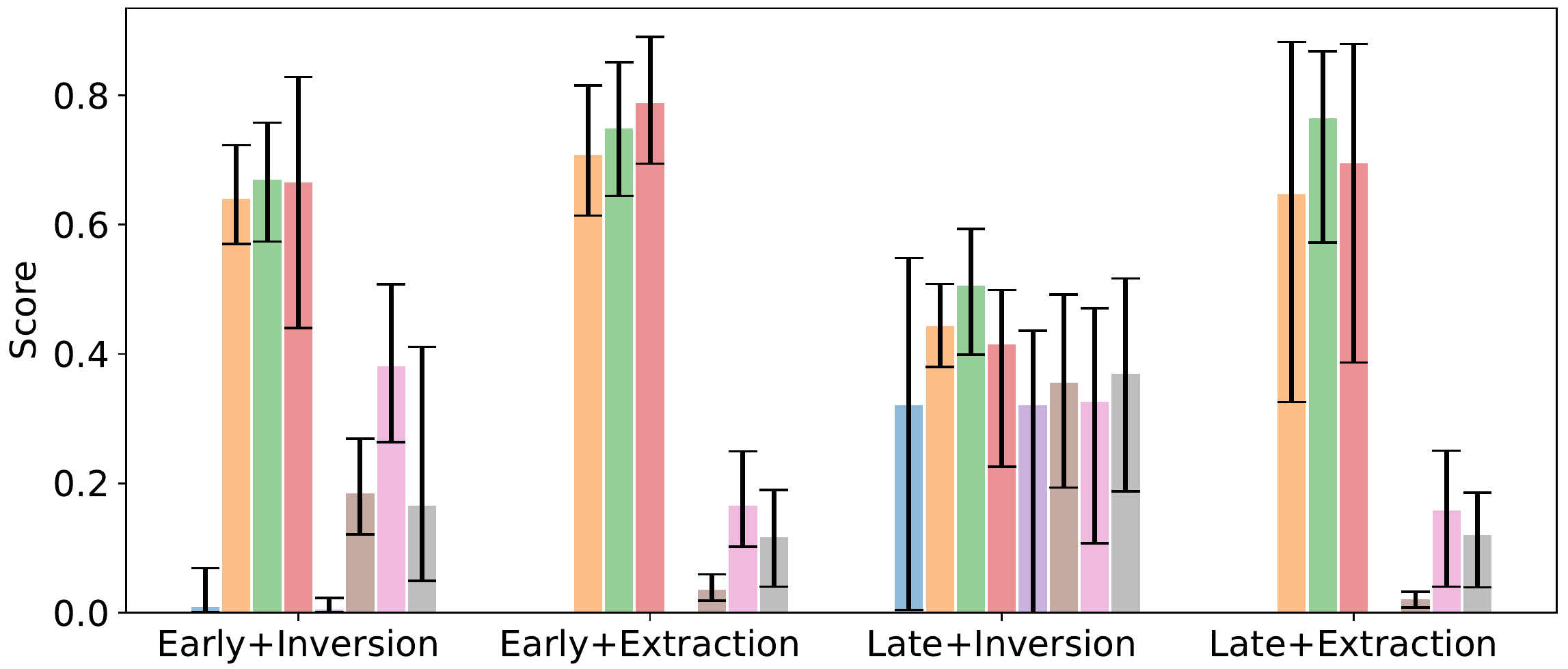}
\end{minipage}}
\subfigure[CIC-IDS2017]{
\begin{minipage}{0.45\linewidth}
\includegraphics[width=1\linewidth]{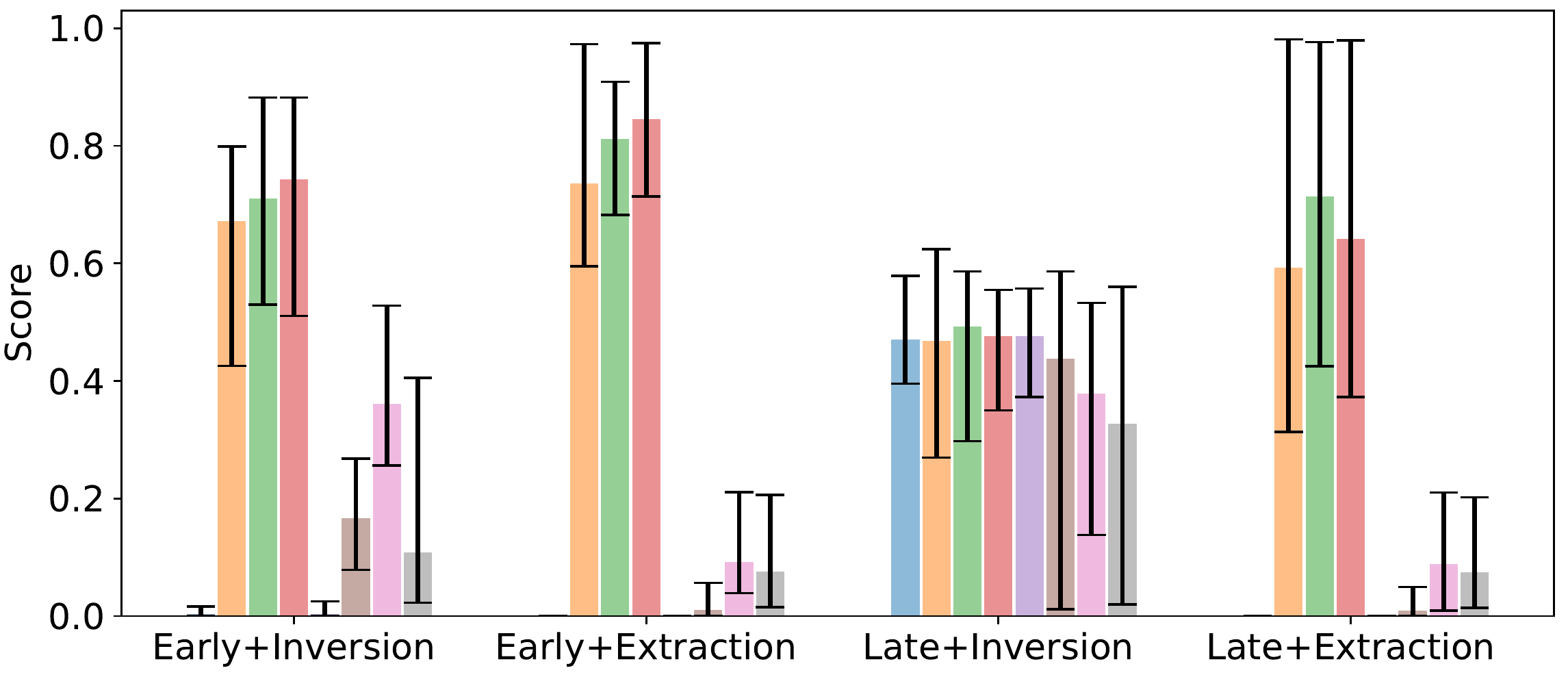}
\end{minipage}}
\subfigure[UNSW-NB15]{
\begin{minipage}{0.45\linewidth}
\includegraphics[width=1\linewidth]{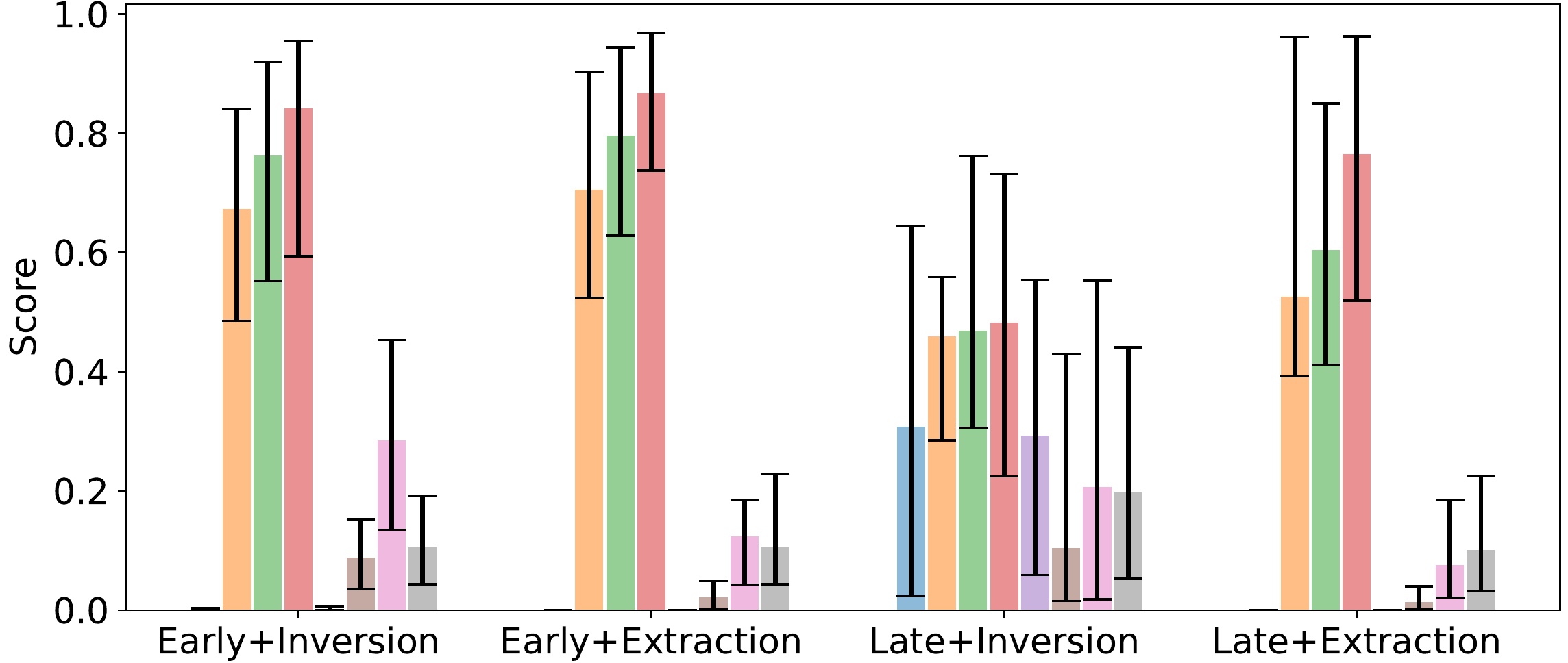}
\end{minipage}}
\caption{Reconstruction privacy score comparison on KDD99 and Mirai datasets, higher is better.}
\label{reconstruction_score_comparison}
\end{figure*}

\subsection{Model Privacy Results}
\label{privacy_results}

We evaluate model privacy with privacy score and evasion rate for different local training batches and training stages.

\subsubsection{Single sample reconstruction}
\label{aaaa}
We first consider local users updating model only once with only one training sample.
In this scenario, the adversary can launch both inversion attack and extraction attack.

\textbf{Setup 2}.
We evaluate privacy score with inversion attack using appropriate distance metric to optimize the reconstruction, i.e., we use cosine distance for KDD99 and CIC-IDS2017 datasets during late training stage and $L_2$ for the rest of the cases.
Note that extraction attack may fail due to calculation accuracy when gradients are small, and we will fall back to inversion attack when Eq.~(\ref{torch_extraction}) fails.
As introduced in Section~\ref{reconstrcution_attack}, we can only derive data from extraction attack, therefore, we leverage optimization-based inversion attack to acquire labels.
We follow the same parameter setting as in Section~\ref{experiment_setup}, and we set the overall iteration $T=100$, local batch size $local\_bs=1$, local step $local\_ep=1$ for both early and late stages.
Note that we use the randomly initialized model for early stage and the respective trained model in Section~\ref{convergence_results} for late training stage, we do not actually update the model to constantly evaluate privacy score.

\begin{table}[htbp]
\caption{Label reconstruction accuracy comparison on four datasets, and lower is better.}
\label{reconstruction_label_acc}
\resizebox{\linewidth}{!}
{\begin{tabular}{|c|c|c|ccc|c|c|c|c|}
\hline
\multirow{2}{*}{Data}        & \multirow{2}{*}{\diagbox[]{Stage}{Defense}} & \multirow{2}{*}{No Defense} & \multicolumn{3}{c|}{Our}                                       & \multirow{2}{*}{Soteria} & \multirow{2}{*}{GP} & \multirow{2}{*}{DP} & \multirow{2}{*}{Instahide} \\ \cline{4-6}
                             &                        &                             & \multicolumn{1}{c|}{$\alpha=1$} & \multicolumn{1}{c|}{$\alpha=0.5$} & $\alpha=0.25$ &                          &                     &                     &                            \\ \hline
\multirow{2}{*}{KDD99}       & Early                  & \textcolor{red}{1.00}                        & \multicolumn{1}{c|}{0.01}  & \multicolumn{1}{c|}{\textcolor{green}{0.00}} & 0.00  & 1.00                     & 1.00                & 0.99                & 0.37                       \\ 
                             & Late                   & 0.59                        & \multicolumn{1}{c|}{0.17}  & \multicolumn{1}{c|}{0.17} & \textcolor{green}{0.16}  & 0.75                     & \textcolor{red}{0.99}                & 0.29                & 0.22                       \\ \hline
\multirow{2}{*}{Mirai}       & Early                  & \textcolor{red}{1.00}                        & \multicolumn{1}{c|}{0.97}  & \multicolumn{1}{c|}{0.86} & 0.81  & 1.00                     & 1.00                & 1.00                & \textcolor{green}{0.52}                       \\  
                             & Late                   & 0.53                        & \multicolumn{1}{c|}{0.50}  & \multicolumn{1}{c|}{0.49} & \textcolor{green}{0.45}  & 0.58                     & \textcolor{red}{0.80}                & 0.68                & 0.48                       \\ \hline
\multirow{2}{*}{CIC-IDS2017} & Early                  & \textcolor{red}{1.00}                        & \multicolumn{1}{c|}{0.92}  & \multicolumn{1}{c|}{0.91} & 0.74  & 1.00                     & 1.00                & 1.00                & \textcolor{green}{0.53}                       \\ 
                             & Late                   & 0.52                        & \multicolumn{1}{c|}{0.48}  & \multicolumn{1}{c|}{0.45} & \textcolor{green}{0.44}  & \textcolor{red}{0.55}                     & 0.45                & 0.53                & 0.47                       \\ \hline
\multirow{2}{*}{UNSW-NB15}   & Early                  & \textcolor{red}{1.00}                        & \multicolumn{1}{c|}{0.06}  & \multicolumn{1}{c|}{0.01} & \textcolor{green}{0.00}  & 1.00                     & 1.00                & 1.00                & 0.26                       \\ 
                             & Late                   & 0.71                        & \multicolumn{1}{c|}{0.18}  & \multicolumn{1}{c|}{\textcolor{green}{0.06}} & 0.06  & 0.81                     & \textcolor{red}{0.98}                & 0.77                & 0.22                       \\ \hline
\end{tabular}}
\end{table}

\textbf{Privacy Score Analysis}.
The full results for average privacy score and reconstructed label accuracy over the $T$ samples are in Fig.~\ref{reconstruction_score_comparison} and TABLE~\ref{reconstruction_label_acc}.
During early training stage, the reconstruction attack proves excellent performance, where the privacy score can reach almost 0 without any defense and labels are also accurately extracted (ACC=1) among four datasets.
The baseline defenses provide some but limited protection, especially against more accurate extraction attack.

\begin{figure*}[htpb]
\setlength{\abovecaptionskip}{0pt}
\setlength{\belowcaptionskip}{0pt}
\centering
\subfigure{
\begin{minipage}{1\linewidth}
\includegraphics[width=1\linewidth]{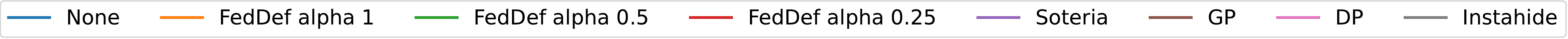}
\end{minipage}}
\setcounter{subfigure}{0}
\subfigure{
\begin{minipage}{0.23\linewidth}
\includegraphics[width=1\linewidth]{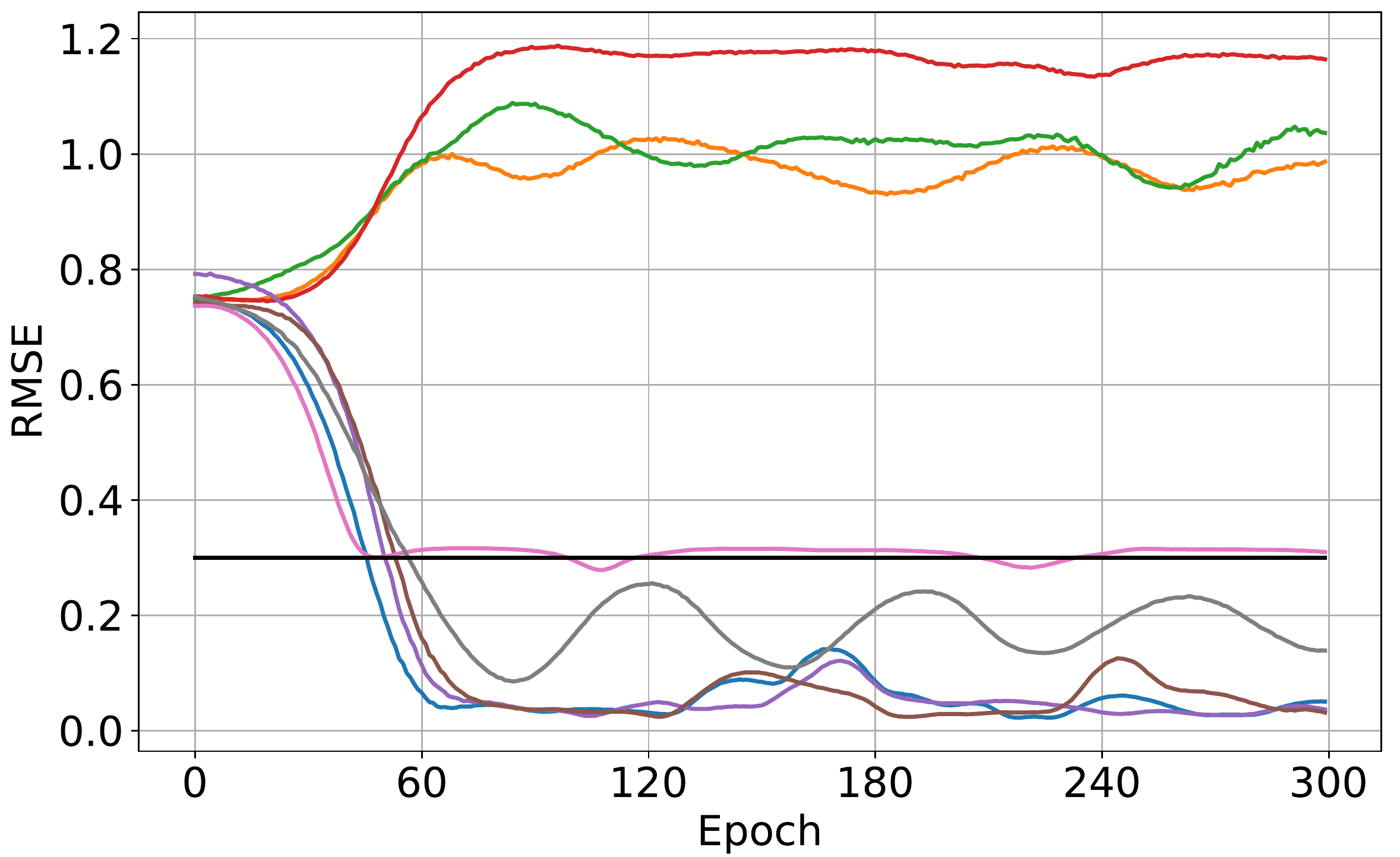}
\end{minipage}}
\subfigure{
\begin{minipage}{0.23\linewidth}
\includegraphics[width=1\linewidth]{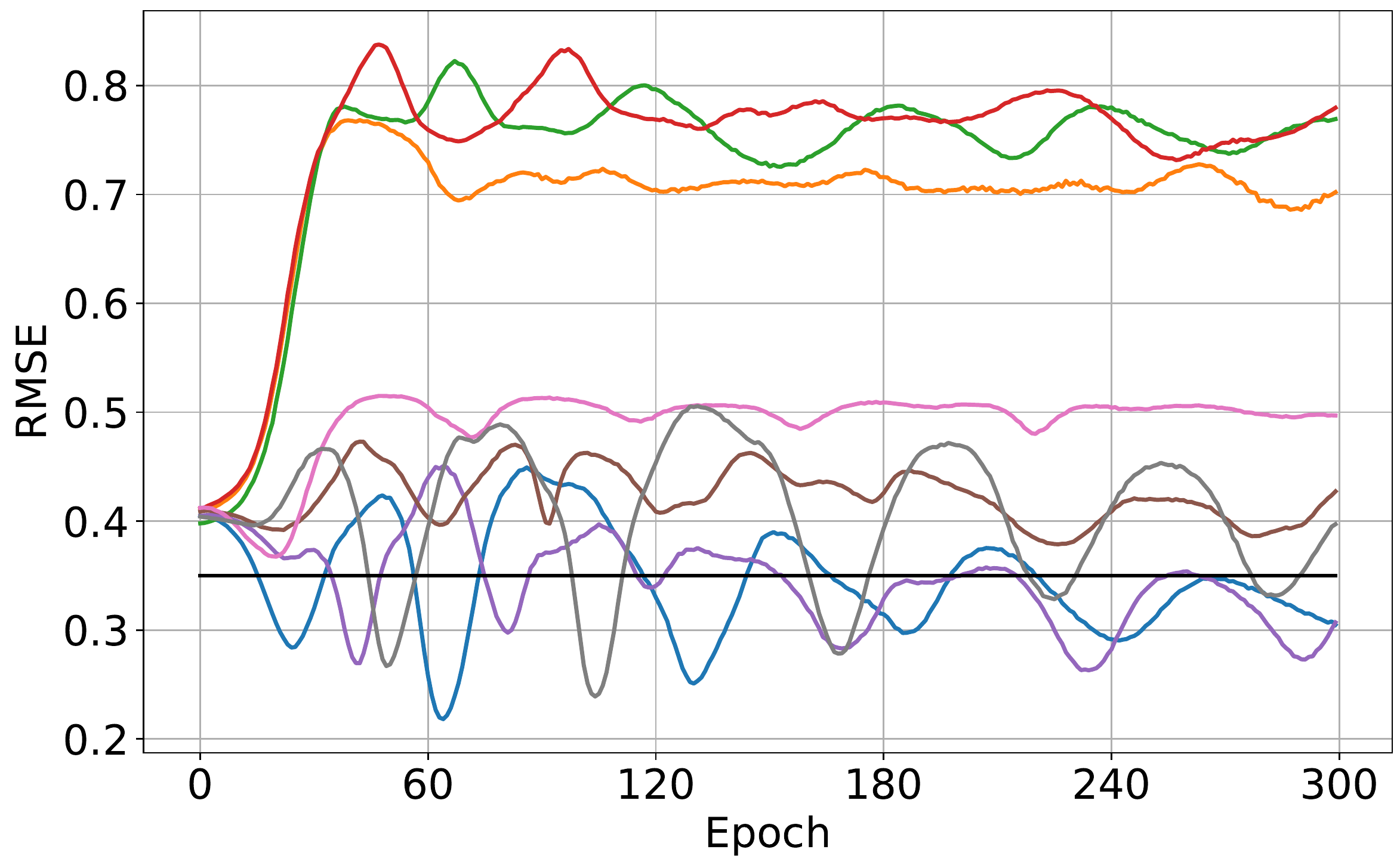}
\end{minipage}}
\subfigure{
\begin{minipage}{0.23\linewidth}
\includegraphics[width=1\linewidth]{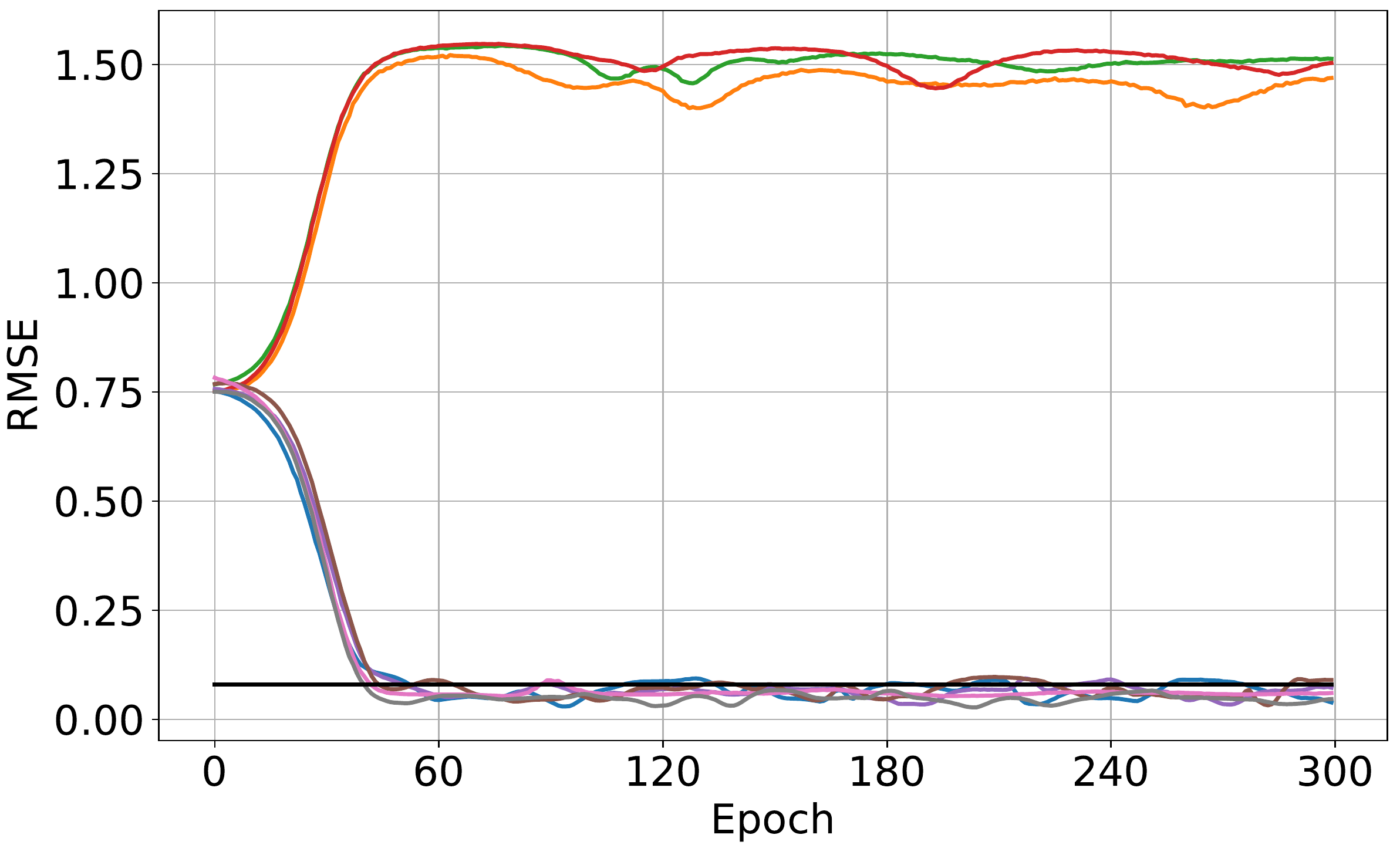}
\end{minipage}}
\subfigure{
\begin{minipage}{0.23\linewidth}
\includegraphics[width=1\linewidth]{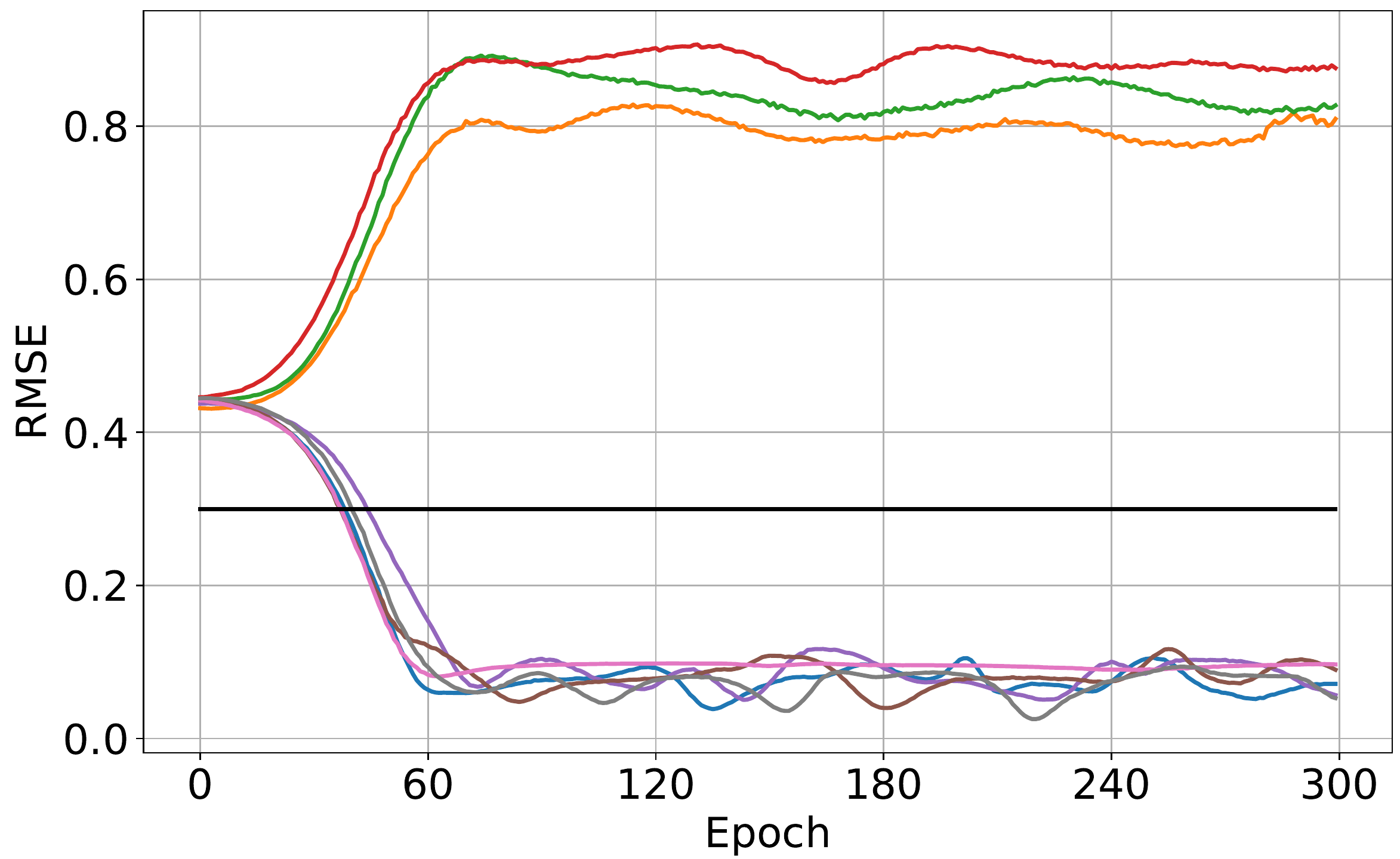}
\end{minipage}}
\setcounter{subfigure}{0}
\subfigure[KDD99]{
\begin{minipage}{0.23\linewidth}
\includegraphics[width=1\linewidth]{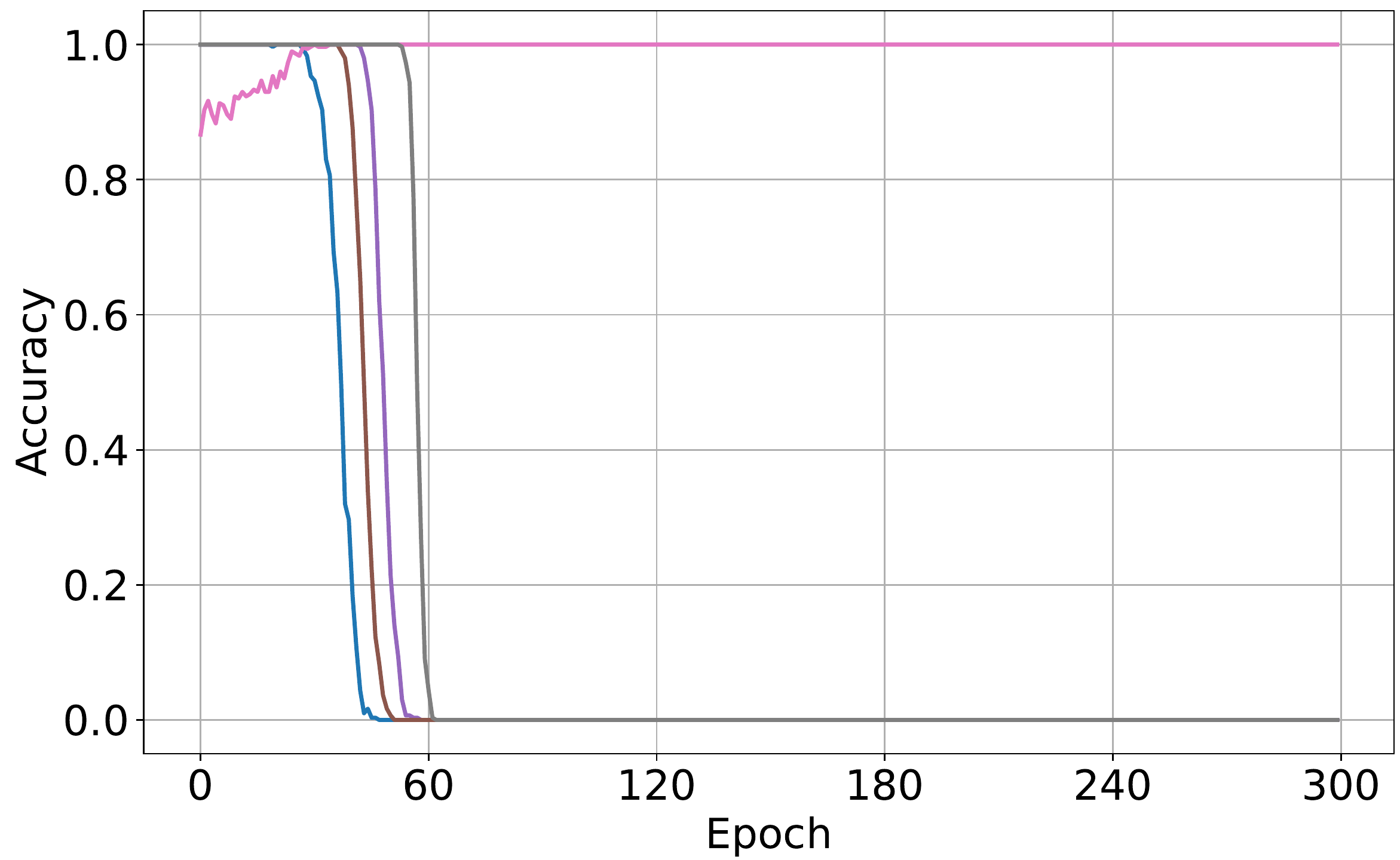}
\end{minipage}}
\subfigure[Mirai]{
\begin{minipage}{0.23\linewidth}
\includegraphics[width=1\linewidth]{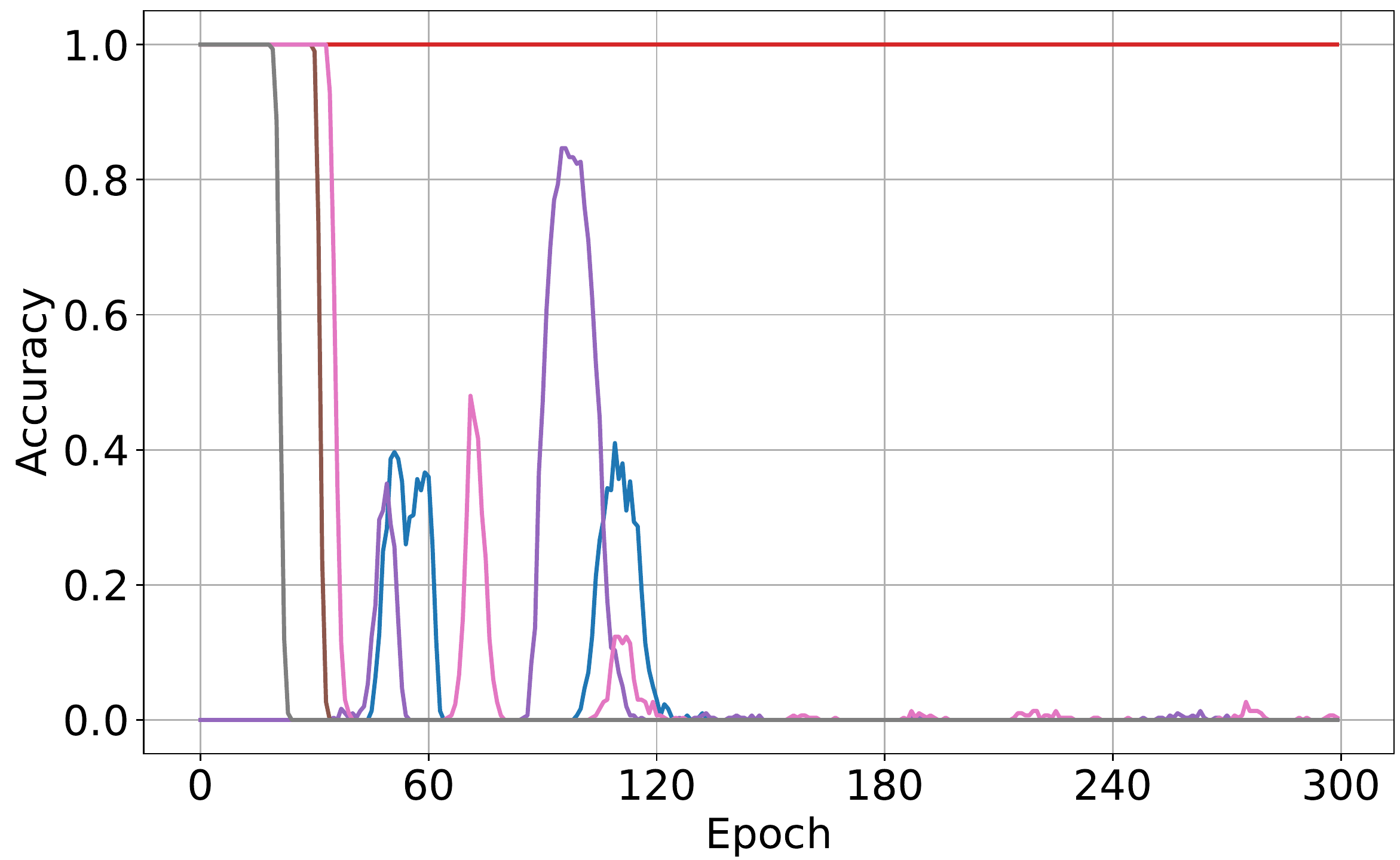}
\end{minipage}}
\subfigure[CIC-IDS2017]{
\begin{minipage}{0.23\linewidth}
\includegraphics[width=1\linewidth]{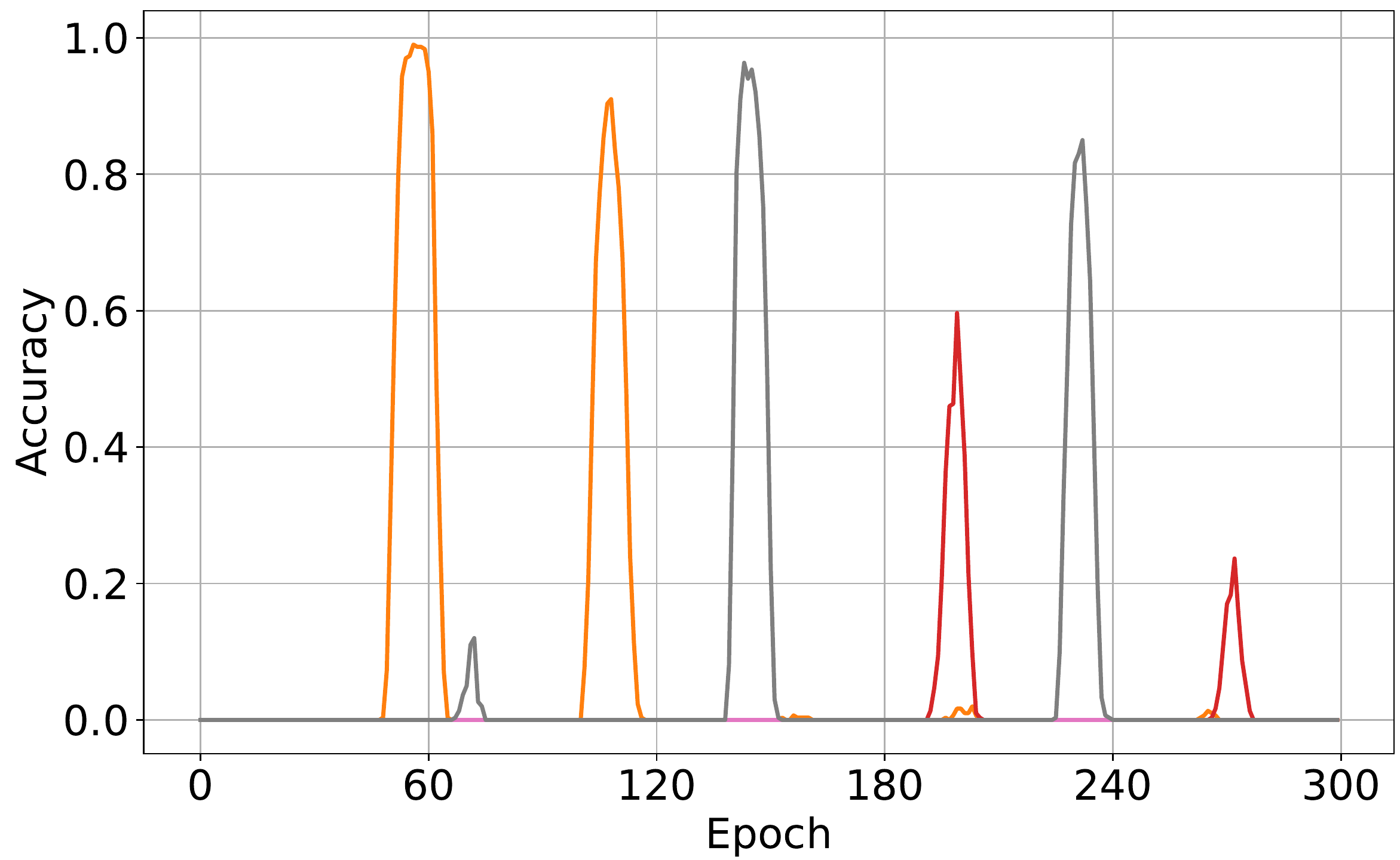}
\end{minipage}}
\subfigure[UNSW-NB15]{
\begin{minipage}{0.23\linewidth}
\includegraphics[width=1\linewidth]{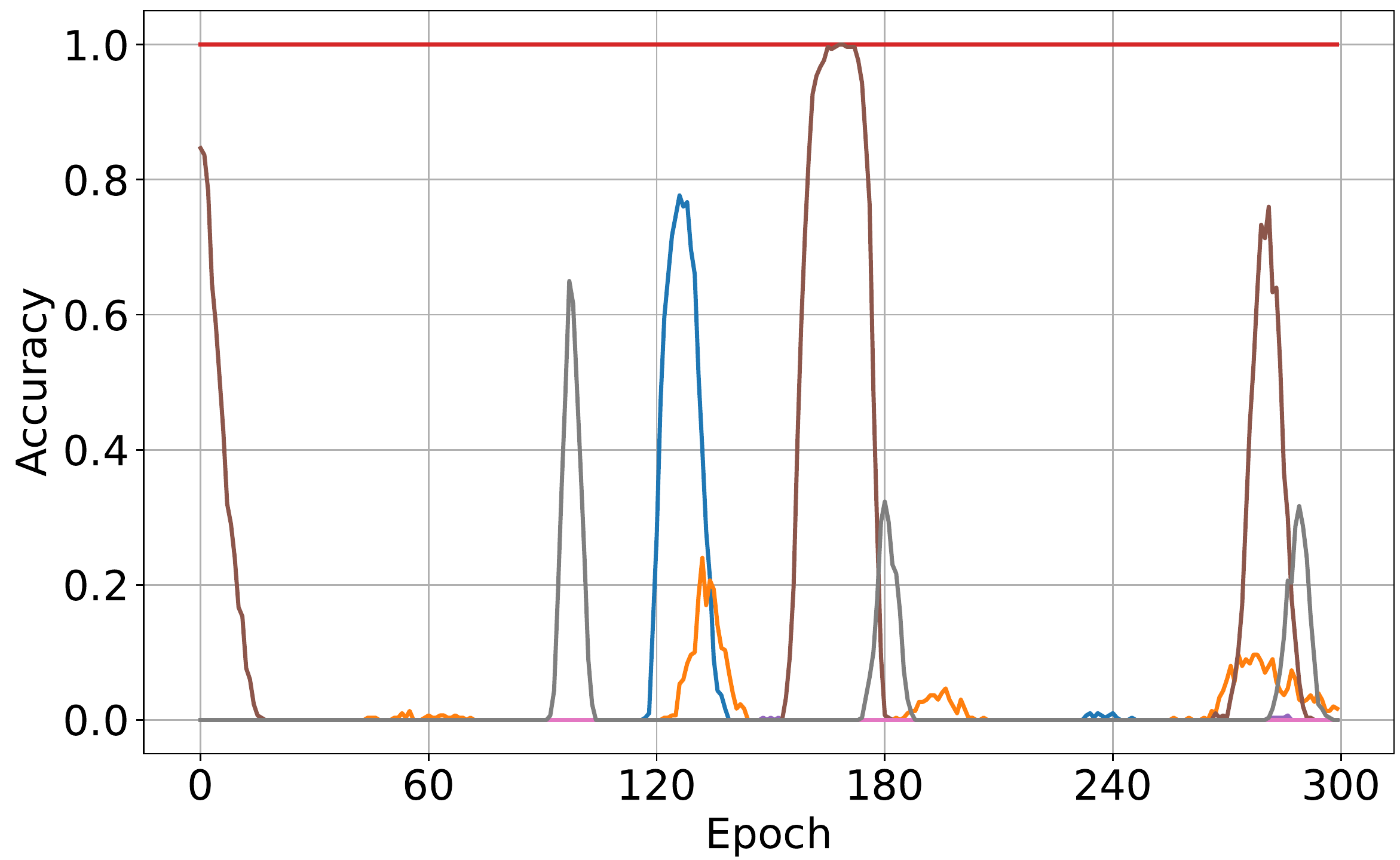}
\end{minipage}}
\caption{Black-box adversarial attack against two NIDSs with extraction attack for single sample reconstruction during early stage.
The first row is the average RMSE score change (higher is better) for Kitsune during the GAN training process, the black line represents the optimal threshold, and the second row represents the DNN accuracy change (lower is better).}
\label{gan_result}
\end{figure*}

\begin{figure}[htpb]
\setlength{\abovecaptionskip}{0pt}
\setlength{\belowcaptionskip}{0pt}
\setcounter{subfigure}{0}
\subfigure[Kitsune]{
\begin{minipage}{0.48\linewidth}
\includegraphics[width=1\linewidth]{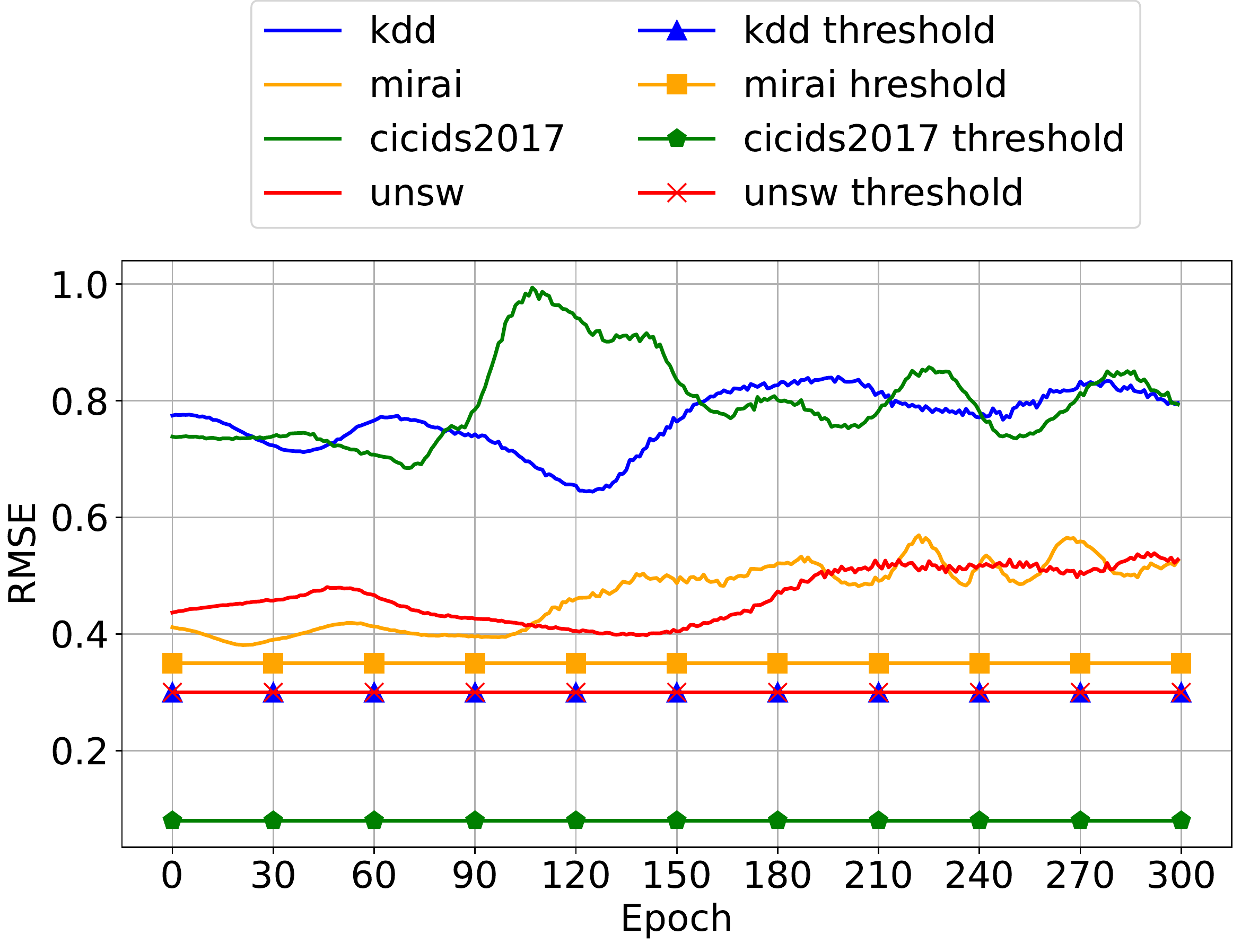}
\end{minipage}}
\subfigure[DNN]{
\begin{minipage}{0.48\linewidth}
\includegraphics[width=1\linewidth]{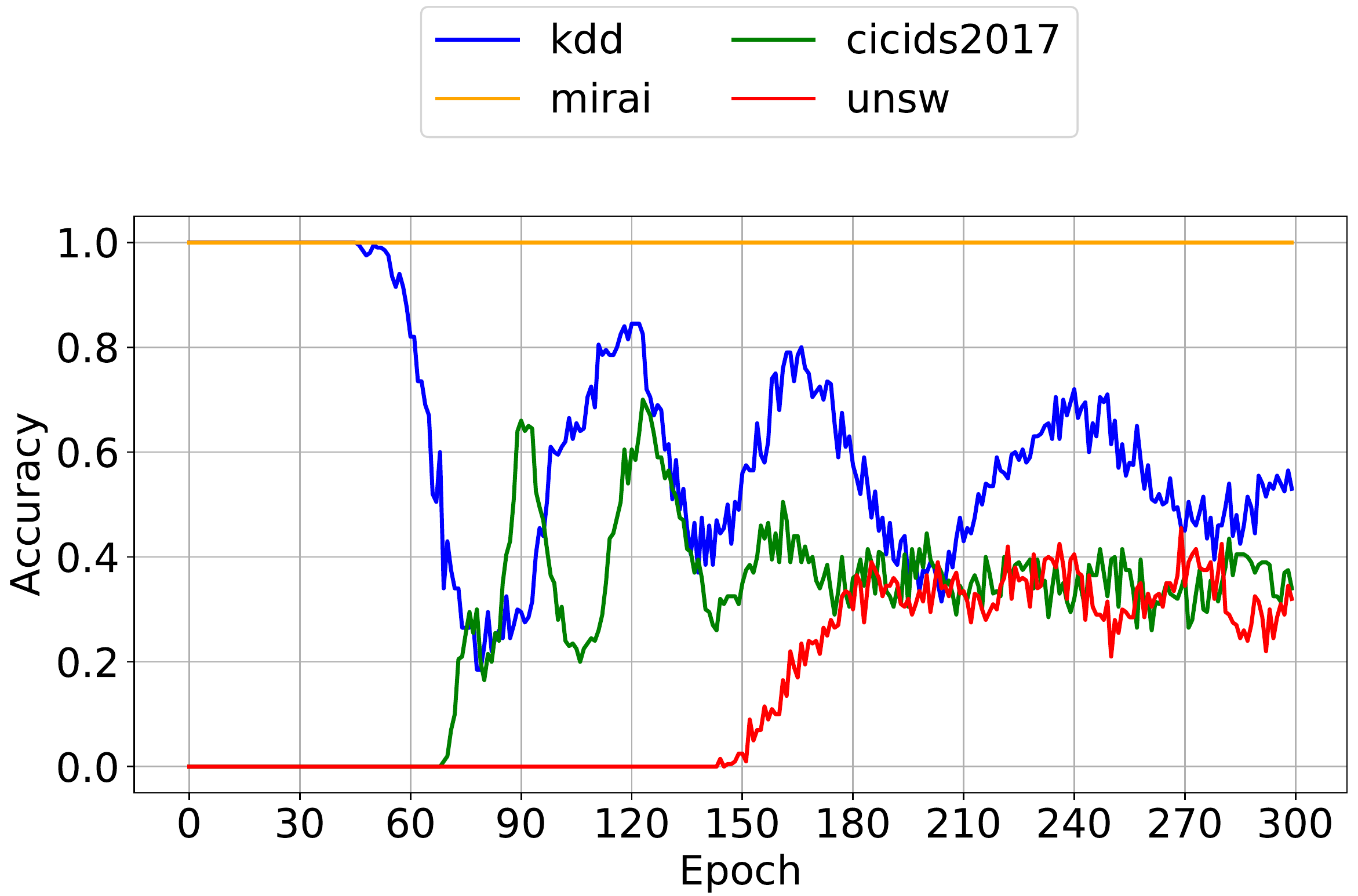}
\end{minipage}}
\caption{Black-box adversarial attack against two NIDSs with inversion attack for single sample against our defense with $\alpha=1$ during late stage.}
\label{gan_result_all_dataset}
\end{figure}

On the contrary, our defense outperforms other baselines and significantly mitigates such attacks in a way that the score is around 0.6-0.7 with $\alpha=1$, which is 1.5-7 times higher than the second best DP with privacy score around 0.1-0.4, and it tends to get even higher with smaller $\alpha$ and against stronger extraction attack. 
Our defense also prevents the label leakage especially for multi-class datasets with almost 0 label inversion accuracy compared to 0.26-1.00 for baselines, while it may perform worse for two-class datasets where labels are easier to obtain.
Overall, the privacy is well protected with little information leakage.
Interestingly, Soteria almost proves no privacy protection against either attack, we suspect it's because we don't leverage CNN as our feature extractor and that Soteria only perturbs the gradients of the defended layer while the rest still carry much private information.

During the late training stage, the shared gradients tend to get smaller, and such optimization-based attack performs worse because it's more difficult to optimize the dummy gradient to fit the original one.
As illustrated in Fig.~\ref{reconstruction_score_comparison} and TABLE~\ref{reconstruction_label_acc}, the privacy score using inversion attack varies from $0.2$ to $0.5$ even for gradients without defenses, and the label accuracy also drops from 1 to even 0.52 for CIC-IDS2017 dataset.
Nonetheless, our defense still generally outperforms other baselines where extraction attack performs with similar privacy score to that of early stage because matrix inversion is always accurate.
We also notice that smaller $\alpha$ may induce even lower score because the gradients are of small magnitude and thus may trigger falling back from extraction to inversion attack, which makes the reconstruction results more unstable.

\textbf{Setup 3}.
We first evaluate evasion rate ($ER$) using black-box attack as introduced in Section~\ref{GAN_attack}.
For early stage, we recover 100 reconstructed benign traffic using extraction attack to train GAN to generate 100 randomly initialized adversarial examples (AEs) against two NIDSs, i.e., Kitsune and the trained global DNN model with corresponding defenses.
While we repeat the process for late stage with inversion attack for our defense with $\alpha=1$ only since extraction attack presents similar privacy score.
For Kitsune setup, we train it with the four datasets and determine optimal thresholds respectively.
For the trained DNN-based FL model, we only consider either normal or malicious for evasion results.

\textbf{Black-box Adversarial Attack}.
The evasion results for early and late stages are in Fig.~\ref{gan_result} and Fig.~\ref{gan_result_all_dataset}.
For early stage, we can find that almost all baselines fail to prevent such adversarial attack ($ER=1$ for RMSEs lower than threshold) except for Mirai dataset, where threshold is more strict and DP may be sufficient with strong privacy guarantee, while our defense consistently outperforms baselines that the curve under FedDef doesn't converge with higher RMSE around 0.8-1.5, which is 2-15 times higher than threshold and thus AEs fail to evade Kitsune ($ER=0$).
However,
it's easier to evade the target DNN model ($ER=1$ when $ACC_{DNN}=0$) even with our defense for CIC-IDS2017 dataset.
It's because we also leverage DNN model to train GAN, therefore, evading discriminator also means likely evasion on DNN model.

While in late training stage, inversion attack can be quite unstable thus the recovered data approaches random guess, which is why RMSE is similar to that of randomly initialized data (RMSE is still higher than threshold and $ER=0$), and accuracy (also for $ER$) is around 0.5.

\subsubsection{Batched samples reconstruction}
We study a more practical scenario with multi-sample reconstruction.

\textbf{Setup 4}.
We evaluate $ER$ using inversion attack to reconstruct data and labels because we proved in Section \ref{threat_model} that extraction attack is only effective when batch size is 1.
Note that we don't consider the specific privacy score because the reconstructed data may have different permutations and it's hard to correspond them to the ground-truth data and label.
Instead, we directly apply the data with benign labels to train the GAN model for black-box attack scenario.

\begin{figure}[!h]
\setlength{\abovecaptionskip}{0pt}
\setlength{\belowcaptionskip}{0pt}
\centering
\subfigure{
\includegraphics[width=1\linewidth]{figures/sample2.pdf}
}
\setcounter{subfigure}{0}
\subfigure[Batch Size=5]{
\includegraphics[width=0.45\linewidth]{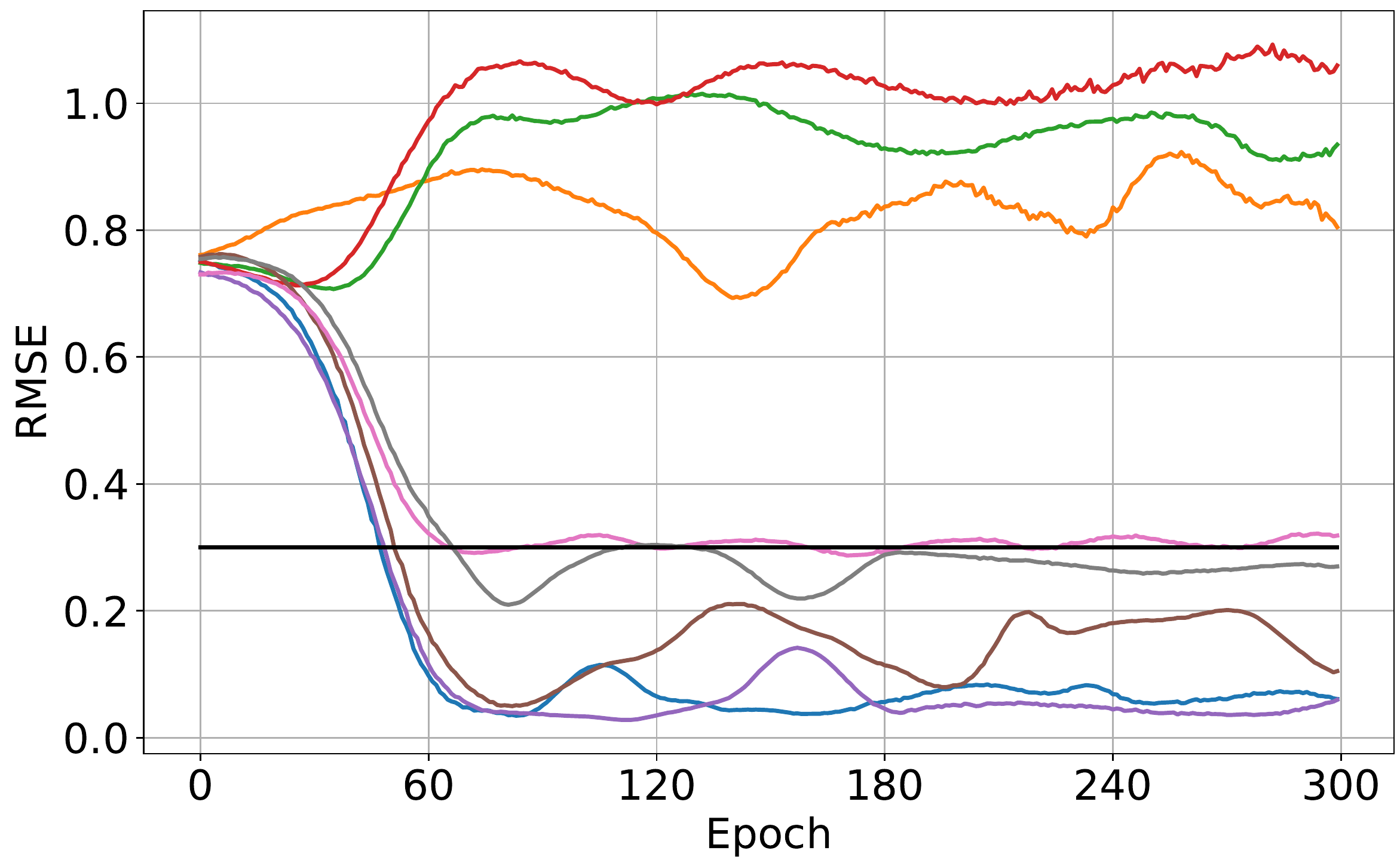}
}
\subfigure[Batch Size=10]{
\includegraphics[width=0.45\linewidth]{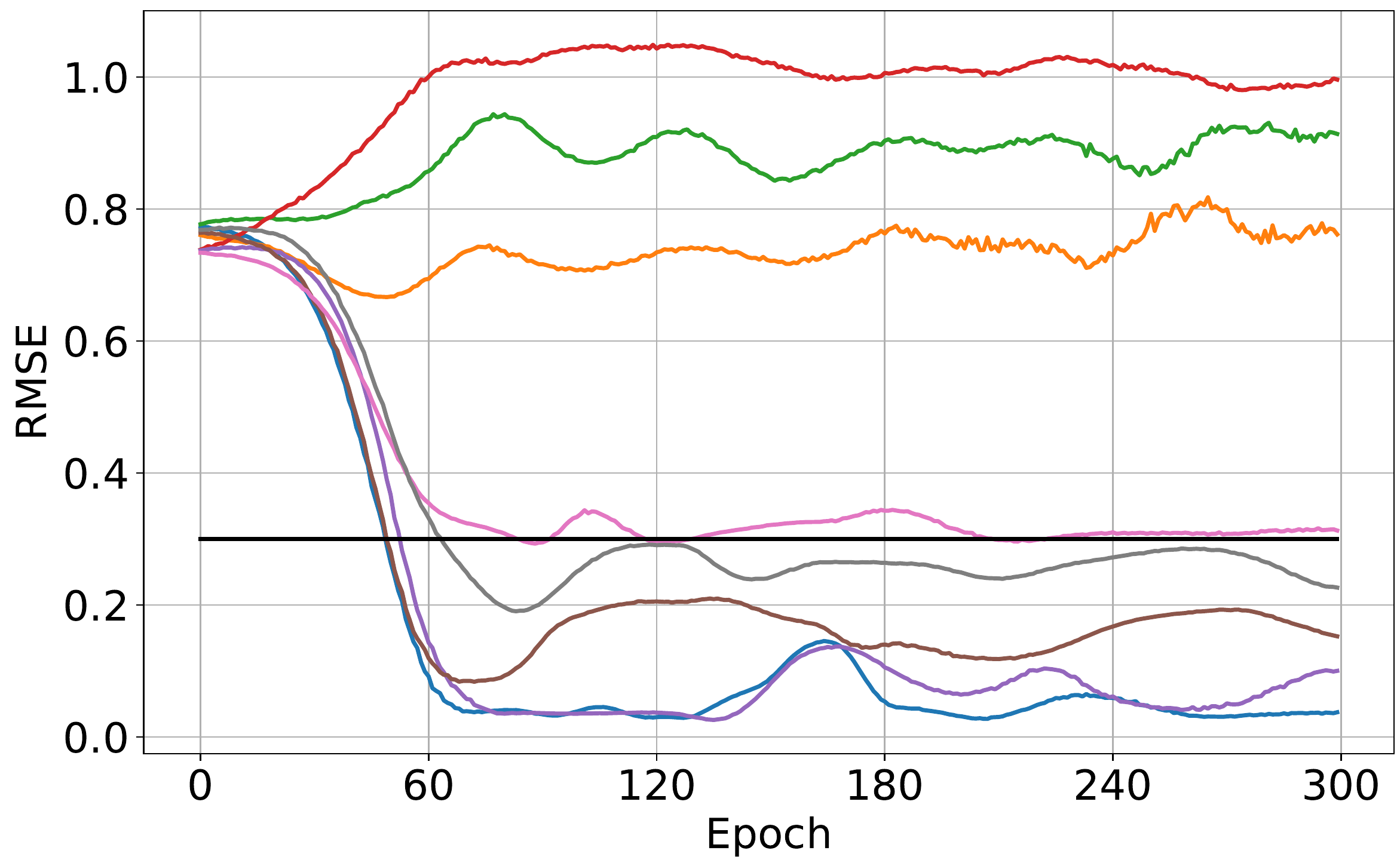}
}
\caption{Black-box adversarial attack for KDD99 dataset with inversion attack against Kitsune during early stage when batch size$=5/10$.}
\label{gan_result_bs}
\end{figure}

\textbf{Black-box Adversarial Attack}.
The representative results for KDD99 dataset with batch size 5 and 10 are illustrated in Fig.~\ref{gan_result_bs}.
We can observe that with larger batch size, the inversion attack may perform worse and therefore may slightly degrade our defense (RMSE is around 0.8-1.0 compared to 1.0-1.2 when batch size=1), yet the overall results are similar to that of single sample reconstruction ($ER=0$ for FedDef and $ER=1$ for baselines mostly), which further demonstrates the threats in practical training.

\subsubsection{\textbf{White-box adversarial attack}}
To further evaluate the privacy leakage in reconstructed data, we also conduct several white-box adversarial attacks against the two NIDSs.

\textbf{Setup 5}.
Recall that for GAN-based black-box attack, we recover benign traffic so as to generate malicious traffic from random examples.
While in white-box attack scenario, we directly reconstruct malicious training data and perform different adversarial attacks on the samples themselves to evade the target model.

Specifically, we recover 100 malicious traffic (based on the reconstructed label) using inversion attack per sample in early stage for different defenses.
Then we perform the following attacks to generate adversarial examples:

$\bullet$
FGSM \cite{goodfellow2014explaining}
utilizes the sign of the gradient of the cross-entropy (CE) loss associated with the target label to obtain perturbation.

$\bullet$
CW \cite{carlini2017towards}
generates adversarial perturbations by solving a norm-restricted constrained optimization problem, where the loss has in contrast to the CE loss a direct interpretation in terms of the decision of the classifier.

$\bullet$
PGD \cite{madry2017towards} is an extension of the FGSM.
It initializes the attack at a random point in the $L_p$ ball constraint and projects the perturbation back onto the $L_p$ ball after every iteration.

$\bullet$
DeepFool \cite{moosavi2016deepfool} is an untargeted attack under the assumption that deep learning models are linear, with a decision boundary (i.e., hyperplanes) separating each class.
On every iteration, DeepFool linearizes the classifier around the current point $x$ and computes the perturbations as an orthogonal projection vector that projects $x$ onto the closest hyperplane.

$\bullet$
AutoPGD \cite{croce2020reliable} is a budget-aware step size-free variant of PGD that induces failures due to suboptimal step size and the objective function.
Instead it automatically adjusts the step size and chooses Difference of Logits Ratio (DLR) loss, which is both shift and rescaling invariant, and thus has the same degrees of freedom as the decision of the classifier.

For implementaion, we leverage the open-source package torchattacks \cite{kim2020torchattacks}, which provides unified API for different adversarial attacks.
For attack initialization, we present some key parameters for each attack in TABLE \ref{white-box attack}.
Note that for CW attack, we leverage parameter $c$ to adjust the weights for distance optimization.
For DeepFool, the perturbation is usually too large to be valid, therefore, we manually add $\epsilon$ parameter to constrain the maximum perturbation on the final output to ensure practical attack.
For other parameters, we follow the default setting in torchattacks for each attack.

\begin{table}[htbp]
\caption{White-box attack parameter setting.}
\label{white-box attack}
\resizebox{\linewidth}{!}
{\begin{tabular}{|c|c|c|}
\hline
Attack   & Key Parameter  & Description \\ \hline
FGSM     & $\epsilon$            & maximum perturbation           \\ \hline
CW       & $c$              & constraint for $L_2$ distance           \\ \hline
PGD      & $\epsilon$, $\alpha$, $step$ & maximum perturbation, single perturbation, attack step size           \\ \hline
DeepFool & $\epsilon$            & maximum perturbation          \\ \hline
AutoPGD  & $\epsilon$            & maximum perturbation           \\ \hline
\end{tabular}}
\end{table}

For the trained DNN model, we apply the 5 attacks mentioned above in targeted mode by labels where malicious traffic are projected into benign class, except for untargeted DeepFool that perturbs the traffic in the direction of the closest label class.
Specifically, we empirically set $\epsilon=40/255, \alpha=6/255$ for KDD99, Mirai, UNSW datasets, $\epsilon=4/255, \alpha=2/255$ for CIC2017 dataset, and $c=0.01, step=100$ for all scenarios.

For the unsupervised Kitsune model, the output is the anomaly score instead of possibility vector.
Therefore, DeepFool and AutoPGD are unavailable because DeepFool requires the label classification of the adversarial examples to compute the perturbation towards the closest class.
While AutoPGD also needs the output of DNN vector to obtain DLR loss and update the step size.
Therefore, we only adapt FGSM, CW, and PGD attacks, where we change the CE loss to the anomaly score output from Kitsune.
In this way, we can minimize the score to fool the model to classify the adversarial examples as benign as long as the score is lower than the threshold.

In this scenario, we can conduct ablation study on the key parameters since anomaly score can convey direct impact of the parameter on the adversarial attacks.
Specifically, we choose $c=1e-2, 1e-3, 1e-4$ for CW attack and $\epsilon=10/255, 40/255, 80/255$ for PGD attack, $\epsilon=40/255$ is constant for FGSM attack since it's a simple version of PGD.

\begin{table*}[htbp]
\caption{White-box adversarial attack against DNN model. We present as (model accuracy)/($L_2$ perturbation distance) as below, higher is better from defenders' view.}
\label{white-dnn}
\resizebox{\linewidth}{!}
{\begin{tabular}{|c|c|c|ccc|c|c|c|c|}
\hline
\multirow{2}{*}{Datasets}    & \multirow{2}{*}{\diagbox[]{Attack}{Defense}} & \multirow{2}{*}{No Defense} & \multicolumn{3}{c|}{FedDef}                              & \multirow{2}{*}{Soteria} & \multirow{2}{*}{GP} & \multirow{2}{*}{DP} & \multirow{2}{*}{Instanhide} \\ \cline{4-6}
                             &                         &                             & \multicolumn{1}{c|}{$\alpha=1$} & \multicolumn{1}{c|}{$\alpha=0.5$} & $\alpha=0.25$ &                          &                     &                     &                             \\ \hline
\multirow{5}{*}{KDD99}       & FGSM                    & \textcolor{red}{6\%}/0.12                           & \multicolumn{1}{c|}{62\%/0.12} & \multicolumn{1}{c|}{\textcolor{green}{100\%}/0.13}   & 100\%/0.12    & 25\%/0.11                        & 24\%/0.13                   & 9\%/0.14                   & 52\%/0.13                           \\ 
                             & CW                      & 24\%/0.10                           & \multicolumn{1}{c|}{63\%/0.08} & \multicolumn{1}{c|}{\textcolor{green}{100\%}/0.00}   & 100\%/0.00    & 42\%/0.08                        & \textcolor{red}{0\%}/0.09                   & 1\%/0.09                   & 50\%/0.06                           \\  
                             & PGD                     & \textcolor{red}{0\%}/0.11                           & \multicolumn{1}{c|}{57\%/0.11} & \multicolumn{1}{c|}{\textcolor{green}{100\%}/0.12}   & 100\%/0.09    & 22\%/0.11                        & 24\%/0.12                   & 6\%/0.13                   & 47\%/0.12                           \\  
                             & DeepFool                & \textcolor{red}{7\%}/0.11                           & \multicolumn{1}{c|}{64\%/0.10} & \multicolumn{1}{c|}{\textcolor{green}{100\%}/0.02}   & 100\%/0.02    & 27\%/0.11                        & 28\%/0.10                   & 23\%/0.06                   & 92\%/0.01                           \\  
                             & AutoPGD                 & \textcolor{red}{1\%}/0.12                           & \multicolumn{1}{c|}{61\%/0.11} & \multicolumn{1}{c|}{\textcolor{green}{100\%}/0.08}   & 100\%/0.09    & 22\%/0.10                        & 24\%/0.13                   & 7\%/0.14                   & 81\%/0.09                           \\ \hline
\multirow{5}{*}{Mirai}       & FGSM                    & \textcolor{red}{0\%}/0.14                           & \multicolumn{1}{c|}{75\%/0.11} & \multicolumn{1}{c|}{\textcolor{green}{100\%}/0.12}   & 100\%/0.11    & 27\%/0.14                        & 64\%/0.13                   & 56\%/0.13                   & 50\%/0.14                           \\   
                             & CW                      & 31\%/0.03                           & \multicolumn{1}{c|}{91\%/0.01} & \multicolumn{1}{c|}{\textcolor{green}{100\%}/0.00}   & 100\%/0.00    & 67\%/0.04                        & \textcolor{red}{0\%}/0.17                   & 2\%/0.16                   & 12\%/0.11                           \\  
                             & PGD                     & \textcolor{red}{0\%}/0.13                           & \multicolumn{1}{c|}{75\%/0.11} & \multicolumn{1}{c|}{\textcolor{green}{100\%}/0.12}   & 100\%/0.11    & 24\%/0.14                        & 64\%/0.13                   & 47\%/0.12                   & 44\%/0.13                           \\  
                             & DeepFool                & \textcolor{red}{0\%}/0.12                           & \multicolumn{1}{c|}{75\%/0.11} & \multicolumn{1}{c|}{\textcolor{green}{100\%}/0.12}   & 100\%/0.11    & 28\%/0.13                        & 90\%/0.11                   & 66\%/0.11                   & 95\%/0.06                           \\  
                             & AutoPGD                 & \textcolor{red}{0\%}/0.13                           & \multicolumn{1}{c|}{75\%/0.08} & \multicolumn{1}{c|}{\textcolor{green}{100\%}/0.07}   & 100\%/0.09    & 24\%/0.13                        & 63\%/0.10                   & 47\%/0.11                   & 65\%/0.11                           \\ \hline
\multirow{5}{*}{CIC-IDS2017} & FGSM                    & \textcolor{red}{0\%}/0.01                           & \multicolumn{1}{c|}{74\%/0.01} & \multicolumn{1}{c|}{\textcolor{green}{100\%}/0.01}   & 100\%/0.01    & 0\%/0.01                        & 0\%/0.01                   & 2\%/0.01                   & 20\%/0.02                           \\  
                             & CW                      & \textcolor{red}{0\%}/0.00                           & \multicolumn{1}{c|}{69\%/0.01} & \multicolumn{1}{c|}{\textcolor{green}{100\%}/0.00}   & 100\%/0.00    & 0\%/0.00                        & 0\%/0.01                   & 0\%/0.01                   & 0\%/0.03                           \\  
                             & PGD                     & \textcolor{red}{0\%}/0.01                           & \multicolumn{1}{c|}{74\%/0.01} & \multicolumn{1}{c|}{\textcolor{green}{100\%}/0.01}   & 100\%/0.01    & 0\%/0.01                        & 0\%/0.01                   & 2\%/0.01                   & 20\%/0.01                           \\  
                             & DeepFool                & \textcolor{red}{0\%}/0.00                           & \multicolumn{1}{c|}{74\%/0.01} & \multicolumn{1}{c|}{\textcolor{green}{100\%}/0.00}   & 100\%/0.00    & 0\%/0.00                        & 4\%/0.00                   & 2\%/0.00                   & 33\%/0.00                           \\  
                             & AutoPGD                 & \textcolor{red}{0\%}/0.00                           & \multicolumn{1}{c|}{74\%/0.01} & \multicolumn{1}{c|}{\textcolor{green}{100\%}/0.01}   & 100\%/0.01    & 0\%/0.00                        & 0\%/0.01                   & 2\%/0.01                   & 30\%/0.01                           \\ \hline
\multirow{5}{*}{UNSW-NB15}   & FGSM                    & 18\%/0.12                           & \multicolumn{1}{c|}{71\%/0.13} & \multicolumn{1}{c|}{91\%/0.12}   & \textcolor{green}{100\%}/0.10    & 10\%/0.12                        & \textcolor{red}{0\%}/0.13                   & 3\%/0.13                   & 2\%/0.14                           \\  
                             & CW                      & 1\%/0.09                           & \multicolumn{1}{c|}{75\%/0.02} & \multicolumn{1}{c|}{\textcolor{green}{100\%}/0.05}   & 100\%/0.00    & \textcolor{red}{0\%}/0.08                        & 0\%/0.03                   & 1\%/0.05                   & 0\%/0.04                           \\  
                             & PGD                     & \textcolor{red}{0\%}/0.12                           & \multicolumn{1}{c|}{65\%/0.13} & \multicolumn{1}{c|}{91\%/0.11}   & \textcolor{green}{100\%}/0.09    & 0\%/0.12                        & 0\%/0.12                   & 0\%/0.13                   & 2\%/0.13                           \\  
                             & DeepFool                & 46\%/0.07                           & \multicolumn{1}{c|}{90\%/0.03} & \multicolumn{1}{c|}{\textcolor{green}{100\%}/0.03}   & 100\%/0.02    & 44\%/0.07                        & 32\%/0.01                   & 37\%/0.02                   & \textcolor{red}{22\%}/0.00                           \\  
                             & AutoPGD                 & \textcolor{red}{0\%}/0.12                           & \multicolumn{1}{c|}{79\%/0.10} & \multicolumn{1}{c|}{91\%/0.10}   & \textcolor{green}{100\%}/0.10    & 0\%/0.12                        & 0\%/0.13                   & 1\%/0.14                   & 11\%/0.12                           \\ \hline
\end{tabular}}
\end{table*}

\textbf{DNN Analysis.}
TABLE~\ref{white-dnn} illustrate the results of white-box adversarial attacks against DNN model.
We present the classification accuracy and the perturbation distance of the adversarial examples under different attacks and defenses. 
Higher accuracy means better defense because the reconstructed traffic are further from the original distribution of the user training data, therefore, limited perturbation is not enough to evade the target model.
The evasion rate is consistent with the definition in Section \ref{GAN_attack} that $ER=1-ACC_{DNN}$.

From defenders' view, we can conclude that our defense FedDef achieves the highest accuracy for most of the cases when $\alpha=1$ (57\%-91\%) and all the cases when $\alpha=0.5, 0.25$ (91\%-100\%).
This corresponds with the fact that lower $\alpha$ means better privacy optimization, therefore, the reconstructed malicious traffic carry less information and can be harder to be perturbed.
While Instahide also performs well with high accuracy, which means that mixed training data can already achieve moderate privacy guarantee by incorporating different label classes.
Soteria and GP still perform worse since the reconstructed data are just similar to the baseline defense.

From attackers' view, PGD attack generally works better with lower accuracy (also higher evasion rate) with reasonable perturbation, followed by FGSM, CW, AutoPGD, and DeepFool.
This illustrates that DLR loss in AutoPGD may not be an optimal choice compared with traditional CE loss in this case.
While DeepFool only targets the nearest label class instead of benign class and thus cannot work as well as other targeted attacks.

In some scenarios, CW attack works better than PGD with higher accuracy and lower perturbation range (e.g., 12\% vs. 44\% for Instahide defense for Mirai dataset), thanks to the optimization of both perturbation distance and evasion rate.
During the experiments, some adversarial examples are not perturbed at all (distance=0), this is because CW only updates the examples when classification reaches the target benign label and the perturbation distance is lower.
Therefore, it won't perform perturbation if the reconstructed malicious data themselves are benign (0\%/0.00 for No defense for CIC2017 dataset) since distance=0 is already the lowest case, or if the adversarial examples can never reach benign class (100\%/0.00 for FedDef when $\alpha=0.5$ for Mirai and CIC2017 datasets).

\textbf{Kitsune Analysis.}
We present model accuracy and the average anomaly score for Kitsune model in TABLE~\ref{white-kitsune-acc}.
Among the 100 perturbed reconstructed traffic data, those with scores lower than the model threshold are considered to successfully evade Kitsune.
Therefore, higher anomaly score and accuracy means better defense performance.

\begin{table*}[htbp]
\caption{White-box adversarial attack against Kitsune model. We present as (model accuracy)/(anomaly score) as below, higher is better from defenders' view.}
\label{white-kitsune-acc}
\resizebox{\linewidth}{!}
{\begin{tabular}{|c|cl|c|ccc|c|c|c|c|}
\hline
\multirow{2}{*}{Datasets}    & \multicolumn{2}{c|}{\multirow{2}{*}{\diagbox[]{Attack}{Defense}}}       & \multirow{2}{*}{No Defense} & \multicolumn{3}{c|}{FedDef}                              & \multirow{2}{*}{Soteria} & \multirow{2}{*}{GP} & \multirow{2}{*}{DP} & \multirow{2}{*}{Instanhide} \\ \cline{5-7}
                             & \multicolumn{2}{c|}{}                              &                             & \multicolumn{1}{c|}{$\alpha=1$} & \multicolumn{1}{c|}{$\alpha=0.5$} & $\alpha=0.25$ &                          &                     &                     &                             \\ \hline
\multirow{7}{*}{KDD99}       & \multicolumn{2}{c|}{FGSM}                          & 19\%/0.21                           & \multicolumn{1}{c|}{\textcolor{green}{100\%}/0.88} & \multicolumn{1}{c|}{100\%/0.97}   & 100\%/0.94    & 17\%/0.21                        & \textcolor{red}{0\%}/0.17                   & 44\%/0.31                   & 97\%/0.61                           \\ \cline{2-11} 
                             & \multicolumn{1}{c|}{\multirow{3}{*}{CW}}  & $c=1e-2$   & 25\%/0.27                           & \multicolumn{1}{c|}{\textcolor{green}{100\%}/1.06} & \multicolumn{1}{c|}{100\%/1.16}   & 100\%/1.13    & 27\%/0.26                        & \textcolor{red}{24\%}/0.24                   & 81\%/0.42                   & 100\%/0.76                           \\ 
                             & \multicolumn{1}{c|}{}                     & $c=1e-3$   & 19\%/0.26                           & \multicolumn{1}{c|}{\textcolor{green}{100\%}/1.06} & \multicolumn{1}{c|}{100\%/1.15}   & 100\%/1.13    & 8\%/0.25                        & \textcolor{red}{6\%}/0.23                   & 42\%/0.38                   & 95\%/0.75                           \\  
                             & \multicolumn{1}{c|}{}                     & $c=1e-4$   & \textcolor{red}{0\%}/0.24                           & \multicolumn{1}{c|}{\textcolor{green}{100\%}/1.06} & \multicolumn{1}{c|}{100\%/1.15}   & 100\%/1.13    & 0\%/0.25                        & 4\%/0.23                   & 19\%/0.34                   & 85\%/0.73                           \\ \cline{2-11} 
                             & \multicolumn{1}{c|}{\multirow{3}{*}{PGD}} & $\epsilon=10/255$ & 25\%/0.25                           & \multicolumn{1}{c|}{\textcolor{green}{100\%}/1.01} & \multicolumn{1}{c|}{100\%/1.11}   & 100\%/1.08    & 26\%/0.25                        & \textcolor{red}{24\%}/0.22                   & 81\%/0.39                   & 100\%/0.74                           \\  
                             & \multicolumn{1}{c|}{}                     & $\epsilon=40/255$ & 16\%/0.21                           & \multicolumn{1}{c|}{\textcolor{green}{100\%}/0.87} & \multicolumn{1}{c|}{100\%/0.96}   & 100\%/0.94    & 14\%/0.21                        & \textcolor{red}{0\%}/0.16                   & 42\%/0.30                   & 97\%/0.61                           \\  
                             & \multicolumn{1}{c|}{}                     & $\epsilon=80/255$ & \textcolor{red}{0\%}/0.16                           & \multicolumn{1}{c|}{\textcolor{green}{100\%}/0.70} & \multicolumn{1}{c|}{100\%/0.77}   & 100\%/0.75    & 0\%/0.16                        & 0\%/0.10                   & 16\%/0.21                   & 85\%/0.48                           \\ \hline
\multirow{7}{*}{Mirai}       & \multicolumn{2}{c|}{FGSM}                          & 8\%/0.32                           & \multicolumn{1}{c|}{\textcolor{green}{100\%}/0.60} & \multicolumn{1}{c|}{100\%/0.65}   & 100\%/0.63    & \textcolor{red}{1\%}/0.31                        & 47\%/0.34                   & 89\%/0.41                   & 79\%/0.41                           \\ \cline{2-11} 
                             & \multicolumn{1}{c|}{\multirow{3}{*}{CW}}  & $c=1e-2$   & \textcolor{red}{67\%}/0.39                           & \multicolumn{1}{c|}{\textcolor{green}{100\%}/0.77} & \multicolumn{1}{c|}{100\%/0.82}   & 100\%/0.81    & 86\%/0.40                        & 97\%/0.47                   & 100\%/0.57                   & 100\%/0.57                           \\ 
                             & \multicolumn{1}{c|}{}                     & $c=1e-3$   & \textcolor{red}{29\%}/0.36                           & \multicolumn{1}{c|}{\textcolor{green}{100\%}/0.77} & \multicolumn{1}{c|}{100\%/0.82}   & 100\%/0.81    & 68\%/0.39                        & 59\%/0.43                   & 100\%/0.57                   & 98\%/0.57                           \\  
                             & \multicolumn{1}{c|}{}                     & $c=1e-4$   & \textcolor{red}{17\%}/0.36                           & \multicolumn{1}{c|}{\textcolor{green}{100\%}/0.77} & \multicolumn{1}{c|}{100\%/0.82}   & 100\%/0.81    & 34\%/0.37                        & 18\%/0.36                   & 75\%/0.52                   & 25\%/0.42                           \\ \cline{2-11} 
                             & \multicolumn{1}{c|}{\multirow{3}{*}{PGD}} & $\epsilon=10/255$ & \textcolor{red}{90\%}/0.38                           & \multicolumn{1}{c|}{\textcolor{green}{100\%}/0.72} & \multicolumn{1}{c|}{100\%/0.78}   & 100\%/0.76    & 93\%/0.37                       & 93\%/0.43                   & 100\%/0.52                   & 100\%/0.53                           \\  
                             & \multicolumn{1}{c|}{}                     & $\epsilon=40/255$ & 4\%/0.31                           & \multicolumn{1}{c|}{\textcolor{green}{100\%}/0.60} & \multicolumn{1}{c|}{100\%/0.65}   & 100\%/0.63    & \textcolor{red}{0\%}/0.31                        & 44\%/0.33                   & 84\%/0.41                   & 75\%/0.41                           \\  
                             & \multicolumn{1}{c|}{}                     & $\epsilon=80/255$ & \textcolor{red}{0\%}/0.23                           & \multicolumn{1}{c|}{\textcolor{green}{100\%}/0.44} & \multicolumn{1}{c|}{100\%/0.48}   & 100\%/0.46    & 0\%/0.23                        & 0\%/0.22                   & 9\%/0.28                   & 14\%/0.27                           \\ \hline
\multirow{7}{*}{CIC-IDS2017} & \multicolumn{2}{c|}{FGSM}                          & \textcolor{red}{100\%}/0.19                           & \multicolumn{1}{c|}{\textcolor{green}{100\%}/1.03} & \multicolumn{1}{c|}{100\%/1.06}   & 100\%/1.14    & 100\%/0.19                        & 100\%/0.26                   & 100\%/0.60                   & 100\%/0.58                           \\ \cline{2-11} 
                             & \multicolumn{1}{c|}{\multirow{3}{*}{CW}}  & $c=1e-2$   & \textcolor{red}{100\%}/0.26                           & \multicolumn{1}{c|}{\textcolor{green}{100\%}/1.24} & \multicolumn{1}{c|}{100\%/1.28}   & 100\%/1.40    & 100\%/0.28                        & 100\%/0.41                   & 100\%/0.77                   & 100\%/0.77                           \\ 
                             & \multicolumn{1}{c|}{}                     & $c=1e-3$   & \textcolor{red}{100\%}/0.26                           & \multicolumn{1}{c|}{\textcolor{green}{100\%}/1.24} & \multicolumn{1}{c|}{100\%/1.28}   & 100\%/1.37    & 100\%/0.28                        & 100\%/0.41                   & 100\%/0.77                   & 100\%/0.77                           \\  
                             & \multicolumn{1}{c|}{}                     & $c=1e-4$   & 78\%/0.23                           & \multicolumn{1}{c|}{\textcolor{green}{100\%}/1.24} & \multicolumn{1}{c|}{100\%/1.28}   & 100\%/1.37    & 82\%/0.26                        & \textcolor{red}{49\%}/0.27                   & 100\%/0.77                   & 100\%/0.77                           \\ \cline{2-11} 
                             & \multicolumn{1}{c|}{\multirow{3}{*}{PGD}} & $\epsilon=10/255$ & \textcolor{red}{100\%}/0.23                           & \multicolumn{1}{c|}{\textcolor{green}{100\%}/1.19} & \multicolumn{1}{c|}{100\%/1.22}   & 100\%/1.31    & 100\%/0.25                        & 100\%/0.36                   & 100\%/0.72                   & 100\%/0.72                           \\  
                             & \multicolumn{1}{c|}{}                     & $\epsilon=40/255$ & \textcolor{red}{100\%}/0.17                           & \multicolumn{1}{c|}{\textcolor{green}{100\%}/1.03} & \multicolumn{1}{c|}{100\%/1.05}   & 100\%/1.14    & 100\%/0.18                       & 100\%/0.25                   & 100\%/0.59                   & 100\%/0.57                           \\  
                             & \multicolumn{1}{c|}{}                     & $\epsilon=80/255$ & 91\%/0.12                           & \multicolumn{1}{c|}{\textcolor{green}{100\%}/0.82} & \multicolumn{1}{c|}{100\%/0.84}   & 100\%/0.91    & 90\%/0.11                        & \textcolor{red}{78\%}/0.14                   & 100\%/0.44                   & 100\%/0.40                           \\ \hline
\multirow{7}{*}{UNSW-NB15}   & \multicolumn{2}{c|}{FGSM}                          & \textcolor{red}{0\%}/0.08                           & \multicolumn{1}{c|}{\textcolor{green}{100\%}/0.64} & \multicolumn{1}{c|}{100\%/0.70}   & 100\%/0.73    & 0\%/0.08                        & 0\%/0.07                   & 10\%/0.22                   & 40\%/0.27                           \\ \cline{2-11} 
                             & \multicolumn{1}{c|}{\multirow{3}{*}{CW}}  & $c=1e-2$   & \textcolor{red}{0\%}/0.12                           & \multicolumn{1}{c|}{\textcolor{green}{100\%}/0.80} & \multicolumn{1}{c|}{100\%/0.87}   & 100\%/0.90    & 0\%/0.11                        & 0\%/0.08                   & 49\%/0.32                   & 82\%/0.40                           \\ 
                             & \multicolumn{1}{c|}{}                     & $c=1e-3$   & \textcolor{red}{0\%}/0.12                           & \multicolumn{1}{c|}{\textcolor{green}{100\%}/0.80} & \multicolumn{1}{c|}{100\%/0.87}   & 100\%/0.90    & 0\%/0.11                        & 0\%/0.08                   & 12\%/0.29                   & 33\%/0.34                           \\  
                             & \multicolumn{1}{c|}{}                     & $c=1e-4$   & \textcolor{red}{0\%}/0.12                           & \multicolumn{1}{c|}{\textcolor{green}{100\%}/0.80} & \multicolumn{1}{c|}{100\%/0.87}   & 100\%/0.90    & 0\%/0.11                        & 0\%/0.08                   & 2\%/0.28                   & 10\%/0.30                           \\ \cline{2-11} 
                             & \multicolumn{1}{c|}{\multirow{3}{*}{PGD}} & $\epsilon=10/255$ & \textcolor{red}{0\%}/0.10                           & \multicolumn{1}{c|}{\textcolor{green}{100\%}/0.76} & \multicolumn{1}{c|}{100\%/0.82}   & 100\%/0.86    & 0\%/0.09                        & 0\%/0.05                   & 48\%/0.29                   & 82\%/0.36                           \\  
                             & \multicolumn{1}{c|}{}                     & $\epsilon=40/255$ & \textcolor{red}{0\%}/0.06                           & \multicolumn{1}{c|}{\textcolor{green}{100\%}/0.64} & \multicolumn{1}{c|}{100\%/0.70}   & 100\%/0.73    & 0\%/0.06                        & 0\%/0.02                   & 8\%/0.21                   & 36\%/0.27                           \\  
                             & \multicolumn{1}{c|}{}                     & $\epsilon=80/255$ & \textcolor{red}{0\%}/0.03                           & \multicolumn{1}{c|}{\textcolor{green}{100\%}/0.49} & \multicolumn{1}{c|}{100\%/0.54}   & 100\%/0.57    & 0\%/0.03                        & 0\%/0.02                   & 1\%/0.13                   & 1\%/0.17                           \\ \hline
\end{tabular}}
\end{table*}

\begin{table*}[htbp]
\caption{Original anomaly score of the reconstructed malicious data against Kitsune model.}
\label{white-kitsune-score}
\resizebox{\linewidth}{!}
{\begin{tabular}{|c|c|c|ccc|c|c|c|c|}
\hline
\multirow{2}{*}{\diagbox[]{Dataset}{Defense}} & \multirow{2}{*}{Threshold} & \multirow{2}{*}{No Defense} & \multicolumn{3}{c|}{FedDef}& \multirow{2}{*}{Soteria} & \multirow{2}{*}{GP} & \multirow{2}{*}{DP} & \multirow{2}{*}{Instahide} \\ \cline{4-6}
& & & \multicolumn{1}{c|}{$\alpha=1$} & \multicolumn{1}{c|}{$\alpha=0.5$} & $\alpha=0.25$ & & & & \\ \hline 
KDD99 & 0.30 & 0.27 & \multicolumn{1}{c|}{1.06} & \multicolumn{1}{c|}{1.16} & 1.13 & 0.27 & 0.25 & 0.43 & 0.76 \\ 
Mirai & 0.35 & 0.41 & \multicolumn{1}{c|}{0.77} & \multicolumn{1}{c|}{0.83} & 0.81 & 0.40 & 0.47 & 0.57 & 0.57 \\ 
CIC-IDS2017 & 0.08 & 0.26 & \multicolumn{1}{c|}{1.24} & \multicolumn{1}{c|}{1.28} & 1.37 & 0.28 & 0.41 & 0.77 & 0.77 \\ 
UNSW-NB15 & 0.30 & 0.12 & \multicolumn{1}{c|}{0.80} & \multicolumn{1}{c|}{0.87} & 0.90 & 0.11 & 0.08 & 0.32 & 0.40 \\ \hline
\end{tabular}}
\end{table*}

For better comparison, we also present the anomaly score for each defense before the adversarial attack in TABLE~\ref{white-kitsune-score}.
From defenders' view, FedDef still outperforms all the baselines in terms of both accuracy and score.
We can find that the anomaly score for FedDef before the attack is already high enough (0.77-1.24 when $\alpha=1$) and thus model accuracy is always 100\%.
In contrast, for KDD99 and UNSW datasets, the original scores for no defense (0.27/0.12), Soteria (0.27/0.11) and GP (0.25/0.08) are low enough to evade Kitsune, let alone the adversarial score.
While DP and Instahide present relatively higher scores, yet adversarial perturbation can still render some scores lower than the threshold, resulting in 16\%-81\% and 85\%-100\% accuracy for KDD99 respectively.
This corresponds with the privacy score in Fig. \ref{reconstruction_score_comparison}. In other words, FedDef outperforms DP and Instahide, which again outperform GP and Soteria.

However,
the threshold for CIC2017 is so low (0.08). Even PGD attack can hardly render the adversarial examples successful.
PGD with $\epsilon=\frac{80}{255}$ perturbation range can only achieve 91\% for no defense.
Yet we can still come to similar conclusion as DNN model analysis for other three datasets for each defense.

From attackers' view, PGD attack still works better than FGSM and CW.
Specifically, given the same $\epsilon=\frac{40}{255}$, PGD can achieve lower score and accuracy (e.g., 4\%/0.31 and 8\%/0.32 for Mirai dataset for no defense).
While CW with lower distance constraint $c$ can also reach better evasion rate, PGD generally works better if we correspond $c$ and $\epsilon$ in order (e.g., 0\%/0.16-25\%/0.25 and 0\%/0.24-25\%/0.27 for KDD99 for no defense).

Note that CW attack also requires successful evasion (DNN's target classification or scores lower than the threshold) to actually update the adversarial example, therefore, it also cannot perturb FedDef's reconstructed data due to our better privacy protection (the same anomaly score as the original).
While PGD attack can always set higher $\epsilon$ to ensure successful evasion (at higher perturbation price), which is more feasible than CW since constraint $c$ has a limit 0.

\subsubsection{Conclusion}
We briefly conclude our privacy analysis.
In general, our defense outperforms all current defenses with high privacy score and low evasion rate even with strong model performance guarantee ($\epsilon=0$) for single or multiple samples, in both training stages, against either privacy attack, and the following black-box and white-box adversarial attacks.

\begin{table*}[htbp]
\caption{Accuracy comparison on KDD99 datasets for different parameters with baseline 0.996, higher is better.}
\label{ablation_acc}
\resizebox{\linewidth}{!}
{\begin{tabular}{|c|c|c|c|c|c|c|c|c|c|c|}
\hline
\diagbox[]{$def\_lr$}{$def\_ep$}    & 10    & 20    & 30    & 40    & 50    & 60    & 70    & 80    & 90    & 100   \\ \hline
3e-2 & 0.442$\pm$0.186 & 0.534$\pm$0.087 & 0.917$\pm$0.040 & 0.965$\pm$0.015 & 0.977$\pm$0.003 & 0.983$\pm$0.002 & 0.985$\pm$0.002 & 0.987$\pm$0.002 & 0.987$\pm$0.001 & 0.987$\pm$0.001 \\ 
8e-2 & 0.497$\pm$0.193 & 0.968$\pm$0.003 & 0.980$\pm$0.003 & 0.986$\pm$0.002 & 0.987$\pm$0.002 & 0.989$\pm$0.001 & 0.989$\pm$0.001 & 0.988$\pm$0.001 & 0.990$\pm$0.002 & 0.989$\pm$0.001 \\ 
2e-1  & 0.967$\pm$0.006 & 0.984$\pm$0.002 & 0.984$\pm$0.002 & 0.987$\pm$0.001 & 0.987$\pm$0.001 & 0.988$\pm$0.002 & 0.990$\pm$0.001 & 0.987$\pm$0.001 & 0.990$\pm$0.001 & 0.992$\pm$0.001 \\ \hline
\end{tabular}}
\end{table*}

\subsection{Ablation Study}
\label{ablation_study}
So far, we have demonstrated our defense's model performance and privacy preserving guarantee.
Next we will study some parameter impact on the overall performance of FedDef.

\textbf{Setup 6}.
We evaluate model accuracy and privacy score (plus label accuracy) on KDD99 dataset to study the optimal parameter combination with respect to $\alpha$, $def\_lr$ and $def\_ep$ with determined $\epsilon$ and $\delta$.
Firstly, we choose our optimal $\alpha=1$ because it induces the best accuracy performance and privacy score is high enough to mitigate the adversarial attack.
We follow the same experiment setting for model performance and privacy evaluation as introduced in Section \ref{convergence_results} and \ref{privacy_results}.

\begin{figure}[!h]
\setlength{\abovecaptionskip}{0pt}
\setlength{\belowcaptionskip}{0pt}
\subfigure{
\includegraphics[width=1\linewidth]{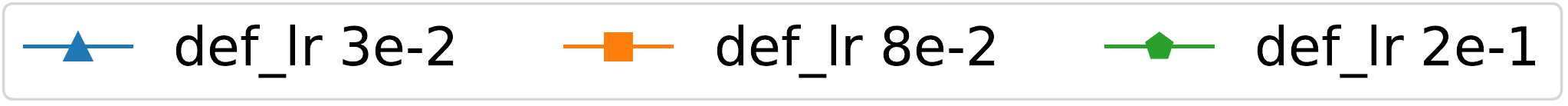}
}
\setcounter{subfigure}{0}
\subfigure[Early+Inversion]{
\includegraphics[width=0.45\linewidth]{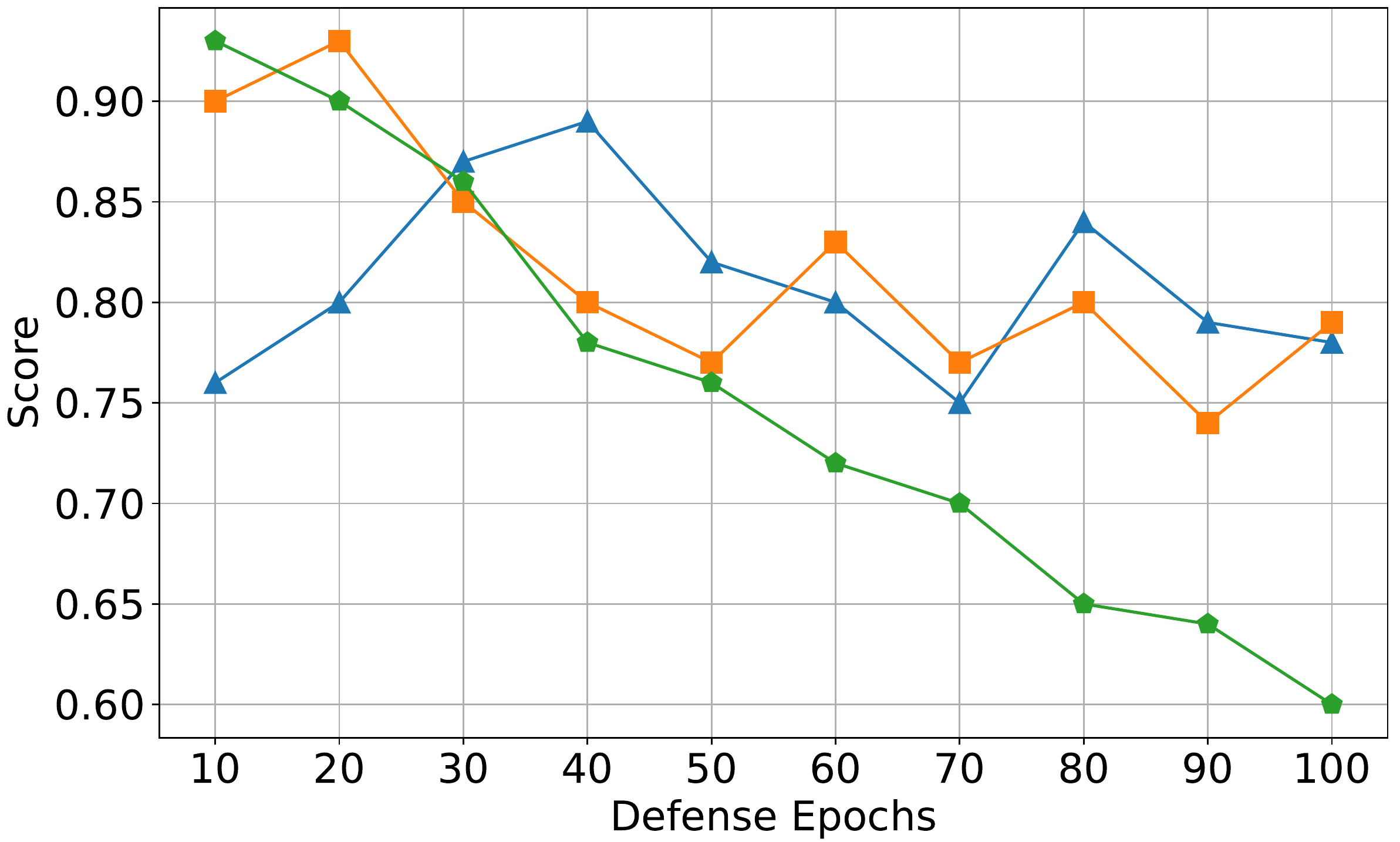}
}
\subfigure[Early+Label Accuracy]{
\includegraphics[width=0.45\linewidth]{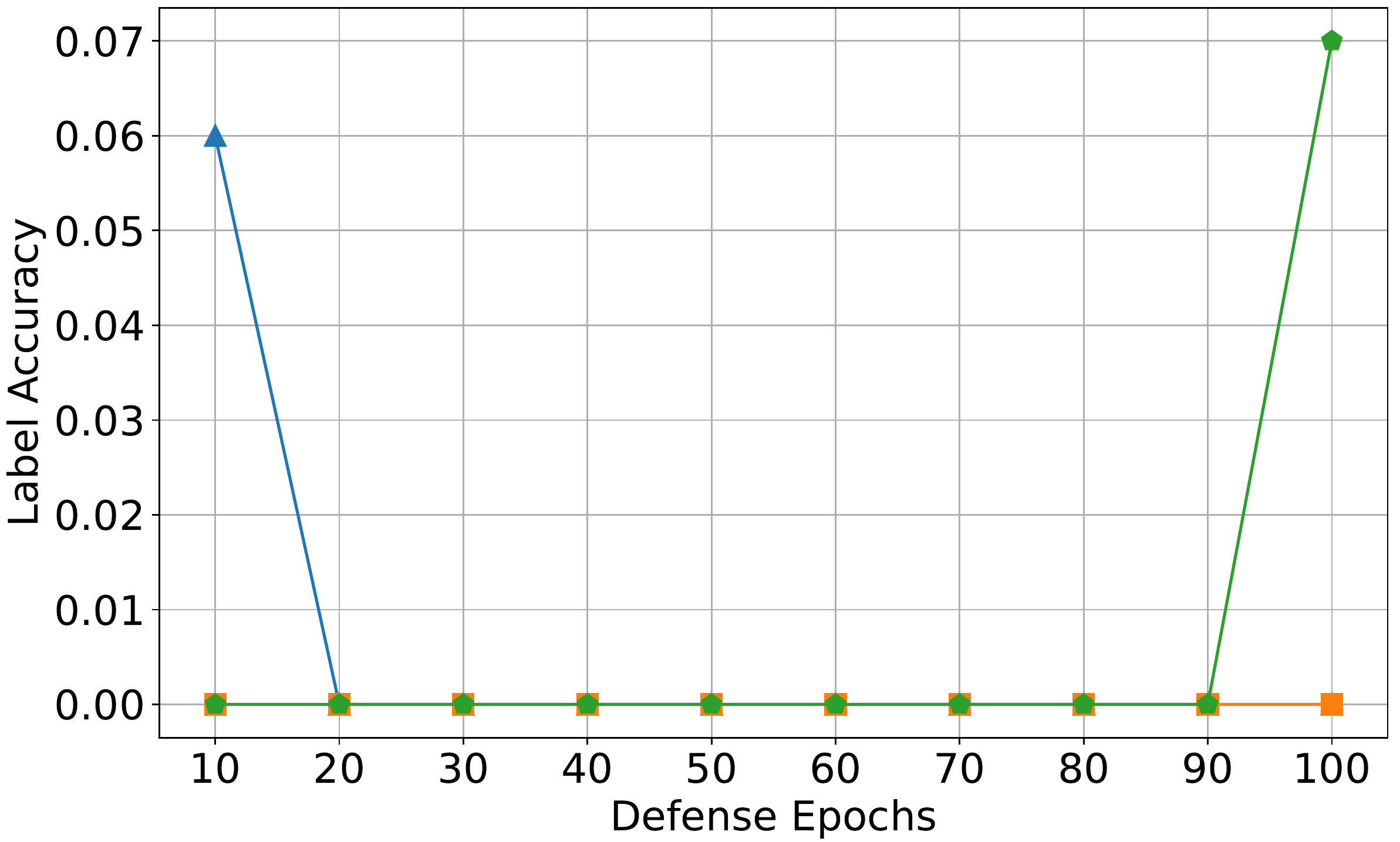}
}
\caption{Ablation study on $def\_lr$ during early training stage on KDD99 dataset with inversion attack for single sample reconstruction.}
\label{ablation_kdd}
\end{figure}

\textbf{Ablation Analysis}.
The average privacy score and accuracy results over five training times can be found in TABLE \ref{ablation_acc} and Fig. \ref{ablation_kdd}.
We can find that higher learning rate generally reduces the steps needed to fully optimize pseudo data.
Specifically, it takes $def\_lr=3e-2, 8e-2, 2e-1$ about 70, 50, 40 steps accordingly to achieve model accuracy as high as 0.987, while also achieving similar privacy score as high as 0.78.
However, when $def\_lr$ is 0.2, more optimization steps can induce privacy loss (only 0.6 score for $def\_ep=100$ and 0.07 reconstructed label accuracy) during early stage (see Fig.~\ref{ablation_kdd}) which means pseudo data is over-optimized.

Due to the above consideration, we recommend predetermined constants $\epsilon=0$, $\delta=1$, $g\_value=1e-15$, and key parameter $\alpha=1$, $def\_lr=0.2$ and $def\_ep=40$ as our optimal parameter combination, where the model accuracy is maintained and privacy score is high enough to mitigate any reconstruction attack and the following adversarial attack.

\subsection{Defense Overhead}
\label{test_time}
Finally, we conduct experiments to present defenses' computation and memory overhead.

\textbf{Setup 7}.
We first evaluate algorithm running time for all the defenses, where only one local user trains a global DNN model on KDD99 dataset with full training dataset for 100 epochs.
To further demonstrate FedDef's advantage, we also compare FedDef alone with traditional homomorphic encryption's (HE) performance in terms of computation and memory overhead.

Specifically, we adapt from an open-source github project \cite{PythonPaillier} named phe, which implements the Paillier Partially Homomorphic Encryption.
It first generates both public and private key, where local user encrypts their gradients with public key and decrypts the aggregated gradients with private key.
Such encryption and decryption can induce extra overhead during FL.
We also follow the same training setting as other defenses and record the absolute training time and memory.

For complete results, we also present the privacy evaluation overhead that includes GAN training (100 epochs) for black-box attack and PGD optimization (100 steps) for white-box attack.
For all the experiments, we repeat the process for five times and report the average results.

\begin{figure}
\setlength{\abovecaptionskip}{0pt}
\setlength{\belowcaptionskip}{0pt}
\centering
\subfigure{
\begin{minipage}{0.8\linewidth}
\includegraphics[width=1\linewidth]{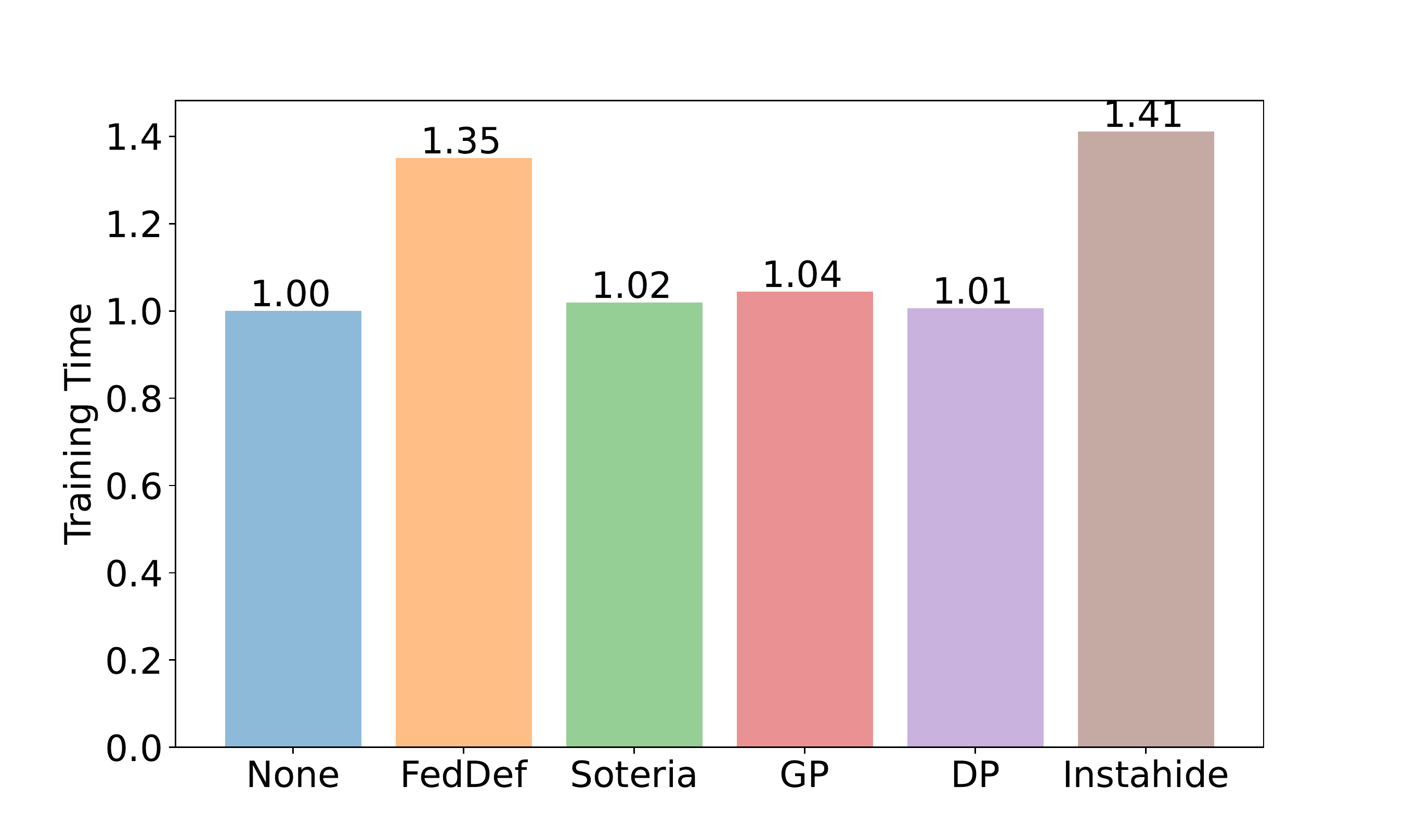}
\end{minipage}}
\caption{Relative training time comparison among baselines.}
\label{running_time}
\end{figure}

\begin{figure}
\setcounter{subfigure}{0}
\subfigure[Time Overhead]{
\begin{minipage}{0.48\linewidth}
\includegraphics[width=1\linewidth]{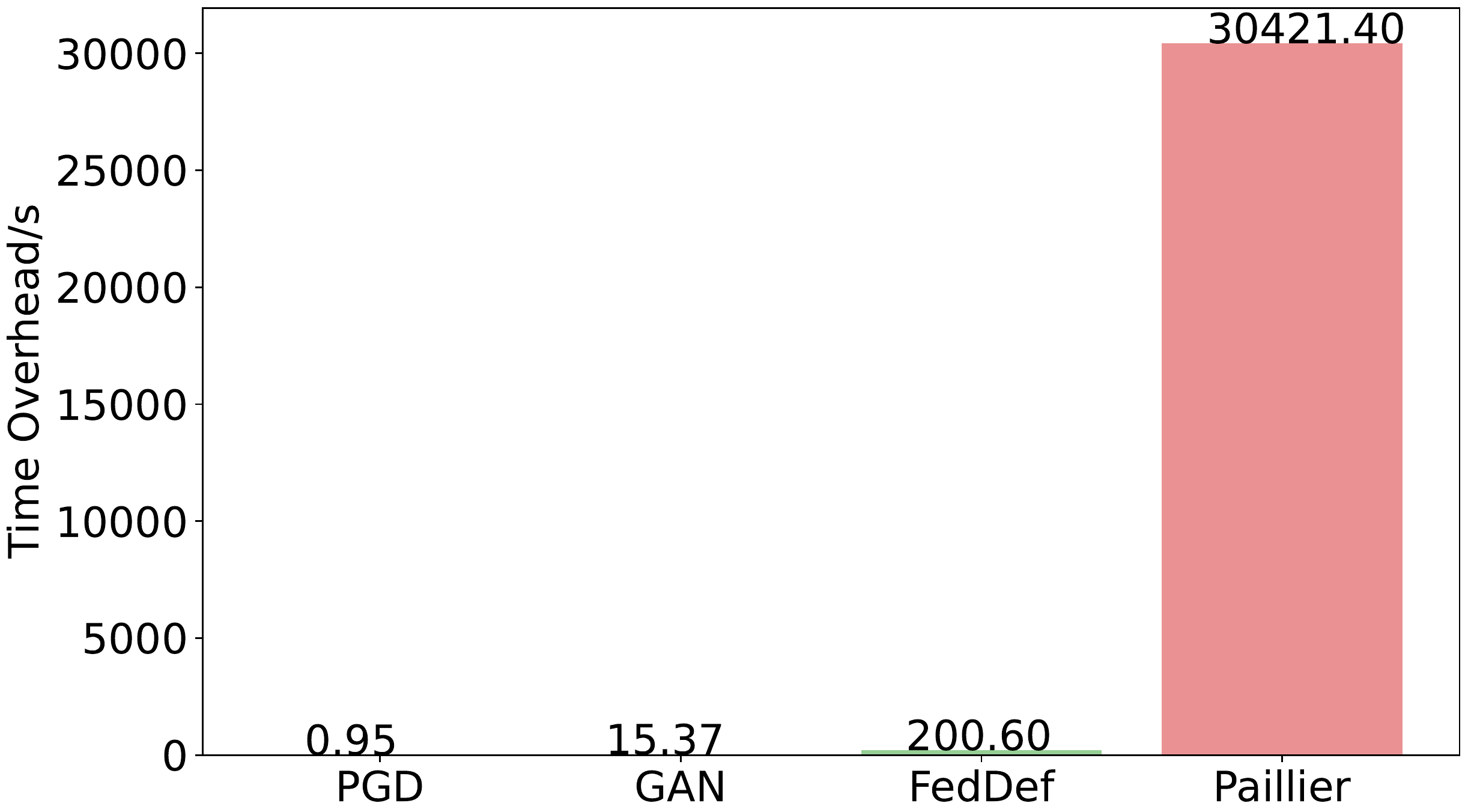}
\end{minipage}}
\subfigure[Memory Overhead]{
\begin{minipage}{0.48\linewidth}
\includegraphics[width=1\linewidth]{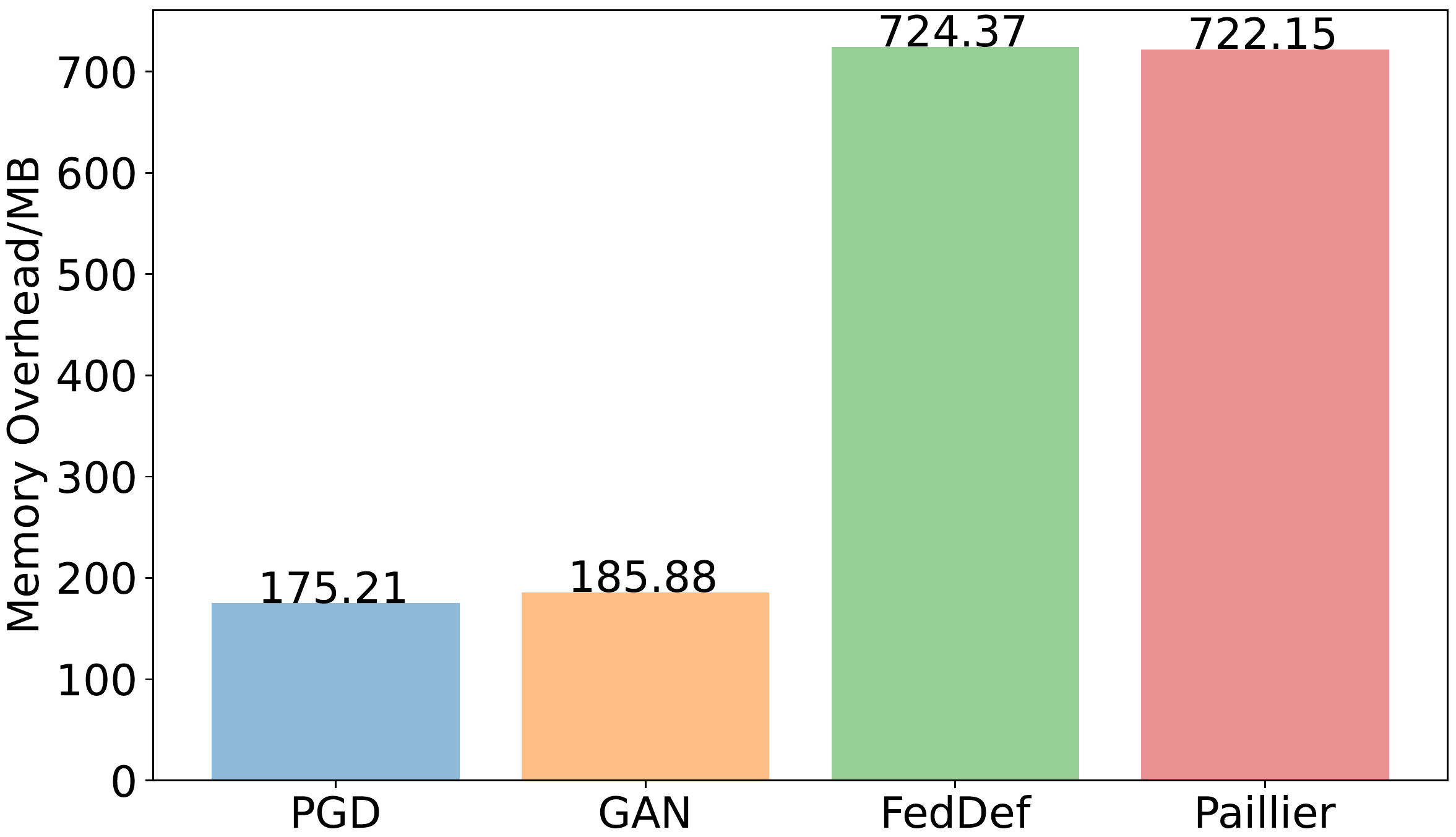}
\end{minipage}}
\caption{Absolute time and memory overhead comparison between FL (FedDef and Paillier) and adversarial attack (PGD and GAN).}
\label{overhead}
\end{figure}

\textbf{Overhead Analysis}.
Fig.~\ref{running_time} illustrates the relative training time under our optimal parameters.
We can find that our defense induces additional 0.35 time in exchange for model performance and privacy protection guarantee.
Such computation overhead is reasonable because optimization-based defense iteratively searches for optimal pseudo data while other defenses such as differential privacy directly injects noises and does not induce additional computation.

We also present Paillier's overhead along with our privacy evaluation process in Fig.~\ref{overhead}.
In general, FL process (FedDef and Paillier) can take up more time and memory than adversarial attack (PGD and GAN).
Specifically, Paillier is significantly worse and induces about 150 times extra training time ($3,0421s$) than FedDef ($200s$), while our privacy evaluation is also feasible to perform and only takes $<1s$ for white-box PGD attack and around $15s$ even for GAN-based attack.
As for memory occupation, FedDef and Paillier share similar results (722-724 MB), while PGD and GAN-based attack only take up 175-185 MB memory.

The results above demonstrate that our defense has great advantage over HE-based methods in terms of time overhead, and our privacy evaluation (white-box or GAN-based) is feasible to perform that only requires limited resources.

\section{Discussion}
\label{future_work}

\subsubsection{Additional computation overhead}
The main reason for additional training time is that our defense transforms the data for every local update with high gradient computation overhead in Eq.~(\ref{model_optimization}).
Our future work can try improving Algorithm~\ref{algorithm_feddef} by transforming private data only once or distilling the model parameters to reduce pseudo gradient optimization.

\subsubsection{Traffic space attack}
For black-box attack, we have only leveraged GAN to generate adversarial features instead of real traffic data to attack the NIDSs.
However, generating more practical traffic is beyond the scope of our work, and we can adapt current approaches like GAN+PSO \cite{han2021evaluating} to further improve the attacks and thus better evaluate the defenses.
\section{Conclusions}
\label{conclusion}
In this work, we first propose two privacy metrics specifically designed for FL-based NIDS, i.e., privacy score from reconstruction attacks and evasion rate from GAN-based adversarial attack, to derive an accurate evaluation of privacy protection and demonstrate the insufficiency of existing defenses.
To build a more robust FL-based NIDS, we further propose a novel input perturbation optimization defense strategy, i.e., FedDef, which aims to generate pseudo data to maintain model utility and constrain gradient deviation to provide privacy protection.
We also give theoretical analysis for utility and privacy guarantee with our defense.
The experimental results illustrate that our defense outperforms existing defenses and proves great privacy protection during both training stages against both attacks while also maintaining model accuracy loss within 3\%.

\ifCLASSOPTIONcaptionsoff
  \newpage
\fi

\bibliographystyle{IEEEtran}
\bibliography{Bibliography}

\begin{thebibliography}{10}
\providecommand{\url}[1]{#1}
\csname url@rmstyle\endcsname
\providecommand{\newblock}{\relax}
\providecommand{\bibinfo}[2]{#2}
\providecommand\BIBentrySTDinterwordspacing{\spaceskip=0pt\relax}
\providecommand\BIBentryALTinterwordstretchfactor{4}
\providecommand\BIBentryALTinterwordspacing{\spaceskip=\fontdimen2\font plus
\BIBentryALTinterwordstretchfactor\fontdimen3\font minus
  \fontdimen4\font\relax}
\providecommand\BIBforeignlanguage[2]{{%
\expandafter\ifx\csname l@#1\endcsname\relax
\typeout{** WARNING: IEEEtran.bst: No hyphenation pattern has been}%
\typeout{** loaded for the language `#1'. Using the pattern for}%
\typeout{** the default language instead.}%
\else
\language=\csname l@#1\endcsname
\fi
#2}}

\bibitem{zhao2022collaboration}
Y.~Zhao, K.~Xu, J.~Chen, and Q.~Tan, ``{Collaboration-Enabled Intelligent
  Internet Architecture: Opportunities and Challenges},'' \emph{IEEE Network},
  vol.~36, no.~5, pp. 98--105, 2022.

\bibitem{wagner2021united}
D.~Wagner, D.~Kopp, M.~Wichtlhuber, C.~Dietzel, O.~Hohlfeld, G.~Smaragdakis,
  and A.~Feldmann, ``United we stand: Collaborative detection and mitigation of
  amplification ddos attacks at scale,'' in \emph{Proceedings of ACM CCS},
  2021, pp. 970--987.

\bibitem{nguyen2019diot}
T.~D. Nguyen, S.~Marchal, M.~Miettinen, H.~Fereidooni, N.~Asokan, and A.-R.
  Sadeghi, ``D{\"i}ot: A federated self-learning anomaly detection system for
  iot,'' in \emph{Proceedings of IEEE ICDCS}, 2019, pp. 756--767.

\bibitem{li2020deepfed}
B.~Li, Y.~Wu, J.~Song, R.~Lu, T.~Li, and L.~Zhao, ``Deepfed: Federated deep
  learning for intrusion detection in industrial cyber--physical systems,''
  \emph{IEEE TIFS}, vol.~17, no.~8, pp. 5615--5624, 2020.

\bibitem{melis2019exploiting}
L.~Melis, C.~Song, E.~De~Cristofaro, and V.~Shmatikov, ``Exploiting unintended
  feature leakage in collaborative learning,'' in \emph{Proceedings of IEEE
  SP}, 2019, pp. 691--706.

\bibitem{fredrikson2015model}
M.~Fredrikson, S.~Jha, and T.~Ristenpart, ``Model inversion attacks that
  exploit confidence information and basic countermeasures,'' in
  \emph{Proceedings of ACM CCS}, 2015, pp. 1322--1333.

\bibitem{geiping2020inverting}
J.~Geiping, H.~Bauermeister, H.~Dr{\"o}ge, and M.~Moeller, ``Inverting
  gradients-how easy is it to break privacy in federated learning?'' in
  \emph{Proceedings of NeurIPS}, vol.~33, 2020, pp. 16\,937--16\,947.

\bibitem{wang2019beyond}
Z.~Wang, M.~Song, Z.~Zhang, Y.~Song, Q.~Wang, and H.~Qi, ``Beyond inferring
  class representatives: User-level privacy leakage from federated learning,''
  in \emph{Proceedings of IEEE INFOCOM}, 2019, pp. 2512--2520.

\bibitem{ferrag2020deep}
M.~A. Ferrag, L.~Maglaras, S.~Moschoyiannis, and H.~Janicke, ``Deep learning
  for cyber security intrusion detection: Approaches, datasets, and comparative
  study,'' \emph{Journal of Information Security and Applications}, vol.~50, p.
  102419, 2020.

\bibitem{abadi2016deep}
M.~Abadi, A.~Chu, I.~Goodfellow, H.~B. McMahan, I.~Mironov, K.~Talwar, and
  L.~Zhang, ``Deep learning with differential privacy,'' in \emph{Proceedings
  of ACM CCS}, 2016, pp. 308--318.

\bibitem{sun2021soteria}
J.~Sun, A.~Li, B.~Wang, H.~Yang, H.~Li, and Y.~Chen, ``Soteria: Provable
  defense against privacy leakage in federated learning from representation
  perspective,'' in \emph{Proceedings of IEEE/CVF CVPR}, 2021, pp. 9311--9319.

\bibitem{mirsky2018kitsune}
Y.~Mirsky, T.~Doitshman, Y.~Elovici, and A.~Shabtai, ``Kitsune: an ensemble of
  autoencoders for online network intrusion detection,''
  \emph{arXiv:1802.09089}, 2018.

\bibitem{konevcny2016federated}
J.~Kone{\v{c}}n{\`y}, H.~B. McMahan, F.~X. Yu, P.~Richt{\'a}rik, A.~T. Suresh,
  and D.~Bacon, ``Federated learning: Strategies for improving communication
  efficiency,'' \emph{arXiv:1610.05492}, 2016.

\bibitem{mcmahan2017communication}
B.~McMahan, E.~Moore, D.~Ramage, S.~Hampson, and B.~A. y~Arcas,
  ``Communication-efficient learning of deep networks from decentralized
  data,'' in \emph{Proceedings of AISTATS}, 2017, pp. 1273--1282.

\bibitem{mothukuri2021federated}
V.~Mothukuri, P.~Khare, R.~M. Parizi, S.~Pouriyeh, A.~Dehghantanha, and
  G.~Srivastava, ``Federated-learning-based anomaly detection for iot security
  attacks,'' \emph{IEEE IoTJ}, vol.~9, no.~4, pp. 2545--2554, 2021.

\bibitem{dong2022interpretable}
T.~Dong, S.~Li, H.~Qiu, and J.~Lu, ``An interpretable federated learning-based
  network intrusion detection framework,'' \emph{arXiv:2201.03134}, 2022.

\bibitem{wang2021towards}
X.~Wang, S.~Garg, H.~Lin, J.~Hu, G.~Kaddoum, M.~J. Piran, and M.~S. Hossain,
  ``Towards accurate anomaly detection in industrial internet-of-things using
  hierarchical federated learning,'' \emph{IEEE IoTJ}, 2021.

\bibitem{bohler2021secure}
J.~B{\"o}hler and F.~Kerschbaum, ``Secure multi-party computation of
  differentially private heavy hitters,'' in \emph{Proceedings of ACM CCS},
  2021, pp. 2361--2377.

\bibitem{dwork2006calibrating}
C.~Dwork, F.~McSherry, K.~Nissim, and A.~Smith, ``Calibrating noise to
  sensitivity in private data analysis,'' in \emph{Proceedings of Theory of
  cryptography conference}, 2006, pp. 265--284.

\bibitem{zhu2019deep}
L.~Zhu, Z.~Liu, and S.~Han, ``Deep leakage from gradients,'' in
  \emph{Proceedings of NeurIPS}, 2019.

\bibitem{yang2022accuracy}
X.~Yang, Y.~Feng, W.~Fang, J.~Shao, X.~Tang, S.-T. Xia, and R.~Lu, ``An
  accuracy-lossless perturbation method for defending privacy attacks in
  federated learning,'' in \emph{Proceedings of WWW}, 2022, pp. 732--742.

\bibitem{huang2020instahide}
Y.~Huang, Z.~Song, K.~Li, and S.~Arora, ``Instahide: Instance-hiding schemes
  for private distributed learning,'' in \emph{Proceedings of ICML}, 2020, pp.
  4507--4518.

\bibitem{gao2021privacy}
W.~Gao, S.~Guo, T.~Zhang, H.~Qiu, Y.~Wen, and Y.~Liu, ``Privacy-preserving
  collaborative learning with automatic transformation search,'' in
  \emph{Proceedings of IEEE/CVF CVPR}, 2021, pp. 114--123.

\bibitem{wu2022defense}
J.~Wu, M.~Hayat, M.~Zhou, and M.~Harandi, ``Defense against privacy leakage in
  federated learning,'' \emph{arXiv:2209.05724}, 2022.

\bibitem{carlini2022membership}
N.~Carlini, S.~Chien, M.~Nasr, S.~Song, A.~Terzis, and F.~Tramer, ``Membership
  inference attacks from first principles,'' in \emph{Proceedings of IEEE SP},
  2022, pp. 1897--1914.

\bibitem{mehnaz2022your}
S.~Mehnaz, S.~V. Dibbo, R.~De~Viti, E.~Kabir, B.~B. Brandenburg, S.~Mangard,
  N.~Li, E.~Bertino, M.~Backes, E.~De~Cristofaro, \emph{et~al.}, ``Are your
  sensitive attributes private? novel model inversion attribute inference
  attacks on classification models,'' in \emph{Proceedings of USENIX Security},
  2022, pp. 4579--4596.

\bibitem{zhao2020idlg}
B.~Zhao, K.~R. Mopuri, and H.~Bilen, ``idlg: Improved deep leakage from
  gradients,'' \emph{arXiv:2001.02610}, 2020.

\bibitem{boenisch2021curious}
F.~Boenisch, A.~Dziedzic, R.~Schuster, A.~S. Shamsabadi, I.~Shumailov, and
  N.~Papernot, ``When the curious abandon honesty: Federated learning is not
  private,'' \emph{arXiv:2112.02918}, 2021.

\bibitem{danner2015fully}
G.~Danner and M.~Jelasity, ``Fully distributed privacy preserving mini-batch
  gradient descent learning,'' in \emph{Proceedings of IFIP DAIS}, 2015, pp.
  30--44.

\bibitem{bonawitz2017practical}
K.~Bonawitz, V.~Ivanov, B.~Kreuter, A.~Marcedone, H.~B. McMahan, S.~Patel,
  D.~Ramage, A.~Segal, and K.~Seth, ``Practical secure aggregation for
  privacy-preserving machine learning,'' in \emph{Proceedings of ACM CCS},
  2017, pp. 1175--1191.

\bibitem{sharon2022tantra}
Y.~Sharon, D.~Berend, Y.~Liu, A.~Shabtai, and Y.~Elovici, ``Tantra:
  Timing-based adversarial network traffic reshaping attack,'' \emph{IEEE
  TIFS}, vol.~17, pp. 3225--3237, 2022.

\bibitem{nasr2021defeating}
M.~Nasr, A.~Bahramali, and A.~Houmansadr, ``Defeating dnn-based traffic
  analysis systems in real-time with blind adversarial perturbations.'' in
  \emph{Proceedings of USENIX Security}, 2021, pp. 2705--2722.

\bibitem{han2021evaluating}
D.~Han, Z.~Wang, Y.~Zhong, W.~Chen, J.~Yang, S.~Lu, X.~Shi, and X.~Yin,
  ``Evaluating and improving adversarial robustness of machine learning-based
  network intrusion detectors,'' \emph{IEEE JSAC}, vol.~39, no.~8, pp.
  2632--2647, 2021.

\bibitem{madry2017towards}
A.~Madry, A.~Makelov, L.~Schmidt, D.~Tsipras, and A.~Vladu, ``Towards deep
  learning models resistant to adversarial attacks,'' \emph{arXiv:1706.06083},
  2017.

\bibitem{moosavi2016deepfool}
S.-M. Moosavi-Dezfooli, A.~Fawzi, and P.~Frossard, ``Deepfool: a simple and
  accurate method to fool deep neural networks,'' in \emph{Proceedings of IEEE
  CVPR}, 2016, pp. 2574--2582.

\bibitem{croce2020reliable}
F.~Croce and M.~Hein, ``Reliable evaluation of adversarial robustness with an
  ensemble of diverse parameter-free attacks,'' in \emph{Proceedings of ICML},
  2020, pp. 2206--2216.

\bibitem{Li2020On}
X.~Li, K.~Huang, W.~Yang, S.~Wang, and Z.~Zhang, ``On the convergence of fedavg
  on non-iid data,'' in \emph{Proceedings of ICLR}, 2020.

\bibitem{tavallaee2009detailed}
M.~Tavallaee, E.~Bagheri, W.~Lu, and A.~A. Ghorbani, ``A detailed analysis of
  the kdd cup 99 data set,'' in \emph{Proceedings of IEEE symposium on
  computational intelligence for security and defense applications}, 2009, pp.
  1--6.

\bibitem{sharafaldin2018toward}
I.~Sharafaldin, A.~H. Lashkari, and A.~A. Ghorbani, ``Toward generating a new
  intrusion detection dataset and intrusion traffic characterization.''
  \emph{ICISSp}, vol.~1, pp. 108--116, 2018.

\bibitem{moustafa2015unsw}
N.~Moustafa and J.~Slay, ``Unsw-nb15: a comprehensive data set for network
  intrusion detection systems (unsw-nb15 network data set),'' in
  \emph{Proceedings of MilCIS}, 2015, pp. 1--6.

\bibitem{goodfellow2014explaining}
I.~J. Goodfellow, J.~Shlens, and C.~Szegedy, ``Explaining and harnessing
  adversarial examples,'' \emph{arXiv:1412.6572}, 2014.

\bibitem{carlini2017towards}
N.~Carlini and D.~Wagner, ``Towards evaluating the robustness of neural
  networks,'' in \emph{Proceedings of IEEE SP}.\hskip 1em plus 0.5em minus
  0.4em\relax Ieee, 2017, pp. 39--57.

\bibitem{kim2020torchattacks}
H.~Kim, ``Torchattacks: A pytorch repository for adversarial attacks,''
  \emph{arXiv:2010.01950}, 2020.

\bibitem{PythonPaillier}
C.~Data61, ``Python paillier library,''
  \url{https://github.com/data61/python-paillier}, 2013.

\end{thebibliography}
\vfill

\end{document}